\newtheorem{theorem}{Theorem}[section]
\newtheorem{definition}{Definition}[section]
\newtheorem{lemma}[theorem]{Lemma}
\newtheorem{proposition}[theorem]{Proposition}
\newtheorem{RULE}{Reduction Rule}
\theoremstyle{remark}
\newcommand\abs[1]{\lvert #1\rvert}
\newcommand\obn{\Omega_N}
\newcommand\obt{\Omega_T}
\newcommand\rank{\operatorname{rank}}
\newcommand\gr{\mathcal{G}}
\newcommand\cF{\mathcal{F}}
\newcommand\cB{\mathcal{B}}
\newcommand\cR{\mathcal{R}}
\newcommand\house{$F_1$}
\newcommand\gem{$F_2$}
\newcommand\domino{$F_3$}
\newcommand{\LRWD}{\textsc{LRW1-Vertex Deletion} }
\newcommand{\DEL}{LRW1-deletion set}
\newcommand{\YES}{\textsc{Yes}}
\newcommand{\NO}{\textsc{No}}
\begin{document}
\title{An FPT algorithm and a polynomial kernel for Linear Rankwidth-$1$ Vertex Deletion}

\author[1]{Mamadou Moustapha Kant\'e  %
\thanks{ E-mail address: \texttt{mamadou.kante@isima.fr}}}
\affil[1]{LIMOS, CNRS - Clermont Universit\'e, Universit\'e Blaise Pascal, France.}

\author[2]{Eun Jung Kim  %
\thanks{ E-mail address: \texttt{eunjungkim78@gmail.com}}}
\affil[2]{LAMSADE, CNRS - Universit\'e Paris Dauphine, France.}

\author[3]{O-joung Kwon\thanks{Supported by ERC Starting Grant PARAMTIGHT (No. 280152). The work was partially done while at Department of Mathematical Sciences, KAIST, and supported by 
 Basic Science Research
  Program through the National Research Foundation of Korea (NRF)
  funded by  the Ministry of Science, ICT \& Future Planning
  (2011-0011653).}%
\thanks{E-mail address: \texttt{ojoungkwon@gmail.com}}}
\affil[3]{Institute for Computer Science and Control, Hungarian Academy of Sciences, Budapest, Hungary.}

\author[4]{Christophe Paul  
\thanks{Supported by the ``Chercheur d'avenir -- Languedoc-Roussillon'' project KERNEL}
\thanks{E-mail address: \texttt{christophe.paul@lirmm.fr} \\
 An extended abstract appeared in 
  Proc. 10th International Symposium on Parameterized and Exact Computations, 2015~\cite{KanteKKP2015}.}}
\affil[4]{LIRMM, CNRS  - Universit\'e Montpellier, France.}

\date{\today}

\maketitle

\begin{abstract}
\emph{Linear rankwidth} is a linearized variant of rankwidth, introduced by Oum and Seymour [Approximating clique-width and branch-width. \newblock {\em J. Combin. Theory Ser. B}, 96(4):514--528, 2006]. Motivated from recent development on graph modification problems regarding classes of graphs of bounded treewidth or pathwidth, we study the {\sc Linear Rankwidth-$1$ Vertex Deletion} problem (shortly, {\sc LRW1-Vertex Deletion}). In the {\sc LRW1-Vertex Deletion} problem, given an $n$-vertex graph $G$ and a positive integer $k$, we want to decide whether there is a set of at most $k$ vertices whose removal turns $G$ into a graph of linear rankwidth at most $1$ and find such a vertex set if one exists. While the meta-theorem of Courcelle, Makowsky, and Rotics implies that \LRWD  can be solved in time $f(k)\cdot n^3$ for some function $f$,  it is not clear whether this problem allows a running time with a modest exponential function. 

We first establish that \LRWD  can be solved in time $8^k\cdot n^{\mathcal{O}(1)}$. The major obstacle to this end is how to handle a long induced cycle as an obstruction. To fix this issue, we define \emph{necklace graphs} and investigate their structural properties. Later, we reduce the polynomial factor by refining the trivial branching step based on a cliquewidth expression of a graph, and obtain an algorithm that runs in time $2^{\mathcal{O}(k)}\cdot n^4$. We also prove that the running time cannot be improved to $2^{o(k)}\cdot n^{\mathcal{O}(1)}$ under the Exponential Time Hypothesis assumption. Lastly, we show that the \LRWD problem admits  a polynomial kernel.
\end{abstract}

\section{Introduction}

In a parameterized problem, we are given an instance $(x,k)$, where $k$ is a secondary measurement, called as the \emph{parameter}. 
The central question in parameterized complexity is whether a parameterized problem admits an algorithm with running time $f(k)\cdot \abs{x}^{\mathcal{O}(1)}$, called a \emph{fixed parameter tractable algorithm} (shortly, an \emph{FPT algorithm}),
where $f$ is a function depending on the parameter $k$ alone and $\abs{x}$ is the input size.
A parameterized problem admitting such an algorithm is said to be \emph{fixed-parameter tractable}, or {\em FPT} in short. 
As we study a parameterized problem when its unparameterized decision version is NP-hard, the function $f$ is super-polynomial in general. 
For many natural parameterized problems, the function $f$ is overwhelming~\cite{FG04} or even non-explicit~\cite{RS2004}, especially when the algorithm is indicated by a meta-theorem. Therefore, a lot of research effort focus  on designing an FPT algorithm with affordable super-exponential part in the running time. We are especially interested in solving a parameterized problem in {\em single-exponential} FPT time, that is, in time $c^k\cdot n^{O(1)}$ for some constant $c$.

One of techniques to handle parameterized problems is the \emph{kernelization algorithm}.
A kernelization algorithm takes an instance $(x, k)$ and outputs an instance $(x', k')$ in time polynomial in $\abs{x}+k$ satisfying that
(1) $(x,k)$ is a \textsc{Yes}-instance if and only if $(x',k')$ is a \textsc{Yes}-instance,
(2) $k'\le k$, and $\abs{x'}\le g(k)$ for some function $g$.
The reduced instance is called a \emph{kernel} and the function $g$ is called the \emph{size} of the kernel.
A parameterized problem is said to admit a \emph{polynomial kernel} if there is a kernelization algorithm that reduces
the input instance into an instance with size bounded by a polynomial function $g(k)$ in $k$.

Graph modification problems are typically formulated as follows: given an input graph $G$ and a fixed set $\mathsf{O}$ of elementary operations and a graph property $\Pi$, the objective is to transform $G$ into a graph $H\in \Pi$ by applying at most $k$ operations from $\mathsf{O}$. Vertex deletion, edge deletion/addition or contraction are examples of such elementary operations. 

The graph property $\Pi$ having treewidth or pathwidth at most $w$ has received in-depth attention as many problems become tractable on graphs of small treewidth. The celebrated Courcelle's theorem~\cite{Cou90} implies that every graph property expressible in monadic second order logic of the second type ($\sf{MSO}_2$) can be verified in time $f(w)\cdot n$, when the input $n$-vertex graph has treewidth at most $w$. Furthermore, having small treewidth frequently facilitates the design of a dynamic programming algorithm whose running time is much faster than that of the all-round algorithm from the Courcelle's meta-theorem. Therefore, it is reasonable to measure how close an instance is from ``an island of tractability within an ocean of intractable problems''~\cite{GaspersS12}.

In the context of treewidth, the vertex deletion problems for the special cases of $w = 0$ and $w = 1$ correspond to the well-known {\sc Vertex Cover} and {\sc Feedback Vertex Set} problems respectively.
Generally, for fixed $w$, the  {\sc Treewidth-$w$ Vertex Deletion} can be solved in time $f(w, k)\cdot n$ implied by Courcelle's meta-theorem~\cite{Cou90}. As the function $f$ subsumed in the meta theorem is gigantic, it is natural to ask whether the exponential function in the running time can be rendered realistic. Recent endeavor pursuing this question culminated in establishing that for fixed $w$, the {\sc Treewidth-$w$ Vertex Deletion} can be solved in single-exponential FPT time~\cite{FominLMS12,KLPRRSS13}. 

As for pathwidth, \textsc{Pathwidth-$1$ Vertex Deletion} was first studied  by Philip, Raman, Villanger~\cite{PhilipRV2010}, and later Cygan, Pilipczuk, Pilipczuk, Wojtaszczyk~\cite{CyganPPW2012} showed that \textsc{Pathwidth-1 Vertex Deletion} can be solved in time $4.65^k\cdot n^{\mathcal{O}(1)}$ and it admits a quadratic kernel. Using the general method developed for \textsc{Treewidth-$w$ Vertex Deletion}~\cite{FominLMS12,KLPRRSS13}, the \textsc{Pathwidth-$w$ Vertex Deletion} problem also admits a single-exponential FPT algorithm.

\subparagraph{Linear rankwidth.} 
\emph{Rankwidth} was introduced by Oum and Seymour~\cite{OS2004} for efficiently approximating \emph{cliquewidth}. 
Compared to cliquewidth, there are some containment relations, called \emph{vertex-minors} and \emph{pivot-minors}~\cite{Oum05}, 
where the rankwidth of a graph does not increase when taking those relations.
With these relations, 
rankwidth has been intensively studied to extend  results for treewidth and graph minors~\cite{AKK2014, CourcelleK09,GanianH10,JKO2014, Kante2012, Oum05, Oum2006, OS2004}.

\emph{Linear rankwidth} is a linearized variation of rankwidth as pathwidth is the linearized variant of treewidth. While treewidth and pathwidth are small only on sparse graphs, dense graphs may have small rankwidth or linear rankwidth. For instance, complete graphs, complete bipartite graphs, and threshold graphs~\cite{ChvatalH1977} have linear rankwidth at most $1$ even though they  have unbounded treewidth. 

Linear rankwidth is deeply related to \emph{matroid pathwidth}, also known as \emph{trellis-width}, introduced by Kashyap~\cite{Kashyap08}.
Matroid pathwidth has been studied in some matroid theory literature~\cite{GeelenGW2006, HallOS2007, KoutsonasTY2014}. 
Kashyap~\cite{Kashyap08} showed that it is NP-hard to compute the matroid pathwidth of a binary matroid given with its matrix representation.
From the relation between a binary matroid and its fundamental graph due to Oum~\cite{Oum05},  
one can also deduce that it is NP-hard to compute the linear rankwidth of a graph.
Recently, 
Jeong, Kim, and Oum~\cite{JeongKO2016} showed that for fixed $k$, there is a cubic-time algorithm to test whether an input graph has linear rankwidth at most $k$ or not, and 
output such an ordering if one exists.

Ganian~\cite{Ganian10} pointed out that some NP-hard problems, such as computing pathwidth, can be solved in polynomial time on graphs of linear rankwidth at most $1$.
Generally, the meta-theorem by Courcelle, Makowsky, and Rotics~\cite{CourcelleMR00} states that for every graph property $\Pi$ expressible in monadic second order logic of the first type ($\sf{MSO}_1$) and fixed $k$, there is a cubic-time algorithm for testing whether a graph of rankwidth at most $k$ has property $\Pi$.
As rankwidth is always less than or equal to linear rankwidth, 
those problems are tractable on graphs of bounded linear rankwidth as well.

In the same context, it is natural to ask whether there is an FPT algorithm for the \textsc{(Linear) Rankwidth-$w$ Vertex Deletion} problem, that is a problem asking whether for a given graph $G$ and a positive integer $k$, $G$ contains a vertex subset of size at most $k$ whose deletion makes $G$ a graph of (linear) rankwidth at most $w$. 
It is only known that for fixed $w$, both problems are FPT from the meta-theorem on graphs of bounded rankwidth~\cite{CourcelleMR00}. 
We discuss it in more detail in Section~\ref{sec:remark}.
However, as the function of $k$ obtained from the meta-theorem is enormous, it is interesting to know whether there is a single-exponential FPT algorithm for both problems, like \textsc{Treewidth-$w$ Vertex Deletion}. 
Also, to the best of our knowledge, there was no known previous result whether the  \textsc{(Linear) Rankwidth-$w$ Vertex Deletion} problem admits a polynomial kernel for fixed integer $w$.

\subparagraph{Our contributions.}  In this paper, we show that the \textsc{Linear Rankwidth-$1$ Vertex Deletion} problem admits a single-exponential FPT algorithm and a polynomial kernel. This is a first step towards a goal of investigating whether the \textsc{(Linear) Rankwidth-$w$ Vertex Deletion} problem admits a single-exponential FPT algorithm or has a polynomial kernel. 

\smallskip
\noindent
\fbox{\parbox{0.97\textwidth}{
{\sc Linear Rankwidth-$1$ Vertex Deletion} ({\sc LRW1-Vertex Deletion}) \\
\textbf{Input :} A graph $G$, a positive integer $k$ \\
\textbf{Parameter :} $k$ \\
\textbf{Question :} Does $G$ have a vertex subset $S$ of size at most $k$ whose removal makes $G$ a graph of linear rankwidth at most one?} }

\begin{theorem}\label{thm:main1}
The \LRWD problem can be solved in time $8^k\cdot \mathcal{O}(n^{8})$, and also can be solved in time $2^{\mathcal{O}(k)}\cdot n^4$.
\end{theorem}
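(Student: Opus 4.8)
The plan is to design a bounded-search-tree (branching) algorithm driven by an obstruction characterization of the graphs of linear rankwidth at most~$1$, and then to shave the expensive steps so that the polynomial factor shrinks at the price of a slightly larger single-exponential dependence on $k$.

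First I would pin down the minimal forbidden induced subgraphs of the class of graphs of linear rankwidth at most~$1$. Since restricting a linear layout to an induced subgraph cannot increase any cut-rank, this class is hereditary, so such a collection is well defined, and it splits into a finite part — the \emph{house} (\house), the \emph{gem} (\gem), the \emph{domino} (\domino) and the remaining obstructions on $\mathcal{O}(1)$ vertices (including the short holes) — and an infinite part consisting of long induced cycles and, more generally, the \emph{necklace graphs} that the paper introduces. Given $(G,k)$, the algorithm first tests in polynomial time whether $G$ already has linear rankwidth at most~$1$ (possible for a fixed target width, e.g.\ via the cubic-time recognition of~\cite{JeongKO2016}) and returns $\emptyset$ if so. Otherwise $G$ contains some obstruction; if it contains one of bounded size, we branch on which of its $\mathcal{O}(1)$ vertices to delete, each branch decreasing $k$ by one. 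The only remaining possibility is that every obstruction present in $G$ is a long necklace.

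This necklace case is the main obstacle: a long necklace spans $\Omega(n)$ vertices, so brute-force branching over them is not affordable, and this is exactly where the structural theory of necklace graphs must be developed. The statement to establish is that a long necklace $N\subseteq G$ can interact with any small solution only through a \emph{bounded} set of vertices: one shows that if $G-S$ has linear rankwidth at most~$1$ with $|S|\le k$, then either $S$ hits one of at most $g(k)$ distinguished vertices attached to a maximal necklace-structure containing $N$, or the way $S$ meets the cyclic part of that structure is so restricted that $N$ can be safely shortened and the instance reduced. Equivalently, from a long necklace one extracts a bounded-size ``core'' obstruction to branch on. Proving that maximal necklaces carry only boundedly many attachment points, and that no solution avoiding a bounded portal set can destroy all necklace obstructions in $G$, is the technically heaviest part and is precisely why necklace graphs are defined and their structure analysed.

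With both branching rules in hand, every search-tree node has a bounded number of children — at most $8$ after optimizing the case analysis — and strictly decreases $k$, so the tree has $8^k$ leaves; at each node we spend polynomial time to locate an obstruction and, in the necklace case, to compute the relevant necklace-structure and its portal set, which is where the $\mathcal{O}(n^{8})$ factor arises, giving the bound $8^k\cdot\mathcal{O}(n^{8})$. To reach $2^{\mathcal{O}(k)}\cdot n^4$ I would refine the branching so that a single node no longer costs $\mathcal{O}(n^8)$: instead of enumerating an obstruction by brute force, we compute a cliquewidth expression (equivalently a bounded-width rankwidth decomposition) guiding where the next branching must happen and run a dynamic-programming sweep along it that either certifies linear rankwidth at most~$1$ or outputs a bounded set of vertices to branch on. This keeps the branching single-exponential in $k$ while bringing the per-node cost down to $\mathcal{O}(n^4)$, hence the second running time. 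In both algorithms the necklace analysis is the crux; once those structural lemmas are in place, everything else follows the standard branching template.
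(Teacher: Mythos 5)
Your overall skeleton — branch on obstructions of bounded size, then handle the residual long-cycle structure separately, and speed up the search via a cliquewidth expression — matches the paper, but the way you propose to handle the residual case is not what the paper does and, as written, would not close the gap.

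The paper's key structural fact (Theorem~\ref{thm:mainlrw}) is stronger and simpler than the ``portal set'' story you sketch: once $G$ has no induced obstruction of size at most $8$ (i.e.\ is $\obn$-free), \emph{every} connected component of $G$ is either a thread graph or a necklace graph, and (Lemma~\ref{lem:nltothread}) a necklace component becomes a thread graph after deleting \emph{any single} vertex on its underlying cycle. So at that point there is no further branching at all: the minimum deletion set is exactly one vertex per necklace component, found in polynomial time (Proposition~\ref{prop:nltothread2}). Your proposal instead talks about identifying a ``maximal necklace-structure'' with a bounded portal set $g(k)$ of attachment vertices, arguing a solution either hits a portal or permits shortening, and ``extracting a bounded-size core obstruction to branch on.'' None of this is needed, and you don't say how to prove it; as stated it is a speculative reduction that resembles the kernelization arguments of Section~\ref{sec:polykerthreaddel} more than the FPT algorithm. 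You should replace that paragraph with the clean dichotomy: $\obn$-free connected components are thread or necklace, and necklace components cost exactly one deletion each.

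Two smaller issues. First, the factor $8$ in $8^k$ comes directly from the maximum order of an obstruction in $\obn$; there is no ``optimized case analysis.'' Second, for the $2^{\mathcal{O}(k)}\cdot n^4$ bound you omit the essential preliminary test: if the rankwidth of $G$ exceeds $k+1$, reject immediately (since deleting a vertex decreases rankwidth by at most $1$); otherwise Oum's approximation (Theorem~\ref{thm:approxcw}) yields a cliquewidth expression with $2^{\mathcal{O}(k)}$ labels in time $2^{\mathcal{O}(k)}\cdot n^4$, and the dynamic program (Proposition~\ref{prop:inducedoncw}) is used only to find one induced copy of a fixed graph on at most $8$ vertices in time $2^{\mathcal{O}(k)}\cdot n$, not to ``certify linear rankwidth at most $1$'' or to output a portal set. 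With those corrections the argument matches the paper.
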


\begin{theorem}\label{thm:main2}
The \LRWD problem has a kernel with $\mathcal{O}(k^{33})$ vertices.
\end{theorem}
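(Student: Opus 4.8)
The plan is to follow the now-standard route for vertex-deletion problems whose target class is characterized by a finite set of forbidden vertex-minors (or, equivalently here, a finite obstruction set together with long induced cycles, handled via the necklace structure). Since graphs of linear rankwidth at most $1$ are closed under taking vertex-minors, the class has a finite obstruction set under the vertex-minor relation; concretely, apart from finitely many small graphs such as the house, the gem and the domino ($F_1,F_2,F_3$), the only ``large'' obstructions behave like long induced cycles, which the earlier sections repackage as \emph{necklace graphs}. First I would run the FPT algorithm of Theorem~\ref{thm:main1} (or a cheap approximation) to either conclude we have a \NO-instance or to obtain an approximate \DEL\ $X$ of size $\mathcal{O}(k)$; then $G-X$ has linear rankwidth at most $1$, so it decomposes according to the structure theory developed earlier (caterpillar/thread-like decomposition of linear-rankwidth-$1$ graphs, with the necklace description of induced cycles).

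The core of the kernelization is then a bounded-expansion / irrelevant-structure argument on $G-X$ relative to the ``boundary'' toward $X$. Each component of $G-X$ interacts with $X$ in one of boundedly many ``types'' once we track, for every vertex of $X$, the bipartite adjacency rank toward the component; since linear rankwidth $1$ means every cut of the decomposition has rank $1$, the number of distinct neighborhood-classes a component can expose to $X$ is polynomially bounded in $|X|=\mathcal{O}(k)$. The key steps, in order, are: (i) bound the number of components of $G-X$ that have nonempty boundary toward $X$ by $\mathrm{poly}(k)$, arguing that ``too many'' components with a given boundary type force either a small forbidden vertex-minor through $X$ (giving a reduction rule that deletes a vertex of $X$ or marks part of $X$ as undeletable) or an irrelevant component that can be removed; (ii) for each such component, bound its \emph{size} by $\mathrm{poly}(k)$ by applying reduction rules along its linear decomposition: a long run of the caterpillar spine / a long necklace segment that sees $X$ in a constant way is either shortenable (replace a long path/necklace segment by a constant-size gadget preserving the answer, in the spirit of the protrusion-replacement and of the quadratic kernel for \textsc{Pathwidth-$1$ Vertex Deletion}~\cite{CyganPPW2012}) or yields a long induced cycle / large necklace that must be hit, letting us branch or reduce; (iii) bound the number and size of components with \emph{empty} boundary toward $X$ — these lie entirely in the target class and, beyond a constant number of ``shapes'', are irrelevant and can be deleted outright. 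Combining (i)–(iii) bounds $|V(G-X)|$, hence $|V(G)|$, by a fixed polynomial in $k$, and bookkeeping the exponents from the three nested polynomial bounds (number of boundary types $\sim k^{O(1)}$, components per type, and vertices per component, together with the necklace parameters) yields the stated $\mathcal{O}(k^{33})$ bound.

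The main obstacle — and the place where the bulk of the technical work goes — is step (ii) for the necklace part: unlike \textsc{Pathwidth-$1$ Vertex Deletion}, where the ``long obstruction'' is simply a long induced cycle and shortening a path is transparent, here the long obstructions are necklaces, whose pivot-minor/vertex-minor behavior under deletions in $X$ is subtler, and one must show that a sufficiently long necklace segment interacting uniformly with $X$ can be replaced by a bounded gadget without changing whether $k$ deletions suffice. This requires a replacement lemma proved via the algebraic (rank-$1$ cut) description of linear rankwidth $1$ together with the necklace structural lemmas from the earlier sections, and careful attention to how the replacement interacts with the boundary toward $X$ and with the finite set of small obstructions $F_1,F_2,F_3$ and the remaining sporadic ones. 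A secondary difficulty is making the reduction rules in step (i) safe: one must certify that ``many components of the same type'' genuinely forces an obstruction routed through a specific vertex of $X$, which again leans on the vertex-minor characterization and a sunflower-type counting argument over the polynomially many boundary types.
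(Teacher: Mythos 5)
Your proposal takes a genuinely different route from the paper, and it contains a gap that would be hard to close as written.

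The paper does \emph{not} start from an $\mathcal{O}(k)$-size approximate modulator. Instead it applies the Sunflower lemma (following Fomin--Saurabh--Villanger) to the family of all vertex subsets inducing an obstruction in $\obn$ (each of size at most $8$), obtaining a set $T$ of size $\mathcal{O}(k^8)$ with a much stronger property than ``$G\setminus T$ is a thread graph'': namely, every $S\subseteq V(G)$ of size at most $k$ is a minimal hitting set for $\obn$ in $G$ if and only if it is a minimal hitting set for $\obn$ inside $G[T]$. This means any minimal \DEL\ $S$ of size at most $k$ has $S\cap T$ already killing every small obstruction, so any vertex of $S\setminus T$ is used \emph{only} to break long induced cycles. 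That is what makes the irrelevant-vertex and block-contraction arguments (Proposition~\ref{prop:irrelevant}, Lemma~\ref{lem:reducelength}) go through with elementary marking arguments: one never has to reason about arbitrary boundary types or build a general protrusion replacer, only about whether removing a candidate vertex exposes a long induced cycle, which is localized by the necklace structure. A larger (degree-$8$ polynomial) modulator with a strong combinatorial property is traded against an $\mathcal{O}(k)$ modulator with no such property.

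The concrete gap in your version is the claim that ``the number of distinct neighborhood-classes a component of $G-X$ can expose to $X$ is polynomially bounded in $|X|$ because every cut of the linear decomposition has rank $1$.'' The linear rankwidth of $G-X$ says nothing about the rank of the bipartite adjacency matrix between $X$ and a component $C$ of $G-X$; that cut can have rank up to $\min(|X|,|C|)$, and the number of distinct neighborhoods in $X$ realized by vertices of $C$ is a priori up to $2^{|X|}$. Without the obstruction-capturing property of a set like $T$, ``bounded boundary types'' is unjustified, and the whole type-counting and protrusion-replacement plan in steps (i) and (ii) has no foundation. You would essentially have to rediscover the Sunflower-based modulator (or some equally strong structural handle) to make the replacement lemma in step (ii) safe; the algebraic rank-$1$ description alone does not give it to you. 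The remaining concerns — bounding the number of isolated-vertex components (which your step (iii) dismisses as trivially irrelevant, but which in the paper requires its own marking argument, Lemma~\ref{lem:reducecomponent1}) and deriving the specific exponent $33$ — are secondary to this structural issue.
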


We note that 
several graph classes with a certain path-like structure have been
studied for parameterized vertex deletion 
problems. Such classes include graphs of pathwidth-$1$~\cite{PhilipRV2010, CyganPPW2012},
proper interval graphs~\cite{FSV2012, VV2013}, unit interval graphs~\cite{BevernKM2010,Cao2015}, and interval graphs~\cite{CaoM2015}. A common approach in the previous work was to use the characterization of the structures obtained after removing small obstructions. We also characterize graphs excluding small obstructions for graphs of linear rankwidth at most $1$.

We investigate a new class of graphs, called \emph{necklace graphs}, which are close to graphs of linear rankwidth at most $1$. 
Briefly speaking, necklace graphs, when viewed locally, are graphs of linear rankwidth at most $1$, but they may have long induced cycles.
In Section~\ref{sec:necklacegraph}, we show that every connected graph having no obstructions of size at most $8$ for graphs of linear rankwidth at most $1$ is either a graph of linear rankwidth at most $1$ or a necklace graph (Theorem~\ref{thm:mainlrw}).
Combining a simple branching algorithm and a polynomial-time algorithm to find a minimum deleting set on necklace graphs, we obtain an FPT algorithm for \LRWD with running time $8^k\cdot \mathcal{O}(n^8)$
in the beginning of Section~\ref{sec:fptthreaddel}. 

One might ask whether the polynomial factor $n^8$ can be reduced.
This running time appears as we start with finding obstructions of size at most $8$.
Indeed, we can improve it using a dynamic programming algorithm to find an induced subgraph of fixed size in a graph of bounded cliquewidth.
If the rankwidth of a given graph is more than $k+1$, then the instance is trivially a \NO-instance because rankwidth can be decreased by at most $1$ when removing a vertex.
Using the approximation algorithm due to Oum~\cite{Oum2006}, 
we can decide whether a given graph has rankwidth at most $k+1$ and if so, outputs a rank-decomposition of width at most $3(k+1)+1=3k+4$ and
also a $(2^{3k+5}-1)$-cliquewidth expression, in time $2^{\mathcal{O}(k)}\cdot n^{4}$.
Then we develop a branching algorithm using the cliquewidth expression, and
finally achieve an FPT algorithm with running time $2^{\mathcal{O}(k)}\cdot n^{4}$.
In Section~\ref{sec:lowerbound}, we prove that the running time of our algorithms cannot be reduced to $2^{o(k)}\cdot n^{\mathcal{O}(1)}$ under a reasonable assumption.
\begin{theorem}
There is no $2^{o(k)}\cdot n^{\mathcal{O}(1)}$-time algorithm for \textsc{LRW1 Vertex Deletion}, unless Exponential Time Hypothesis (ETH) fails.
\end{theorem}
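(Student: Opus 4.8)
The plan is to establish the ETH lower bound by a polynomial-time many-one reduction from a problem already known to have no $2^{o(n)}$-time algorithm under ETH, most naturally \textsc{3-SAT} or \textsc{Vertex Cover} (equivalently the general \textsc{$d$-Hitting Set} / \textsc{Vertex Cover} framework), where the target parameter $k$ must be \emph{linear} in the size of the source instance. Since \LRWD generalizes deletion to a hereditary class whose obstruction set includes long induced cycles, the cleanest route is probably a reduction from \textsc{Vertex Cover} (or directly from \textsc{3-SAT}) that mimics the standard NP-hardness reduction for similar ``path-like'' vertex deletion problems (e.g.\ \textsc{Pathwidth-$1$ Vertex Deletion}~\cite{CyganPPW2012, PhilipRV2010}), while making sure the blow-up in the number of vertices and in $k$ is only $\mathcal{O}(n)$ rather than $\mathcal{O}(n^2)$ or worse.

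\textbf{Step 1: choose the source problem.} I would start from \textsc{3-SAT} with $n$ variables and $m$ clauses, for which ETH rules out a $2^{o(n+m)}$-time algorithm, and via the Sparsification Lemma we may assume $m = \mathcal{O}(n)$. Alternatively, reduce from \textsc{Vertex Cover} on graphs with $N$ vertices and $M$ edges, which has no $2^{o(N+M)}$-time algorithm under ETH (again $M = \mathcal{O}(N)$ after sparsification). Because our class of linear rankwidth at most $1$ contains all complete bipartite graphs and threshold graphs, a gadget encoding ``pick one of two'' can be realized very compactly using bicliques, which helps keep $k$ linear.

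\textbf{Step 2: design the gadgets.} For each variable I would build a small constant-size gadget containing a fixed obstruction (one of the finitely many forbidden induced subgraphs of size at most $8$ for linear rankwidth $\le 1$, or a short induced cycle such as $C_5$) in such a way that ``killing'' that obstruction forces the solution to delete one of two designated vertices, corresponding to setting the variable true or false. For each clause I would attach, via carefully controlled adjacencies, another constant-size obstruction gadget whose only cheap hitting sets are the ones already chosen in the incident variable gadgets when the clause is satisfied; if the clause is unsatisfied, the clause gadget would contain an extra obstruction that cannot be destroyed within the budget. The total number of vertices is then $\mathcal{O}(n+m) = \mathcal{O}(n)$ and the budget is $k = n + (\text{constant per clause}) = \mathcal{O}(n)$. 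The correctness argument splits into the routine two directions: a satisfying assignment yields a deletion set of size $k$, and any deletion set of size $\le k$ can be ``normalized'' to pick exactly one literal vertex per variable gadget, from which a satisfying assignment is read off.

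\textbf{The main obstacle} will be Step 2, specifically controlling the global structure after the partial deletion: it is not enough to locally destroy each planted obstruction, because deleting vertices can create \emph{new} long induced cycles or new small obstructions through the interaction of several gadgets, and conversely the residual graph must genuinely have linear rankwidth at most $1$, not merely avoid the size-$\le 8$ obstructions (recall from Theorem~\ref{thm:mainlrw} that avoiding them only guarantees being a necklace graph or an \LRWD{} graph, and necklace graphs need \emph{not} have linear rankwidth $1$). So the gadgets must be wired so that the residual graph, after the intended deletions, is provably a forest of ``caterpillar-like'' pieces glued along bicliques in a way that keeps the linear-rankwidth-$1$ certificate explicit; equivalently I must exhibit a linear layout of bounded cutrank for the reduced graph. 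Once the gadget adjacencies are fixed, verifying this certificate and ruling out spurious obstructions/cycles is the technical heart; the ETH conclusion then follows immediately, since a $2^{o(k)}\cdot n^{\mathcal{O}(1)}$ algorithm for \LRWD would, through this linear-parameter reduction, give a $2^{o(n)}$-time algorithm for \textsc{3-SAT}.
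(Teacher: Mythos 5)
Your strategic instinct is right -- the paper indeed reduces from \textsc{Vertex Cover}, using the Cai--Juedes ETH lower bound, and the whole point is that the reduction is parameter-linear (in fact parameter-preserving). But your plan stops exactly at ``the technical heart'' and never supplies a construction, and the construction you gesture at (variable/clause gadgets as in a 3-SAT-style reduction) is considerably more elaborate than what is needed. The paper's reduction is a two-line edge-gadget: given a \textsc{Vertex Cover} instance $(G,k)$, (i) attach a pendant $v'$ to every vertex $v$, and (ii) replace every edge $vw$ by two internally disjoint paths of length $2$ from $v$ to $w$. Each former edge $vw$ then sits inside an induced $\alpha_1$ (the $4$-cycle plus the two pendants $v',w'$), and $k$ is unchanged.

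This choice dissolves the two obstacles you flagged. You worried (correctly, in general) that deleting vertices may spawn new long induced cycles, and that avoiding the small obstructions of $\Omega_N$ only buys you a necklace graph, not linear rankwidth $\le 1$. The paper's gadget makes both worries moot: after removing any vertex cover $S$ of $G$, the surviving graph $G' \setminus S$ is a disjoint union of stars (each middle vertex of a subdivision path loses one of its two neighbours, and each uncovered vertex is a star centre with its pendant plus some dangling subdivision vertices), so linear rankwidth $\le 1$ is immediate without any appeal to the necklace characterization. For the reverse direction the paper normalizes a deletion set $S$ of $G'$ by first pushing degree-$1$ vertices onto their neighbours, then pushing subdivision vertices onto one of $\{v,w\}$, which yields a vertex cover of $G$ of size $\le k$. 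So there is a genuine gap in your proposal -- you did not exhibit a reduction and you flagged the correctness argument as open -- but the intended construction is strictly simpler than what you sketched, and the global-structure worry that looked like the hard part is bypassed entirely by choosing a gadget whose residual graph is trivially a thread graph.
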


In Section~\ref{sec:polykerthreaddel}, we obtain a polynomial kernel for the \LRWD problem. We start with hitting obstructions of size at most $8$ using the Sunflower lemma, and taking a minimum deleting set on the remaining necklace graph. The union of two sets will have size bounded by a polynomial function in $k$, and its removal makes an input graph into a graph of linear rankwidth at most $1$. 
Graphs of linear rankwidth at most $1$ can be seen as graphs obtained by connecting certain blocks, called \emph{thread blocks}, like a path (Theorem~\ref{thm:structurethread}).
The main difficulty for reducing the remaining part is to shrink a large thread block, and 
we can resolve this issue using the set obtained by the Sunflower lemma.
We remark that a similar idea was used by Fomin, Saurabh, and Villanger~\cite{FSV2012} to obtain a polynomial kernel for the {\sc Proper Interval Vertex Deletion} problem.
We conclude the paper with further discussions in Section~\ref{sec:remark}.

\section{Preliminaries}\label{sec:preliminaries}

In this paper, all graphs are finite and undirected, if not mentioned. 
For a graph $G$, we denote by $V(G)$ and $E(G)$ the vertex set and the edge set of a graph $G$, respectively.  
Let $G$ be a graph.
For $x\in V(G)$, let $N_G(x)$ denote the neighborhood of $x$. 
Let $S$ be a subset of $V(G)$. We denote by $G[S]$ the subgraph of $G$ induced on $S$ and we define $G\setminus S:=G[V(G)\setminus S]$.
For short we write $G\setminus x$ instead of $G\setminus \{x\}$ for $x\in V(G)$.  A vertex $v$ of $G$ is called a \emph{pendant vertex} if
$\abs{N_G(v)}=1$. 
The subset of vertices $\partial_{G}(S)\subseteq S$ is the set of all vertices of $S$ that have a neighbor in $V(G)\setminus S$. 

A graph $H$ is an \emph{induced subgraph} of a graph $G$ if $H=G[S]$ for some $S\subseteq V(G)$.
For a set $\mathcal{F}$ of graphs, a graph $G$ is \emph{$\mathcal{F}$-free} if $G$ has no induced subgraph isomorphic to a graph in $\mathcal{F}$. 

 A vertex $v$ of $G$ is called a {\em cut vertex} if the removal of $v$ from $G$ strictly increases the number of connected components. 
 A maximal connected subgraph of a graph without a cut vertex is called a {\em block}.
 Note than an edge can be a block.

The path on the vertex set $\{v_1, \ldots, v_n\}$ and the edge set $\{v_iv_{i+1}:1\le i\le n-1\}$ will be denoted by $v_1v_2 \cdots v_n$, and 
the cycle on the vertex set $\{v_1, \ldots, v_n\}$ and the edge set $\{v_1v_2, \ldots, v_{n-1}v_n, v_nv_1\}$ 
will be denoted by $v_1v_2 \cdots v_nv_1$.
The length of a path is defined as the number of edges in the path. For $n\ge 3$, we denote by $C_n$ the chordless cycle of length $n$. 
A graph $G$ is \emph{distance-hereditary} if for every connected induced subgraph $H$ of $G$ and $v,w\in V(H)$, 
the distance between $v$ and $w$ in $H$ is the same as the distance between $v$ and $w$ in $G$.
For instance the cycle $C=v_1v_2v_3v_4v_5v_1$ of length $5$ is not distance-hereditary, as $v_1$ and $v_3$ have distance $2$ in the graph, 
but they have distance $3$ in $C\setminus v_2$.
A \emph{star}
is a tree with a distinguished vertex adjacent to all other vertices.  A \emph{complete graph} is a graph with all possible edges.

An ordering on a finite set $S$ is a bijective mapping $\sigma:S\to \{1,\ldots, \abs{S}\}$, and we write $x<_{\sigma} y$ if $\sigma(x)<\sigma(y)$, and $\sigma^{-1}$ as the inverse
bijective mapping. 
For an $X\times Y$-matrix $M$ and $X'\subseteq X, Y'\subseteq Y$, let $M[X',Y']$ be the submatrix of $M$ whose rows and columns are indexed by $X'$ and $Y'$, respectively.

When we analyze the running time of an algorithm, we agree that $n=\abs{V(G)}$ and $m=\abs{E(G)}$ if $G$ is an input graph.

\subsection*{Linear rankwidth and thread graphs}
The \emph{adjacency matrix} of a graph $G$, which is a $(0, 1)$-matrix over the binary field, will be denoted by $A_G$.
The \emph{width} of an ordering $\sigma$ of the vertex set of a graph $G$ is 
\[\max_{1\le i\le \abs{V(G)}}\rank (A_G[\{\sigma^{-1}(1), \ldots, \sigma^{-1}(i)\}, V(G)\setminus \{\sigma^{-1}(1), \ldots, \sigma^{-1}(i)\} ]),\] 
where the rank of a matrix is computed over the binary field. The \emph{linear rankwidth} of a graph $G$ is defined as the minimum width over all orderings of $V(G)$.

Ganian~\cite{Ganian10} first characterized graphs of linear rankwidth at most $1$, and he called them \emph{thread graphs}. 
Later, Adler, Farley, and Proskurowski~\cite{AFP2013} gave an easier way to define a thread graph, using a notion of \emph{thread blocks}.
We follow the definition of thread blocks given by Adler, Farley, and Proskurowski, and provide a unified way to define classes of graphs including thread graphs.

A graph $G$ with at least $2$ vertices is called a \emph{thread block}
if there is an ordering $\sigma$ of $V(G)$ and a function $\ell:V(G)\rightarrow \{\{L\}, \{R\}, \{L,R\} \}$ satisfying that
\begin{enumerate}[(1)]
\item $\ell(\sigma^{-1}(1))=\{R\}$ and $\ell(\sigma^{-1}(\abs{V(G)}))=\{L\}$,
\item for $v, w\in V(G)$ with $v<_{\sigma} w$, $vw\in E(G)$ if and only if $R\in \ell(v)$ and $L\in \ell(w)$.
\end{enumerate}
A thread block is a \emph{canonical thread block} if it satisfies on the top of the previous two conditions and the following third condition:
\begin{enumerate}[(3)]
\item $\ell(\sigma^{-1}(2))\ne \{L\}$ if $\abs{V(G)}\neq 2$.
\end{enumerate}
The third condition implies that every canonical thread block has no pendant vertex adjacent to its first vertex $\sigma^{-1}(1)$.
It will guarantee a unique decomposition of a thread graph into thread blocks. 
We say that $\sigma$ and $\ell$ are the \emph{ordering} and \emph{labeling} of $G$ respectively, and 
say that $\sigma^{-1}(1)$ and $\sigma^{-1}(\abs{V(G)})$ are the \emph{first and last} vertices of $G$, respectively.
See Figure~\ref{fig:threadblock} for an example.

\begin{figure}
\centerline{\includegraphics[trim={4cm 11.5cm 6cm 12cm},clip,scale=0.50]{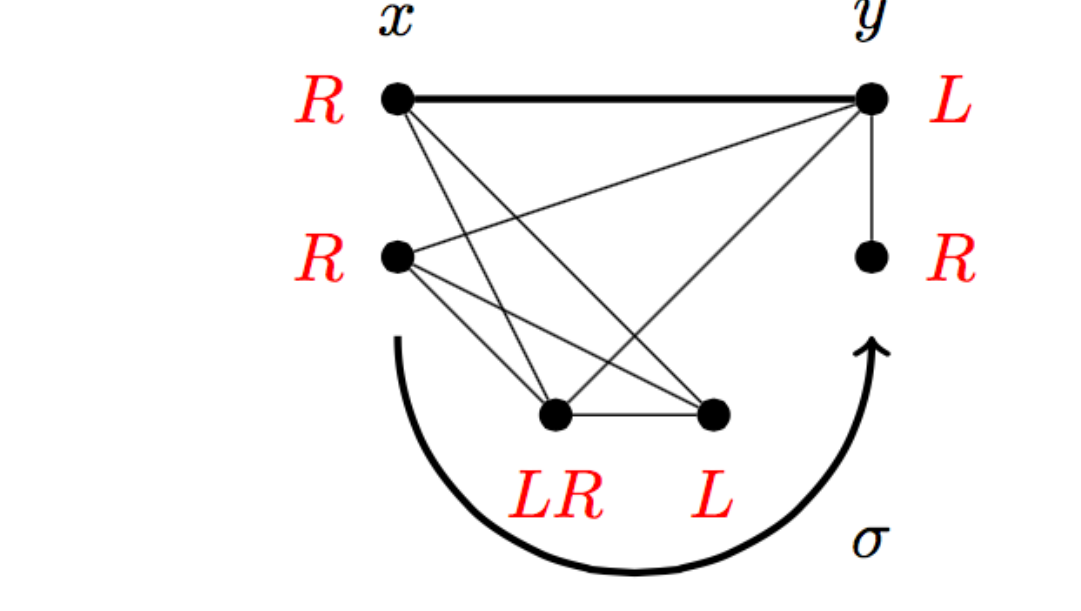}}
\caption{An example of a thread block whose first and last vertices are $x$ and $y$, respectively.} \label{fig:threadblock}
\end{figure}
\begin{figure}
\centerline{\includegraphics[clip,scale=0.7]{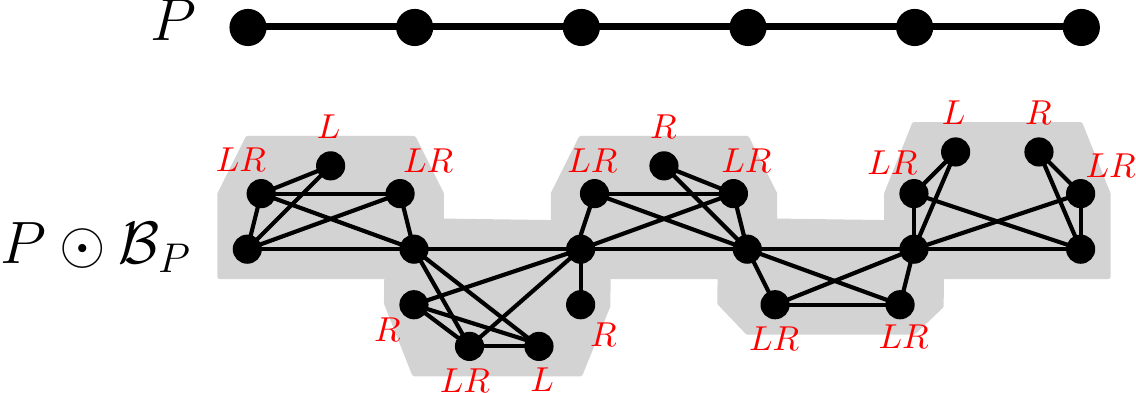}}
\caption{An example of a connected thread graph and its canonical thread decomposition.} \label{fig:threadgraph}
\end{figure}

Let $D=(V_D, A_D)$ be a digraph.
A  set $\{G_{xy}: xy\in A_D\}$ of thread blocks is said to be 
\emph{mergeable with $D$} if
\begin{enumerate}[(1)]
\item for every arc $xy$ of $A_D$, $G_{xy}$ is a thread block whose first and last vertices are $x$ and $y$, respectively, and
\item  for two distinct arcs $x_1y_1, x_2y_2$ of $A_D$, $V(G_{x_1y_1})\cap V(G_{x_2y_2})=\{x_1,y_1\}\cap \{x_2, y_2\}$.
\end{enumerate}
For a digraph $D=(V_D, A_D)$ and a set of thread blocks $\mathcal{B}_D=\{G_{xy} : xy\in A_D\}$ mergeable with $D$, we define $D\odot\mathcal{B}_D$ as the graph $G$ with the vertex set  $V(G)=\bigcup_{xy\in A_D} V(G_{xy})$ and the edge set $E(G)=\bigcup_{xy\in A_D} E(G_{xy})$.
We say that $D$ is the \emph{underlying digraph} of $G$ and that $D\odot \mathcal{B}_D$ is a \emph{thread decomposition} of $G$.
We say that  $D\odot \mathcal{B}_D$ is a \emph{canonical thread decomposition} of $G$
if every thread block of $\mathcal{B}_D$ is a canonical thread block.

\begin{definition}[Thread graph]
A connected graph $G$ is a \emph{thread graph}
if $G$ is either an one vertex graph or $G=P\odot\mathcal{B}_P$ for some directed path $P$ and some set of thread blocks $\mathcal{B}_P$ mergeable with $P$. A graph is a \emph{thread graph} if each of its connected components is a thread graph. 
\end{definition}

See Figure~\ref{fig:threadgraph} for an example of a connected thread graph, and its canonical thread decomposition.
It is not hard to observe that every connected thread graph admits a canonical thread decomposition; one can obtain a canonical thread decomposition from any thread decomposition by rearranging pendant vertices adjacent to a vertex in the underlying digraph.
In Lemma~\ref{lem:splittreetothreadblock}, we will give a polynomial-time algorithm that given a connected thread graph, outputs its canonical thread decomposition.

The following structural properties of thread graphs will be used in later sections.

\begin{lemma}\label{lem:structure}
Let $n\ge 3$ be an integer, and let $G$ be a connected thread graph such that
$G$ admits a canonical thread decomposition $G=P\odot \cB_{P}$ where 
$P=v_1 \cdots v_n$ and
$\cB_{P}:=\{B_i=B_{v_iv_{i+1}}:1\le i\le n-1\}$.
Then the following are satisfied.
\begin{enumerate}[(1)]
\item For every $v\in V(G)\setminus V(P)$, $v$ has a neighbor in $P$. 
\item Vertices $v_2, \ldots, v_{n-1}$ are cut vertices of $G$.
Moreover, every cut vertex of $G$ is contained in $P$.
\item For each $1\le i\le n-1$, $V(B_i)$ is exactly the union of the set of all pendant vertices adjacent to $v_{i+1}$ and the vertex set of the block of $G$ containing $v_{i}, v_{i+1}$.
\end{enumerate}
\end{lemma}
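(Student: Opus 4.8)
The plan is to work directly from the canonical thread decomposition $G = P \odot \cB_P$ and the definition of a thread block, exploiting that consecutive thread blocks $B_i, B_{i+1}$ meet only in the single vertex $v_{i+1}$. First I would set up notation: for each $i$, write $\sigma_i$ and $\ell_i$ for the ordering and labeling of $B_i$, so $v_i$ is its first vertex (label $\{R\}$) and $v_{i+1}$ its last vertex (label $\{L\}$), and by definition two vertices of $B_i$ are adjacent exactly when the earlier one has $R$ in its label and the later one has $L$. The overlap condition in the definition of ``mergeable'' guarantees $V(B_i)\cap V(B_j) = \{v_i,v_{i+1}\}\cap\{v_j,v_{j+1}\}$, so for $|i-j|\ge 2$ the blocks are vertex-disjoint, and for $j=i+1$ they share exactly $v_{i+1}$; in particular there are no edges of $G$ between $V(B_i)\setminus V(P)$ and $V(B_j)$ when $j\neq i$.

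For part (1): any $v\in V(G)\setminus V(P)$ lies in some $B_i$ and is neither its first nor last vertex. Its label $\ell_i(v)$ is one of $\{L\},\{R\},\{L,R\}$; in each case $v$ is adjacent within $B_i$ to $v_i$ (if $L\in\ell_i(v)$) or to $v_{i+1}$ (if $R\in\ell_i(v)$), both of which lie on $P$ — so $v$ has a neighbor on $P$. For part (2): to see each $v_{i+1}$ ($1\le i\le n-2$) is a cut vertex, observe that $G\setminus v_{i+1}$ splits into the part built from $B_1,\dots,B_i$ (minus $v_{i+1}$) and the part built from $B_{i+1},\dots,B_{n-1}$ (minus $v_{i+1}$), and these have no edges between them and are both nonempty since each contains a first vertex ($v_1$, resp. $v_{i+2}$ or, if $i=n-2$, $v_n$); here the observation above about no cross-block edges does the work. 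For the ``moreover'', I would argue the contrapositive: if $u$ is a cut vertex, it cannot be a vertex of $V(B_i)\setminus V(P)$, because every such $u$ has all its neighbors inside the single block $B_i$, whose vertex set minus $u$ stays connected (the first and last vertices $v_i,v_{i+1}$ are still present and adjacent to everything appropriate, and the rest of $G$ attaches through $v_i,v_{i+1}$) — so removing $u$ does not disconnect $G$. One should be a little careful with degenerate thread blocks (two vertices, a single edge), but those only make $u\notin V(P)$ impossible anyway.

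For part (3): fix $i$ and let $K$ be the block of $G$ (in the cut-vertex sense) containing the edge $v_iv_{i+1}$. I would show $V(K) \cup \{\text{pendant vertices adjacent to } v_{i+1}\} = V(B_i)$ by double inclusion. For $\subseteq$: a pendant vertex adjacent to $v_{i+1}$, by the structure just described and since every non-path vertex lives in exactly one $B_j$, must lie in $B_i$ or $B_{i+1}$; by the canonical condition (3), $B_{i+1}$ has no pendant vertex adjacent to its first vertex $v_{i+1}$, so it lies in $B_i$. And $V(K)\subseteq V(B_i)$ since $B_i$ is $2$-connected-up-to-its-endpoints: any $2$-connected piece through $v_iv_{i+1}$ cannot leave $B_i$ because it would have to pass through the cut vertex $v_i$ or $v_{i+1}$. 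For $\supseteq$: a vertex $v\in V(B_i)\setminus V(P)$ with $\ell_i(v)=\{L,R\}$ sits on a cycle $v_i v v_{i+1} x$ for a suitable $x$ (or directly forms a triangle with $v_i,v_{i+1}$), hence is in $K$; a vertex with label $\{L\}$ only is a pendant vertex adjacent to $v_i$, hence — wait, this needs care — here I would instead note that a label-$\{L\}$-only vertex $v$ in $B_i$ is adjacent only to $v_i$, i.e. it is a pendant adjacent to $v_i$; but $v_i = v_{(i-1)+1}$ is the last vertex of $B_{i-1}$, so by the canonical condition applied to $B_{i-1}$ such a pendant must actually be counted with $B_{i-1}$, i.e. it should not be in $B_i$ at all unless $i=1$. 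This is the point where I expect to spend the most effort: reconciling which block ``owns'' a pendant vertex adjacent to a path vertex $v_j$, and checking the boundary cases $i=1$ and $i=n-1$ and two-vertex blocks. The cleanest route is probably to first prove a small normalization claim — in a canonical thread decomposition, for $2\le j\le n-1$ every pendant vertex adjacent to $v_j$ belongs to $V(B_{j-1})$, and $v_j$ is not adjacent to any pendant vertex ``from the $B_j$ side'' — and then parts (2) and (3) follow by the connectivity bookkeeping above. I would also remark that the existence of the canonical thread decomposition being used here is the one produced in Lemma~\ref{lem:splittreetothreadblock}, so uniqueness of the decomposition is not needed, only its defining properties.
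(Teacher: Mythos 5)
Parts (1) and (2) of your proposal track the paper's argument closely and are fine in substance, although for the ``moreover'' in (2) the paper takes a shorter route: it uses (1) directly to note that for $v\notin V(P)$ every remaining vertex still has a neighbor on the connected path $P$, so $G\setminus v$ is connected; your detour through ``$B_i$'s vertex set minus $u$ stays connected'' is not needed and is harder to make precise.

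Part (3) has a genuine error. You write that a label-$\{L\}$-only vertex $v$ of $B_i$ ``is adjacent only to $v_i$, i.e.\ it is a pendant adjacent to $v_i$.'' That is not what the thread-block definition says: a vertex $v$ with $\ell(v)=\{L\}$ is adjacent to \emph{every} $w<_\sigma v$ with $R\in\ell(w)$, not just to the first vertex $v_i$. The canonical condition $\ell(\sigma^{-1}(2))\neq\{L\}$ forces $\sigma^{-1}(2)$ to carry $R$, so if $v$ sits at position $\ge 3$ it is adjacent both to $v_i$ and to $\sigma^{-1}(2)$, and if $v=\sigma^{-1}(2)$ then $\ell(v)\neq\{L\}$; either way, a label-$\{L\}$-only interior vertex of $B_i$ is never a pendant adjacent to $v_i$. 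Consequently the ``normalization claim'' you then set out to prove (deciding which block owns a pendant adjacent to $v_j$) is aimed at a non-existent problem: by canonicality no $B_i$ contains a pendant hanging off its first vertex, and the pendants adjacent to $v_{i+1}$ are exactly the $\{R\}$-labelled last vertices of $B_i$ with nothing after them. The paper instead handles a label-$\{L\}$ vertex $v$ by exhibiting the earlier $R$-labelled vertex $w$ (guaranteed by canonicality) and the $4$-cycle $v_i\,v\,w\,v_{i+1}\,v_i$, which places $v$ inside the $2$-connected block $K$ of $G$ containing $v_i v_{i+1}$; you should replace your pendant claim with this cycle argument. Your inclusion $V(K)\subseteq V(B_i)$ via the separator property from (2) is the same idea as the paper's, and your treatment of the pendant vertices adjacent to $v_{i+1}$ via the canonical condition on $B_{i+1}$ matches the paper as well.
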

\begin{proof}
(1) Let $B_i$ be the thread block containing $v$. Depending on the label of $v$ in $B_i$, 
$v$ is adjacent to at least one of $v_i, v_{i+1}$.

\medskip
\noindent
(2) The first statement came from the definition of a connected thread graph that 
for $1\le i, j\le n-1$ with $i\neq j$, $V(B_i)\cap V(B_j)=\{v_i, v_{i+1}\}\cap \{v_j, v_{j+1}\}$. 
That is, for each $2\le i\le n-1$, all paths from $v_{i-1}$ to $v_{i+1}$ must pass through $v_i$, and it implies that $v_i$ is a cut vertex of $G$.

If $v\in V(G)\setminus V(P)$, 
then by the statement (1), every vertex of $G\setminus v$ has a neighbor in $P$ and thus $G\setminus v$ is connected.
It implies that every cut vertex of $G$ is contained in $P$.

\medskip
\noindent
(3) Let us fix $i\in \{1, \ldots, n-1\}$ and let $\sigma$ and $\ell$ be the ordering and labeling of $B_i$, respectively.
Let $S\subseteq V(G)$ be the vertex set containing all pendant vertices adjacent to $v_{i+1}$ and the vertex set of the block of $G$ containing $v_{i}$ and $v_{i+1}$.
We need to prove that $S=V(B_i)$.

To prove that $S\subseteq V(B_i)$, let $v\in S\setminus \{v_i, v_{i+1}\}$. 
Observe that by construction of $G$ and the fact that $v\in S$, if $v$ is not a pendant vertex and satisfies $N_G(v)\cap V(P)\subseteq \{v_i, v_{i+1}\}$, then $v\in V(B_i)$. 
So, let us assume that $v$ is not a pendant vertex and has a neighbor in $P$ other than $v_i$ and $v_{i+1}$.
If $v$ is adjacent to some vertex $v_j$ of $P$ with $j>i+1$, then  it contradicts the fact that $v_{i+1}$ separates $v_i$ and $v_j$ in $G$ by (2).
Similarly, if $v$ is adjacent to some vertex $v_j$ of $P$ with $j<i$, then it contradicts the fact that $v_i$ separates $v_j$ and $v_{i+1}$ in $G$ by (2).
Thus, we have $N_G(v)\cap V(P)\subseteq \{v_i, v_{i+1}\}$ and $v\in V(B_i)$.
Finally by construction, as $P\odot \cB_{P}$ is a canonical thread decomposition, the pendant vertices adjacent to $v_{i+1}$ belong to $B_i$. We conclude that $S\subseteq V(B_i)$.

We verify that $V(B_i)\subseteq S$. Let $v\in V(B_i)\setminus \{v_i, v_{i+1}\}$.
If $\ell(v)=\{L, R\}$, then $v$ is contained in $S$.
If $\ell(v)=\{L\}$, then by definition of a canonical thread block, there exists $w<_{\sigma} v$ with $R\in \ell(w)$.
So, $v$ is contained in a cycle of length $4$ together with $v_i, v_{i+1}$, and thus, $v\in S$.
If $\ell(v)=\{R\}$ and there exists $v<_{\sigma} w$ with $L\in \ell(w)$, 
then similarly we have $v\in S$, 
and if there are no such a vertex $w$, then $v$ is a pendant vertex adjacent to $v_{i+1}$, that is also contained in $S$.
\end{proof}

\begin{figure}
\centerline{\includegraphics[scale=0.8]{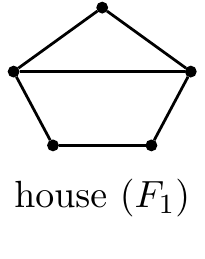} \quad
\includegraphics[scale=0.8]{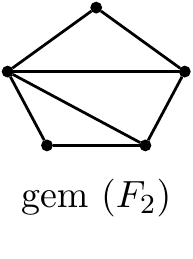} \quad
\includegraphics[scale=0.8]{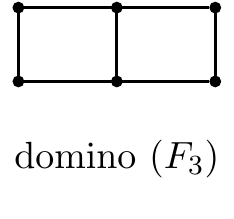} \quad
\includegraphics[scale=0.8]{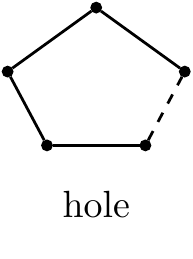}}
\caption{The induced subgraph obstructions for distance-hereditary graphs.}
\label{fig:obsdh}
\end{figure}

\begin{figure}
\centerline{\includegraphics[scale=0.7]{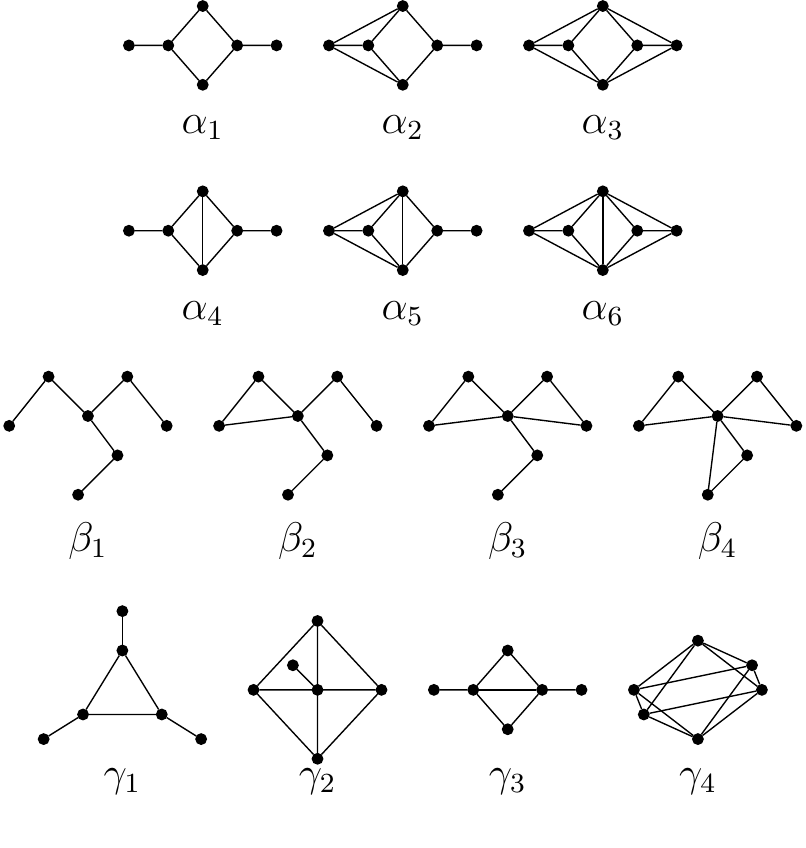}}
\vspace{-0.5cm}
\caption{The induced subgraph obstructions for graphs of linear rankwidth $1$ that are distance-hereditary.}
\label{fig:obslrw1}
\end{figure}

Let $F_1, F_2, F_3$ be the house, gem, domino graphs respectively, which are depicted in Figure~\ref{fig:obsdh}.
The induced subgraph obstructions for graphs of linear rankwidth at most $1$ consist of the set of induced subgraph obstructions for distance-hereditary graphs~\cite{BM1986}, that are \house, \gem, \domino, and induced cycles of length at least $5$, and the set of 14 induced subgraph obstructions for graphs of linear rankwidth at most $1$ that are distance-hereditary, depicted in Figure~\ref{fig:obslrw1}~\cite{AFP2013}.
We define that
\begin{itemize}
\item $\obt$ is the union of $\{\text{\house, \gem, \domino}\}\cup \{C_k : k\ge 5\}$ and the set of $14$ graphs in Figure~\ref{fig:obslrw1}.
\end{itemize}

\begin{theorem}[Ganian~\cite{Ganian10}; Adler, Farley, and Proskurowski~\cite{AFP2013}]\label{thm:structurethread}
For a graph $G$, the following are equivalent.
\begin{itemize}
\item $G$ has linear rankwidth at most $1$.
\item $G$ is a thread graph.
\item $G$ has no induced subgraph isomorphic to a graph in $\obt$.
\end{itemize}
\end{theorem}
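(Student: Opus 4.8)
\noindent\emph{Proof plan.}
This statement combines Ganian's equivalence $(1)\Leftrightarrow(2)$ with the forbidden--induced--subgraph characterisation of Adler, Farley and Proskurowski, and I would obtain all three equivalences by proving the cycle $(2)\Rightarrow(1)\Rightarrow(3)\Rightarrow(2)$. The implication $(1)\Rightarrow(3)$ is the easiest: linear rankwidth does not increase under taking induced subgraphs (restricting an ordering of $V(G)$ to $V(H)$ only deletes rows and columns from each cut matrix), so it is enough to check that every graph in $\obt$ has linear rankwidth at least $2$. The house, the gem, the domino, and the cycles $C_k$ with $k\ge 5$ are not distance-hereditary, hence have rankwidth at least $2$~\cite{Oum05}, hence linear rankwidth at least $2$; and for the $14$ graphs of Figure~\ref{fig:obslrw1}, which are finite, a direct computation shows that each of them has linear rankwidth exactly $2$.

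For $(2)\Rightarrow(1)$, let $G$ be a connected thread graph with a canonical thread decomposition $G=P\odot\cB_P$, where $P=v_1\cdots v_n$ and $\cB_P=\{B_i=B_{v_iv_{i+1}}:1\le i\le n-1\}$ as in Lemma~\ref{lem:structure}; disconnected graphs reduce to this case since linear rankwidth is the maximum over connected components. I would order $V(G)$ by listing the vertices of $B_1$ in its own order, then those of $B_2$ in its own order with the already-listed vertex $v_2$ skipped, then $B_3$, and so on. Every prefix $S$ of this ordering is then a union $V(B_1)\cup\dots\cup V(B_{i-1})$ together with a prefix of the ordering of one more block $B_i$. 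By Lemma~\ref{lem:structure}(2)--(3), the vertices $v_2,\dots,v_{n-1}$ are the only cut vertices of $G$ and each $B_i$ consists of one block of $G$ together with the pendant vertices at $v_{i+1}$; it follows that no vertex of $V(B_1)\cup\dots\cup V(B_{i-1})$ other than $v_i$ has a neighbour outside $S$, and that the vertices of $B_i$ in $S$ also have all their neighbours outside $S$ inside $B_i$. Hence every edge of $G$ leaving $S$ lies inside $B_i$, and by the defining property of a thread block these edges form the complete bipartite graph between the $R$-labelled vertices of $B_i$ in $S$ and the $L$-labelled vertices of $B_i$ outside $S$; its adjacency submatrix has rank at most $1$ over $\mathrm{GF}(2)$, so the ordering has width at most $1$. (Alternatively, $(1)\Rightarrow(2)$ can be proved directly: a $(0,1)$-matrix of $\mathrm{GF}(2)$-rank at most $1$ becomes all-ones after deleting zero rows and columns, so every prefix cut of a width-$1$ ordering is a complete bipartite graph; tracking how these complete-bipartite ``windows'' change along the ordering lets one reconstruct a directed path of thread blocks together with an $\{L\},\{R\},\{L,R\}$-labelling, and moving pendant vertices makes it canonical. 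This reconstruction is the more delicate route.)

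For $(3)\Rightarrow(2)$, let $G$ be connected and $\obt$-free; I argue by induction on $\abs{V(G)}$, the base case being a single vertex. Since $G$ has no house, gem, domino, nor induced cycle of length at least $5$, $G$ is distance-hereditary, hence it has a pruning sequence: some vertex $v$ is a pendant vertex, a true twin, or a false twin of another vertex. The graph $G\setminus v$ is connected and $\obt$-free, so by induction it is a thread graph; fix a canonical thread decomposition $G\setminus v=P\odot\cB_P$. The remaining, and crucial, step is a case analysis on the type of $v$ (pendant, true twin, false twin) and on where its neighbour(s) sit in the decomposition (endpoint of $P$, interior vertex of $P$, pendant vertex, or interior of a block, and with which label): in each surviving case one exhibits an explicit modification of $P\odot\cB_P$ absorbing $v$ into a canonical thread decomposition of $G$, and in each remaining case one shows that the local configuration around $v$ already contains one of the $14$ graphs of Figure~\ref{fig:obslrw1}, contradicting $\obt$-freeness (configurations creating a house, gem, domino, or long induced cycle are impossible, since adding a twin or a pendant to a distance-hereditary graph keeps it distance-hereditary). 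Thus the bad cases do not occur and $G$ is a thread graph. The main obstacle of the whole argument is this enumeration of configurations; if one instead proves $(1)\Rightarrow(2)$ directly, the comparable obstacle is the reconstruction of the thread decomposition from a width-$1$ ordering.
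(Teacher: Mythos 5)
The paper does not prove Theorem~\ref{thm:structurethread}; it is imported as a black box from Ganian and from Adler, Farley, and Proskurowski, so there is no in-paper argument for your proposal to be compared against. Evaluated on its own, your cycle $(2)\Rightarrow(1)\Rightarrow(3)\Rightarrow(2)$ is the right global structure, and the two legs you actually carry out are sound. For $(1)\Rightarrow(3)$: linear rankwidth is hereditary under induced subgraphs (restricting the ordering restricts each cut matrix), the house, gem, domino and $C_k$ ($k\ge 5$) are non-distance-hereditary and hence have rankwidth at least $2$, and the fourteen remaining obstructions are finite and can be checked directly. For $(2)\Rightarrow(1)$: with the block-by-block ordering, a proper prefix $S$ of the ordering meets the thread blocks in $V(B_1)\cup\dots\cup V(B_{i-1})\cup S_i$ with $S_i$ a $\sigma$-prefix of $V(B_i)$ not containing $v_{i+1}$; since every edge of $G$ lies inside some $V(B_\ell)$ and $V(B_\ell)\cap S$ is either all of $V(B_\ell)$ (for $\ell<i$), $S_i$ (for $\ell=i$), or empty (for $\ell>i$), all edges leaving $S$ lie inside $B_i$, and by the $L/R$-labelling condition they form a complete bipartite graph between the $R$-labelled part of $S_i$ and the $L$-labelled part of $V(B_i)\setminus S_i$, giving a $\mathrm{GF}(2)$-rank-$1$ cut.

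The genuine gap is $(3)\Rightarrow(2)$, and you flag it yourself. The reduction to a pruning sequence of a distance-hereditary graph is the right starting point, but the assertion that in every configuration the pruned pendant or twin either absorbs into a modified canonical thread decomposition or else forces one of the fourteen graphs of Figure~\ref{fig:obslrw1} is the whole content of the Adler--Farley--Proskurowski result, not a routine verification. You would need to enumerate, for each of pendant / true twin / false twin, the possible positions of the attachment vertex (endpoint of $P$, interior $v_i$, pendant of some $B_i$, non-cut vertex of some $B_i$ with label $\{L\}$, $\{R\}$, or $\{L,R\}$), exhibit the modified decomposition in the good cases, and name the specific obstruction in the bad cases; none of that is done. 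As written, the proposal is a correct plan with two legs complete and the third reduced to a promissory note that the obstruction set is exactly right, which is precisely what the cited theorem asserts.
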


We often use the term `thread graphs' for graphs of linear rankwidth at most $1$.
For a graph $G$ and $S\subseteq V(G)$, $S$ is called a \emph{\DEL} if $G\setminus S$ is a graph of linear rankwidth at most $1$.

\subsection*{Obtaining a canonical thread decomposition}

It is known that one can recognize graphs of linear rankwidth at most $1$ in time $\mathcal{O}(\abs{V(G)}+\abs{E(G)})$ using split decompositions~\cite{Bui-XuanKL13, AdlerKK20152}.
Furthermore, we can easily obtain a canonical thread decomposition of a graph of linear rankwidth at most $1$ from its split decomposition, 
but for our knowledge, it was not stated anywhere.
In this subsection, we clarify a procedure to obtain a thread decomposition.
This will be especially used in the kernelization algorithm in Section~\ref{sec:polykerthreaddel}.

We use graph-labelled trees introduced by Gioan and Paul~\cite{GP2012}, which are convenient forms of split decompositions.  
A triple $(T, \cF=\{G_v\}_{v\in V(T)}, \cR=\{\rho_v\}_{v\in V(T)})$ of a tree $T$ and a set of graphs $\cF$ and a set of functions $\cR$ is called a \emph{graph-labelled tree} if 
\begin{itemize}
\item for every node $v$ of degree $k$ in $T$, $G_v$ is a connected graph on $k$ vertices, called \emph{marker vertices},
\item $\rho_v$ is a bijection from the edges of $T$ incident with $v$ to the marker vertices of $G_v$. 
\end{itemize}
Let $v$ be a leaf of $T$.
A node or a leaf $u$ different from $v$ is called \emph{$v$-accessible} if for every edges $xy$ and $yz$ on the path from $u$ to $v$ in $T$,
$\rho_y(xy)$ is adjacent to $\rho_y(yz)$ in $G_y$. 
The \emph{accessibility graph} of a graph-labelled tree $(T, \cF, \cR)$ is the graph $\gr(T, \cF, \cR)$ whose vertex set is the set of all leaves of $T$, 
and $xy\in E(\gr(T, \cF, \cR))$ if and only if $y$ is $x$-accessible. We say that $(T, \cF, \cR)$ is a graph-labelled tree of $\gr(T, \cF, \cR)$.

We give two operations on a graph-labelled tree to define a \emph{reduced graph-labelled tree}.
A \emph{split} of a graph $G$ is a vertex partition $(V_1, V_2)$ of $G$ such that
$\abs{V_1}, \abs{V_2}\ge 2$, and there exist $V_1'\subseteq V_1, V_2'\subseteq V_2$ where the set of edges incident with both $V_1$ and $V_2$ is exactly $\{xy: x\in V_1', y\in V_2'\}$.
 A connected graph is \emph{degenerate} if every vertex partition $(V_1, V_2)$ with $\abs{V_1}, \abs{V_2}\ge 2$ is a split.
It is known that every degenerate graph is either a complete graph or a star graph.
 A node in a graph-labelled tree is called a \emph{clique node} (or a \emph{star node}) if a complete graph (or a star graph, respectively) is assigned to the node.
 A graph without splits is called a \emph{prime graph}.
 
Let $(T, \cF, \cR)$ be a graph-labelled tree of a graph $G$.
Let $v$ be a node of $T$ such that $G_v$ admits a split $(A_1, A_2)$.
For each $i\in \{1,2\}$, 
let $G_i$ be the graph obtained from $G[A_i]$ by adding a new vertex $a_i$ that is adjacent to all vertices in $\partial_{G_v}(A_i)$.
Then the \emph{node-split} operation on the node $v$ with respect to $(A_1, A_2)$
consists of substituting $v$ by two adjacent nodes $v_1, v_2$, respectively labelled by $G_1, G_2$, such that for each $i\in \{1,2\}$, 
\begin{displaymath}
\rho_{v_i}^{-1}(w) = \left\{ \begin{array}{ll}
 v_1v_2 & \textrm{if $w=a_i$,}\\
 \rho_{v}^{-1}(w) & \textrm{if $w\in V(G_i)\setminus \{a_i\}$.}
  \end{array} \right.
\end{displaymath}
The \emph{node-join} operation is the reverse operation of the node-split operation; 
if $vw$ is an edge of $T$, then the \emph{node-join} operation on $vw$ 
consists of contracting $vw$ into a new node $u$ labelled by the graph $G_u$ 
where $G_u$ is the graph obtained from the disjoint union of $G_v$ and $G_w$ 
by deleting $\rho_v(vw)$ and $\rho_w(vw)$ and adding all edges between $N_{G_v}(\rho_v(vw))$ and $N_{G_w}(\rho_w(vw))$.

We say that a star node $v$ is \emph{oriented towards a node $t$} of $T$ if the edge $e$ such that $\rho_v(e)$ is the center
of $G_v$ is on the path in $T$ between $t$ and $v$.

A graph-labelled tree is \emph{reduced} if  
\begin{enumerate}[(1)]
\item every node is either prime or degenerate, and 
\item it contains no edge that connects two degenerate nodes where the node-join operation on this edge results in another degenerate node.
 \end{enumerate}
Cunningham showed the uniqueness of a reduced graph-laballed tree of a connected graph.
Moreover, it can be computed in time $\mathcal{O}(n+m)$.

\begin{theorem}[Cunningham~\cite{Cunningham1982}; Dahlhaus \cite{Dahlhaus00}]\label{thm:cunningham}
Every connected graph $G$ admits a unique reduced graph-labelled tree, and it can be computed in time $\mathcal{O}(n+m)$.
\end{theorem}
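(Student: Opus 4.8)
The statement is classical — existence and uniqueness go back to Cunningham~\cite{Cunningham1982}, and the linear running time to Dahlhaus~\cite{Dahlhaus00} — so I only outline how the argument runs. For \emph{existence}, the plan is to start from the \emph{trivial graph-labelled tree} of $G$: a single internal node $v$ of degree $n$ with $G_v=G$ and $\rho_v$ an arbitrary bijection from the $n$ pendant edges onto $V(G)$, whose accessibility graph is obviously $G$. I would then repeatedly apply node-split operations: as long as some label $G_v$ admits a split $(A_1,A_2)$, perform the node-split of $v$ with respect to it. The first point to check is that a node-split leaves the accessibility graph unchanged; this is a short case analysis on the path in the tree joining two leaves $x,y$, using that the new markers $a_1,a_2$ are adjacent precisely to $\partial_{G_v}(A_1)$ and $\partial_{G_v}(A_2)$, so that $\rho$-adjacency along the path through the new edge $v_1v_2$ holds if and only if it held at $v$. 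Termination is immediate: a node-split of a degree-$k$ node produces two nodes of degrees $\abs{A_1}+1$ and $\abs{A_2}+1$, each at most $k-1$ since $\abs{A_1},\abs{A_2}\ge 2$, so the multiset of node degrees strictly decreases in the lexicographic order. When no label admits a split, every node is prime or degenerate, i.e.\ condition~(1) of ``reduced'' holds; I would then exhaust node-join operations on edges joining two degenerate nodes whose join is again degenerate — these preserve the accessibility graph by the reverse of the case analysis above and strictly decrease the number of nodes — after which condition~(2) holds as well.

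For \emph{uniqueness}, the plan is to follow Cunningham's analysis of how the splits of $G$ interact. Say that two splits \emph{cross} if each of the four pairwise intersections of their parts is nonempty, and that a split is \emph{strong} if it crosses no other split. The structural facts to establish are: (i) the strong splits form a cross-free family, hence correspond to the edges of a tree; and (ii) a maximal family of pairwise-crossing splits, together with the partition of $V(G)$ it induces, forces the associated label to be a complete graph or a star. Granting these, the reduced graph-labelled tree of $G$ is completely determined — its tree edges are in bijection with the strong splits, its prime nodes are the ``prime'' parts of $G$ admitting no split, and its degenerate nodes are the maximal crossing families — and, conversely, any reduced graph-labelled tree of $G$ recovers exactly this data: each tree edge yields a split, which one checks is strong, and each degenerate node accounts for a maximal crossing family. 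Hence two reduced graph-labelled trees of the same connected graph must coincide.

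The most technically demanding part is the $\mathcal{O}(n+m)$ bound: the refinement procedure above is polynomial but far from linear, since naively searching for and applying splits already costs much more. For this I would invoke Dahlhaus's algorithm~\cite{Dahlhaus00}, which constructs the reduced graph-labelled tree in linear time via a divide-and-conquer scheme that detects and applies splits while maintaining auxiliary data structures to avoid recomputation; alternatively one can appeal to the incremental split-decomposition algorithms of~\cite{GP2012, Bui-XuanKL13}. We use this as a black box, as the later sections only need the existence of such an algorithm. The main obstacle in a self-contained proof would be precisely this amortized analysis.
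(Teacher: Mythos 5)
The paper does not prove this statement; it is stated as a black-box citation to Cunningham (existence and uniqueness) and Dahlhaus (linear-time computation), which is exactly how you treat it. Your sketch of the existence argument via repeated node-splits with the degree-multiset termination measure, the uniqueness argument via cross-free families of strong splits, and the appeal to Dahlhaus for the $\mathcal{O}(n+m)$ bound is a faithful outline of what those references establish, so it is consistent with the paper's (non-)proof.
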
 
 
For a connected graph $G$, we denote by $ST(G)$ the unique reduced graph-labelled tree obtained from Theorem~\ref{thm:cunningham}.
We call it the \emph{split tree} of $G$.
The following characterization of graphs of linear rankwidth $1$ is crucial.
We give an example of the split tree of a connected thread graph in Figure~\ref{fig:splittree}.

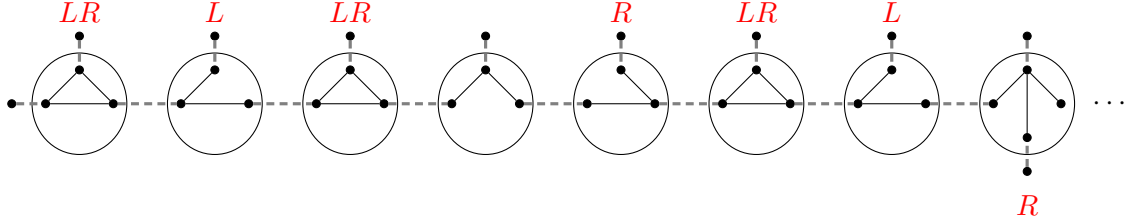
\begin{figure}[t]\centering
\tikzstyle{v}=[circle, draw, solid, fill=black, inner sep=0pt, minimum width=3pt]
 \tikzset{
    photon/.style={decorate,  draw=gray, very thick, , densely dashed},
}

\begin{tikzpicture}[scale=0.045]

\draw (30-40,40) ellipse (14 and 15);
\node(a1) [v] at (30-40,50) {};
\node(a3) [v] at (20-40,40) {};
\node(a4) [v] at (40-40,40) {};
\draw(a1)--(a3)--(a4)--(a1);

\node(a5) [v] at (20-50,40) {};
\node(a6) [v] at (30-40,60) {};
\draw[photon] (a5)--(a3);
\draw[photon](a6)--(a1);

\draw[red] (30-40, 67) node{$LR$};

\draw (30,40) ellipse (14 and 15);

\node(b1) [v] at (30,50) {};
\node(b3) [v] at (20,40) {};
\node(b4) [v] at (40,40) {};
\draw(b1)--(b3)--(b4);

\node(b6) [v] at (30,60) {};
\draw[photon](b6)--(b1);

\draw[red] (30, 67) node{$L$};

\draw (30+40,40) ellipse (14 and 15);

\node(c1) [v] at (30+40,50) {};
\node(c3) [v] at (20+40,40) {};
\node(c4) [v] at (40+40,40) {};
\draw(c1)--(c3)--(c4)--(c1);

\node(c6) [v] at (30+40,60) {};
\draw[photon](c6)--(c1);

\draw[red] (30+40, 67) node{$LR$};

\draw (30+80,40) ellipse (14 and 15);

\node(d1) [v] at (30+80,50) {};
\node(d3) [v] at (20+80,40) {};
\node(d4) [v] at (40+80,40) {};
\draw(d3)--(d1)--(d4);
\node(d6) [v] at (30+80,60) {};
\draw[photon](d6)--(d1);

\draw (30+120,40) ellipse (14 and 15);

\node(e1) [v] at (30+120,50) {};
\node(e3) [v] at (20+120,40) {};
\node(e4) [v] at (40+120,40) {};
\draw(e1)--(e4)--(e3);
\node(e6) [v] at (30+120,60) {};
\draw[photon](e6)--(e1);

\draw[red] (30+120, 67) node{$R$};

\draw (30+160,40) ellipse (14 and 15);

\node(f1) [v] at (30+160,50) {};
\node(f3) [v] at (20+160,40) {};
\node(f4) [v] at (40+160,40) {};
\draw(f1)--(f3)--(f4)--(f1);
\node(f6) [v] at (30+160,60) {};
\draw[photon](f6)--(f1);

\draw[red] (30+160, 67) node{$LR$};

\draw (30+200,40) ellipse (14 and 15);

\node(g1) [v] at (30+200,50) {};
\node(g3) [v] at (20+200,40) {};
\node(g4) [v] at (40+200,40) {};
\draw(g1)--(g3)--(g4);
\node(g6) [v] at (30+200,60) {};
\draw[photon](g6)--(g1);

\draw[red] (30+200, 67) node{$L$};

\draw (30+240,40) ellipse (14 and 15);

\node(h1) [v] at (30+240,50) {};
\node(h2) [v] at (30+240,30) {};
\node(h3) [v] at (20+240,40) {};
\node(h4) [v] at (40+240,40) {};
\draw(h3)--(h1)--(h4);
\draw(h1)--(h2);
\node(h6) [v] at (30+240,60) {};
\node(h7) [v] at (30+240,20) {};
\draw[photon](h6)--(h1);
\draw[photon](h7)--(h2);

\draw[red] (30+240, 10) node{$R$};

\draw[photon] (a4)--(b3);
\draw[photon] (b4)--(c3);
\draw[photon] (c4)--(d3);
\draw[photon] (d4)--(e3);
\draw[photon] (e4)--(f3);
\draw[photon] (f4)--(g3);
\draw[photon] (g4)--(h3);

\draw (30+265, 40) node{$\cdots$};

\end{tikzpicture}
\caption{The canonical split tree of the first two thread blocks of the thread graph in Figure~\ref{fig:threadgraph}.}\label{fig:splittree}
\end{figure}

\begin{theorem}[Bui-Xuan, Kant\'e, and Limouzy~\cite{Bui-XuanKL13}; Adler, Kant\'e, and Kwon~\cite{AdlerKK20152}]\label{thm:charlrw1split}
A connected graph $G$ with the split tree $(T, \cF, \cR)$ is a graph of linear rankwidth at most $1$  if and only if
every graph in $\cF$ is degenerate and 
the tree obtained from $T$ by removing all its leaves is a path.
Thus, one can recognize graphs of linear rankwidth at most $1$ in time $\mathcal{O}(n+m)$.
\end{theorem}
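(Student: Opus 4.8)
The plan is to reduce the statement to a purely structural claim about distance-hereditary graphs, using two ingredients: the forbidden-induced-subgraph characterisation of linear rankwidth at most~$1$ (Theorem~\ref{thm:structurethread}), and the classical fact that the connected distance-hereditary graphs are exactly the connected graphs all of whose split-tree labels are degenerate (clique or star) --- this is implicit in Cunningham's split-decomposition theory and is explicit for graph-labelled trees in Gioan and Paul~\cite{GP2012}; recall also that a connected graph has rankwidth at most~$1$ if and only if it is distance-hereditary (see, e.g.,~\cite{Oum05}). Since linear rankwidth at most~$1$ forces rankwidth at most~$1$, hence distance-hereditariness, hence that every split-tree label is degenerate, and since conversely ``all labels degenerate'' is exactly distance-hereditariness, the condition ``every graph in $\cF$ is degenerate'' in the statement is interchangeable with ``$G$ is distance-hereditary''. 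Moreover the internal nodes of a tree always induce a subtree, and a tree is a path if and only if it has maximum degree at most~$2$. So the theorem is equivalent to the following claim: \emph{a connected distance-hereditary graph $G$ with split tree $(T,\cF,\cR)$ is a thread graph if and only if no node of $T$ has three or more neighbours that are themselves internal nodes of $T$.}

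For the ``only if'' direction I would start from a canonical thread decomposition $G=P\odot\cB_P$ with $P=v_1\cdots v_n$ and blocks $B_1,\dots,B_{n-1}$, which exists because $G$ is a thread graph. First I would check by direct inspection that the reduced split tree of a single canonical thread block is a short path of clique and star nodes with the first and last vertices of the block appearing as leaves attached to its two ends; this uses only the partition of the block's vertices into the $\{R\}$-only, the $\{L,R\}$-, and the $\{L\}$-only classes. Next, by Lemma~\ref{lem:structure}(2)--(3), $G$ is obtained from $B_1,\dots,B_{n-1}$ by successively identifying the shared cut vertices $v_2,\dots,v_{n-1}$, and identifying two graphs along a single vertex corresponds, on split trees, to gluing the two leaves representing that vertex and reducing, which merely concatenates the two block-paths (possibly fusing their two end nodes). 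Hence $ST(G)$ is a path of degenerate nodes with pendant leaves attached, so no node has three internal-node neighbours.

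For the ``if'' direction, assume every label in $\cF$ is degenerate and that the internal nodes of $T$ form a path $w_1\cdots w_t$. I would build a thread decomposition by induction on $t$, peeling off the end node $w_t$: its marker vertices are a (possibly empty) set of leaves of $T$, which are genuine vertices of $G$, together with the single marker $m$ on the edge $w_{t-1}w_t$. A case analysis according to whether $G_{w_t}$ is a clique node or a star node, and on the position of $m$ (centre or leaf of a star), shows that $w_t$ contributes exactly one canonical thread block to the decomposition, attached to the rest of $G$ through the unique vertex playing the role of $m$; deleting from $G$ the leaf-vertices of $w_t$ other than that vertex and performing the node-join on $w_{t-1}w_t$ produces a smaller graph whose split tree again satisfies the hypothesis, so induction applies and the blocks assemble along $w_1,\dots,w_t$ into $G=P\odot\cB_P$. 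Finally, for the algorithmic statement one computes $ST(G)$ in time $\mathcal{O}(n+m)$ by Theorem~\ref{thm:cunningham} and then checks in linear time that each label is a complete graph or a star and that $T$ minus its leaves is a path.

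The step I expect to be the main obstacle is the bookkeeping in these two reductions: matching the $L/R$ labelling of thread blocks against the clique/star node types and the orientation of star nodes, and verifying that ``gluing two graphs at a common vertex'' is precisely the split-tree operation that concatenates block-paths while staying reduced. An alternative route for the ``if'' direction would avoid the explicit construction by arguing the contrapositive --- showing that if some internal node of $T$ has three internal-node neighbours then $G$ contains one of the fourteen graphs of Figure~\ref{fig:obslrw1} as an induced subgraph --- and then invoking Theorem~\ref{thm:structurethread}; but extracting the right obstruction from a degree-three branch appears to require a comparable case analysis.
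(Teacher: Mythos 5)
This theorem is not proved in the paper: it is imported as a black box from the two cited references (Bui-Xuan--Kant\'e--Limouzy and Adler--Kant\'e--Kwon), so there is no internal proof to compare against. What the paper does contain is a constructive companion, Lemma~\ref{lem:splittreetothreadblock}, which \emph{uses} Theorem~\ref{thm:charlrw1split} to extract a canonical thread decomposition from a split tree of the asserted caterpillar shape, and that construction is the closest thing to the ``if'' direction your sketch is aiming for.

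Your high-level reduction is correct: linear rankwidth~$\le 1$ implies rankwidth~$\le 1$, which is equivalent to distance-hereditary, which is equivalent to all split-tree labels being degenerate; so the statement does collapse to ``among connected distance-hereditary graphs, the thread graphs are exactly those whose split tree, minus its leaves, is a path.'' The issue lies in how you peel. In the ``if'' direction you remove one internal node $w_t$ at a time and assert that ``$w_t$ contributes exactly one canonical thread block, attached to the rest of $G$ through the unique vertex playing the role of $m$.'' Neither half of this is accurate in general. A single canonical thread block can already have a split tree whose internal part is a path of several clique and star nodes (for example, a block with label sequence $R,\,LR,\,L,\,LR,\,L$ has split tree $K_3$--star--$K_3$), so the correspondence is not one node per block; conversely, a single clique node can sit strictly inside a block. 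The true block boundaries are the cut vertices of $G$, and by Lemma~\ref{lem:structure} together with the construction of Lemma~\ref{lem:splittreetothreadblock}, these are precisely the leaves of $T$ that hang off a star node as its \emph{center} -- your peeling needs to proceed from one such star--center leaf to the next, not node by node. Relatedly, when $m$ is a non-center leaf of a star node $G_{w_t}$, the vertices of $G$ accessible from $w_t$ through $m$ form a whole set (all leaves reachable through the center), not a single vertex, so ``the unique vertex playing the role of $m$'' does not exist in that case. The ``only if'' direction has the mirror-image imprecision: identifying two block split trees at a shared cut-vertex leaf does not ``merely concatenate'' them -- a fresh star node whose center is the marker of the cut vertex can be created, or an existing star node can absorb the identification -- though the resulting internal tree does remain a path. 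Neither of these is fatal to the approach, but both require exactly the case analysis on clique vs.\ star node and on the position of the star center that you deferred as ``bookkeeping''; until that is carried out at the level of cut-vertex boundaries rather than individual internal nodes, the argument has a genuine gap.
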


\begin{lemma}\label{lem:splittreetothreadblock}
Given a connected graph $G$ of linear rankwidth at most $1$, 
we can output a canonical thread decomposition $G=P\odot \cB_P$ in time $\mathcal{O}(n+m)$. 
\end{lemma}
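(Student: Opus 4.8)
The strategy is to start from the split tree $ST(G) = (T,\cF,\cR)$, which by Theorem~\ref{thm:cunningham} can be computed in time $\mathcal{O}(n+m)$, and convert it step by step into a canonical thread decomposition. By Theorem~\ref{thm:charlrw1split}, since $G$ has linear rankwidth at most $1$, every marker graph in $\cF$ is degenerate (a clique or a star), and the tree $T'$ obtained from $T$ by deleting all leaves is a path, say $t_1 t_2 \cdots t_r$. This path of internal nodes is the backbone from which we will read off the underlying directed path $P = v_1 \cdots v_n$ of the thread decomposition, after we turn it into a \emph{canonical} split tree: we may need to redistribute pendant-type leaves and possibly contract or refine clique nodes so that the bijection between internal nodes and vertices of $P$ (together with the thread blocks hanging between consecutive vertices) becomes well defined.

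\medskip
\noindent\textbf{Key steps, in order.}
\begin{enumerate}[(1)]
\item \emph{Extract the backbone path.} Compute $ST(G)$, check (as a sanity certificate) that every marker graph is degenerate and that the internal-node tree $T'$ is a path $t_1\cdots t_r$; orient this path arbitrarily from $t_1$ to $t_r$. This costs $\mathcal{O}(n+m)$.
\item \emph{Orient each star node.} For each star internal node $t_i$, determine towards which of its at most two internal-node neighbours (i.e. towards $t_{i-1}$ or towards $t_{i+1}$) its centre marker vertex points, using Cunningham's orientation notion; this classifies each internal node as a clique node, a star oriented ``left'', or a star oriented ``right''. These orientations, read along the path, are exactly what encode the $\{L\}$, $\{R\}$, $\{L,R\}$ labels of the thread blocks, as illustrated in Figure~\ref{fig:splittree}.
\item \emph{Read off one thread block per backbone edge.} For each edge $t_i t_{i+1}$ of $T'$, the leaves that are accessible ``through'' this edge, together with the two interface vertices, assemble into a thread block $B_i$ whose first and last vertices are the endpoints assigned to $t_i$ and $t_{i+1}$; the labelling $\ell$ of each leaf is read from whether, along its unique path to the backbone, it attaches on the left side, the right side, or (for a clique node) both. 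Leaves attached to an endpoint internal node rather than an edge are pendant vertices.
\item \emph{Make it canonical.} Enforce condition~(3) of a canonical thread block: a pendant vertex adjacent only to the first vertex $\sigma^{-1}(1)$ of a block must be moved to be regarded as a pendant vertex of the preceding block (attached to $\sigma^{-1}(1)$ as its last vertex). This is the ``rearranging pendant vertices'' operation already mentioned after Theorem~\ref{thm:structurethread}; do it uniformly along the path, and also handle the two endpoints $t_1,t_r$ and trivial ($2$-vertex, i.e. single-edge) blocks as base cases.
\item \emph{Assemble and output.} Set $P = v_1\cdots v_n$ with the $v_j$ being the interface/endpoint marker vertices lifted back to real vertices of $G$, set $\cB_P = \{B_i\}$, and verify that $\{B_i\}$ is mergeable with $P$ (condition~(2) holds because split-tree edges share exactly the one interface vertex). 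Output $P\odot\cB_P$. Every step is a single traversal of $T$, so the total time is $\mathcal{O}(n+m)$.
\end{enumerate}

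\medskip
\noindent\textbf{Main obstacle.} The conceptually delicate point is step~(4): getting a \emph{canonical} — hence unique — decomposition rather than just some thread decomposition. The split tree already gives canonical structural information (by Cunningham's uniqueness theorem, Theorem~\ref{thm:cunningham}), so the real work is to match the split-tree normal form against the thread-block normal form: one must argue that the only mismatch is the treatment of pendant vertices adjacent to a block's first vertex, that moving such pendants into the previous block is always possible (there is always a previous block, or the pendant sits at the global first vertex where it can be absorbed by choosing the orientation of $P$ appropriately), and that after this correction conditions~(1)--(3) of a canonical thread block hold for every $B_i$ simultaneously. A secondary technical nuisance is bookkeeping the correspondence between marker vertices of $ST(G)$ and actual vertices of $G$ (each real vertex is a leaf of $T$; each interface $v_j$ is obtained by following the leaf that ``represents'' that cut vertex), and handling degenerate corner cases: components that are a single edge, clique nodes of size two that should be joined, and the two ends of the backbone. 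None of this is deep, but it must be spelled out carefully to justify both correctness and the linear running time.
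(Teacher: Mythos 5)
Your step~(3) has a genuine structural flaw. You propose to associate one thread block to each edge $t_it_{i+1}$ of the backbone path $T'$, with the ``two interface vertices'' at that edge serving as the first and last vertices of the block. But the two endpoints of an edge of the split tree are \emph{marker vertices}, not vertices of $G$: only leaves of $T$ correspond to actual vertices of $G$. A thread block's first and last vertex must be real vertices of $G$ (cut vertices), so they cannot be read off of a generic backbone edge. More importantly, the correspondence between thread blocks and the split tree is not ``one block per backbone edge'': a single thread block typically spans \emph{several} consecutive internal nodes of $T'$ (see Figure~\ref{fig:splittree}, where two thread blocks occupy eight internal nodes). The correct delimiters of thread blocks are the cut vertices of $G$, and in the split tree these are exactly those leaves $w$ whose neighbouring node is a star whose \emph{centre} marker is $r(w)$. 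All internal nodes strictly between two consecutive such star nodes contribute their leaves to a single thread block; no backbone edge between them marks a block boundary.

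This is precisely the step the paper handles carefully: it extracts the subsequence $v_{i_1},\dots,v_{i_t}$ of star nodes along $T'$ whose centre is realised by a leaf $w_j$, takes $P=w_0w_1\cdots w_tw_{t+1}$ (with $w_0,w_{t+1}$ leaves at the two ends), and builds the $j$-th block from $\{w_j,w_{j+1}\}$ together with \emph{all} leaves attached to the internal nodes $v_{i_j+1},\dots,v_{i_{j+1}}$ of $T'$. Your steps~(1), (2), (4), (5) are in the right spirit — you do correctly identify that star orientations encode the $\{L\},\{R\},\{L,R\}$ labels and that pendant leaves adjacent to a block's first vertex must be pushed into the preceding block to enforce canonicity — but without first locating the cut-vertex leaves as the block delimiters and grouping whole path segments of internal nodes into one block, step~(3) does not produce valid thread blocks, and consequently steps~(4) and (5) cannot repair it. You would need to replace ``one block per backbone edge'' with ``one block per maximal backbone segment between consecutive cut-vertex star nodes'' and then verify that within such a segment the leaf orderings and labels indeed satisfy conditions~(1)--(2) of a thread block, which is the technical heart of the paper's argument.
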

\begin{proof}
Using the algorithm in Theorem~\ref{thm:cunningham}, we compute a split tree of $G$ in time $\mathcal{O}(\abs{V(G)}+\abs{E(G)})$.
Let $(T, \{G_v\}_{v\in V(T)}, \{\rho_v\}_{v\in V(T)})$ be the obtained split tree of a connected thread graph $G$.
By Theorem~\ref{thm:charlrw1split}, the tree obtained from $T$ by removing all its leaves is a path and each $G_v$ is degenerate, that is, either a star node or a clique node.
Let $Q$ be the path $v_1v_2 \cdots v_m$ obtained from $T$ by removing all its leaves, and we give a direction on edges so that it is the directed path from $v_1$ to $v_m$.
Note that for each leaf $v$ in $T$ with the neighbor $w$, there exists a unique marker vertex $u$ in $G_w$ such that
$\rho_{w}(vw)=u$.
We denote by $r(v)=u$.

We choose the sequence $v_{i_1}, \ldots, v_{i_t}$ of all star nodes such that  
\begin{itemize}
\item $i_1<i_2< \cdots <i_t$, 
\item for each $1\le j\le t$, there is a leaf $w_j$ of $T$ such that $r(w_j)$ is the center of $G_{v_{i_j}}$.
\end{itemize}
We choose a leaf $w_0\neq w_1$ of $T$ that is adjacent to $v_1$, 
and choose a leaf $w_{t+1}\neq w_t$ of $T$ that is adjacent to $v_m$.
It is not hard to check that $w_0w_1, w_tw_{t+1}\in E(G)$.
Also, from the accessibility among nodes of $T$, one can observe that $w_1, \ldots, w_t$ are cut vertices of $G$, and thus $w_0w_1 \cdots w_tw_{t+1}$ is an induced path of $G$.
Let $P:=w_0w_1 \cdots w_tw_{t+1}$, and we regard it as a directed path from $w_0$ to $w_{t+1}$.

Now, we construct a set of thread blocks $\cB_P$ where 
 $P\odot \cB_P$ is a canonical thread decomposition of $G$.
For each node $v\in V(Q)$, let $\eta(v)$ be the set of all leaves in $T$ that are adjacent to $v$.
For convenience, let $i_0:=0$ and $i_{t+1}:=m$.
For each $0\le j\le t$, we define the following:
\begin{enumerate}
\item Let $B_{j}:=G[\{w_j, w_{j+1}\} \cup \bigcup_{i_j+1 \le \ell \le i_{j+1}  } \eta(v_{\ell})    ]$.
\item We take an ordering $\sigma_j$ of $V(B_j)$ such that $w_j$ and $w_{j+1}$ are the first and last vertices, and 
$v<_{\sigma_j} w$ if there is a directed path from the neighbor of $v$ to the neighbor of $w$ in $Q$. We take an arbitrary ordering for the leaves that have the same neighbor in $T$,  except $w_{j+1}$. 
This can be done in time $\mathcal{O}(\abs{V(G)})$.
\item 
From the definition of $\sigma_i$, 
it is not hard to check that for $v, w\in  V(B_j)\setminus \eta(v_{i_{j+1}})\cup \{w_{j+1}\}$ with $v<_{\sigma_j} w$, $vw\in E(G)$ if and only if 
\begin{itemize}
\item $v=w_j$, or the neighbor of $v$ is either a clique node or a star node oriented towards $v_m$, and 
\item $w=w_{j+1}$, or the neighbor of $w$ is either a clique node or a star node oriented towards $v_1$. 
\end{itemize}
Note that the vertices in $\eta(v_{i_{j+1}})\setminus \{w_{j+1}\}$ are pendant vertices adjacent to $w_{j+1}$ in $G$. 
Following the above observations, we give a labeling $\ell_j$ on $V(B_j)$ such that
for $w\in V(B_j)$ with the neighbor $w'$ in $T$,
 \begin{displaymath}
\ell_j(w) = \left\{ \begin{array}{ll}
\{R\} & \textrm{if $w=w_j$, }\\ 
\{L\} & \textrm{if $w=w_{j+1}$, }\\ 
\{R\} & \textrm{if $w\in  \eta(v_{i_{j+1}})\setminus \{w_{j+1}\}$, }\\
\{L\} & \textrm{if $G_{w'}$ is a star node oriented towards $v_1$, }\\
\{R\} & \textrm{if $G_{w'}$ is a star node oriented towards $v_m$, } \\
\{L,R\} & \textrm{if $G_{w'}$ is a clique node.}
\end{array} \right.
\end{displaymath}
It takes $\mathcal{O}(\abs{V(G)})$ time.
It is not hard to verify that $\sigma_j$ and $\ell_j$ are proper ordering and labeling of $B_j$ using accessibility among nodes of $T$. Moreover,  as $B_{j}$ does not contain a vertex  in $\eta(v_{i_{j}})\setminus \{w_{j}\}$, which is a pendant vertex adjacent to $w_j$ if exists, 
$B_j$ is a canonical thread block whose first and last vertices are $w_j$ and $w_{j+1}$, respectively, with the ordering $\sigma_j$ and the labeling $\ell_j$.

We return $\cB_P:=\{B_0, \ldots, B_t\}$.
\end{enumerate}

Note that $w_1, \ldots, w_t$ are cut vertices of $G$. Thus, it is straightforward to check that
\begin{enumerate}[(1)]
\item $V(B_i)\cap V(B_j)=\{w_i, w_{i+1}\}\cap \{w_j, w_{j+1}\}$ for every pair $i, j$ with $i\neq j$, 
\item  $\bigcup_{0\le i\le t}V(B_i)=V(G)$ and $\bigcup_{0\le i\le t}E(B_i)=E(G)$.
\end{enumerate} 
We conclude that $P\odot \cB_P$ is the canonical thread decomposition of $G$, and it can be computed in time $\mathcal{O}(\abs{V(G)}+\abs{E(G)})$.
\end{proof}

\section{Necklace graphs}\label{sec:necklacegraph}
We generalize the construction of thread graphs from directed paths to directed cycles.

\begin{figure}
\centerline{\includegraphics[trim={0 4cm 0 0}, scale=0.3]{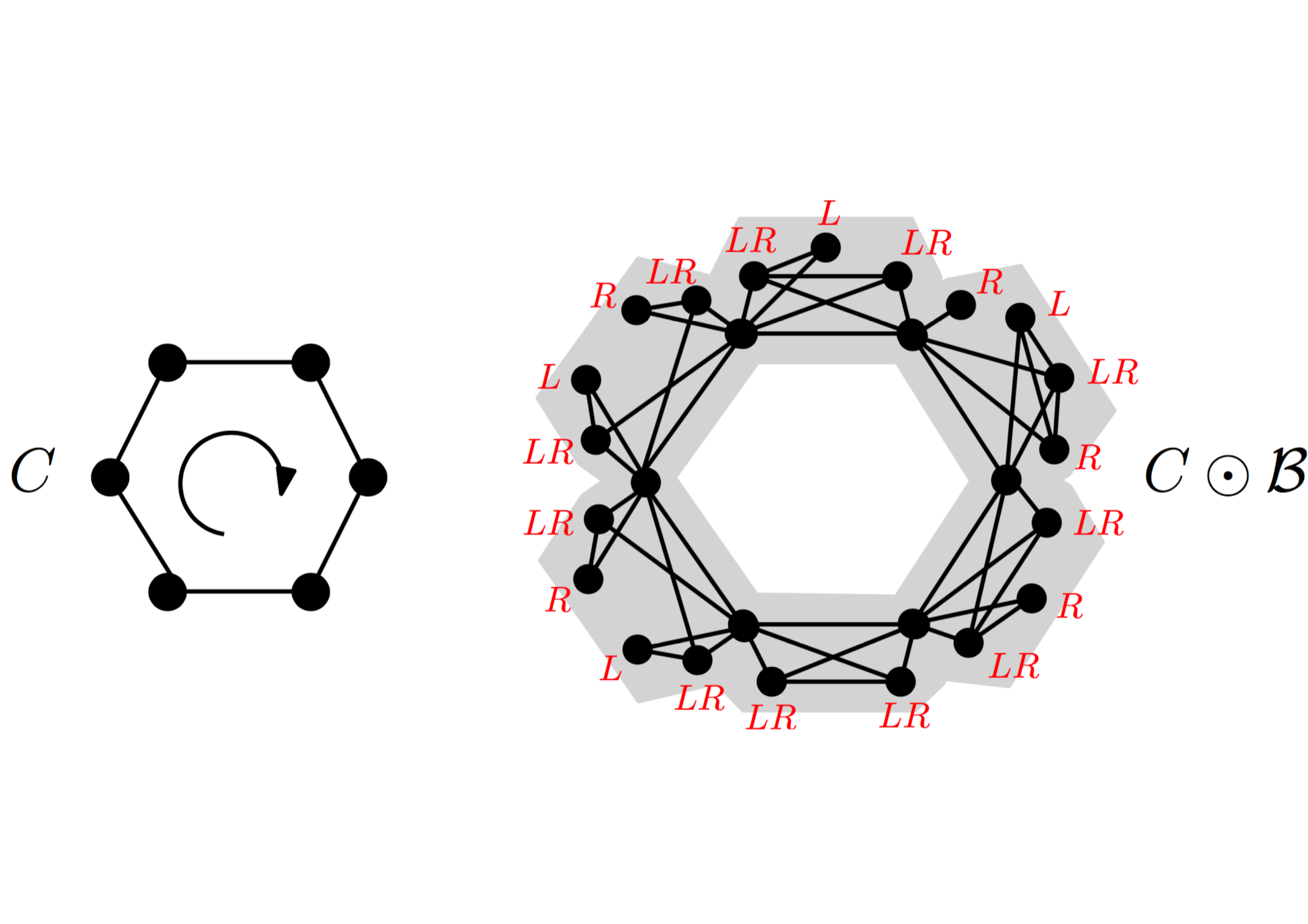}}

\caption{An example of a necklace graph and its canonical thread decomposition.}
\label{fig:necklace}
\end{figure}

\begin{definition}
A connected graph $G$ is called a \emph{necklace graph} if $G=C\odot\mathcal{B}_C$ for some directed cycle $C$ and 
some set of thread blocks $\mathcal{B}_C$ mergeable with $C$. 
\end{definition}

See Figure~\ref{fig:necklace} for an example of a necklace graph.
Let $\obn:=\obt\setminus \{C_h : h\ge 9\}.$ The main result of this section is the following.

\begin{theorem}\label{thm:mainlrw}
Every connected $\obn$-free graph is either a connected thread graph or a necklace graph whose underlying cycle has length at least $9$.
\end{theorem}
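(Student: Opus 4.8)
The plan is to start from a connected $\obn$-free graph $G$ and distinguish two cases according to whether $G$ contains a long induced cycle. First, if $G$ is $\obt$-free (i.e., it contains no induced $C_h$ for any $h\ge 5$), then by Theorem~\ref{thm:structurethread} $G$ has linear rankwidth at most $1$, hence is a connected thread graph, and we are done. So the interesting case is when $G$ has an induced cycle of length at least $5$. Since $G$ is $\obn$-free and $\obn$ contains $C_5,C_6,C_7,C_8$ (it only excludes the cycles $C_h$ with $h\ge 9$), any induced cycle in $G$ must in fact have length at least $9$. Fix such a longest induced cycle $C=c_1c_2\cdots c_\ell c_1$ with $\ell\ge 9$, which we view as a directed cycle. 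The goal is then to show that $G$ decomposes as $C\odot\mathcal{B}_C$ for a suitable family of thread blocks mergeable with $C$.

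**Key steps.** The core of the argument is a local-to-global analysis around $C$. First I would show that every consecutive arc $c_ic_{i+1}$ of $C$ "spans" a piece of $G$ that behaves like a thread block: define $G_i$ to be the subgraph induced on $c_i$, $c_{i+1}$, together with all vertices $v\notin V(C)$ whose neighborhood on $C$ lies inside $\{c_i,c_{i+1}\}$ plus the pendant-type vertices attached to $c_{i+1}$, mirroring the structure in Lemma~\ref{lem:structure}(3). The two things to establish are: (a) every vertex of $G$ lies in $V(C)$ or in exactly one such $G_i$ — equivalently, no vertex outside $C$ can have neighbors spread across more than two consecutive cycle vertices, and a vertex adjacent to $c_i,c_{i+1}$ cannot also be adjacent to $c_{i+2}$ or $c_{i-1}$, etc.; and (b) each $G_i$, with $c_i,c_{i+1}$ designated as first and last vertices, is a thread block, i.e.\ admits an ordering and $\{L\},\{R\},\{L,R\}$-labeling as in the definition. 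Step (a) is where the forbidden induced subgraphs do the work: a vertex with a "too wide" neighborhood on the long cycle, or two non-adjacent vertices attached to overlapping but distinct windows, or chords of $C$, each create one of the small obstructions in $\obn$ — a house, gem, domino, one of the 14 graphs of Figure~\ref{fig:obslrw1}, or a short induced cycle $C_5,\dots,C_8$ — because $\ell\ge 9$ gives enough "room" on the cycle to extract such a bounded-size induced configuration regardless of how large $\ell$ is. For step (b), once we know each $G_i$ has all its non-$\{c_i,c_{i+1}\}$ vertices attached only within the window $\{c_i,c_{i+1}\}$ and $G_i$ is itself $\obt$-free (being an induced subgraph of $G$ with no long induced cycle, since $C$ was chosen longest), Theorem~\ref{thm:structurethread} and Lemma~\ref{lem:structure} let us read off the thread-block structure of $G_i$ with the prescribed endpoints. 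Finally one checks the mergeability conditions — $V(G_i)\cap V(G_j)=\{c_i,c_{i+1}\}\cap\{c_j,c_{j+1}\}$ and $\bigcup V(G_i)=V(G)$, $\bigcup E(G_i)=E(G)$ — which follow from (a) together with connectivity, yielding $G=C\odot\mathcal{B}_C$, a necklace graph with underlying cycle of length $\ell\ge 9$.

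**Main obstacle.** The delicate part is step (a): ruling out, using only obstructions of bounded size (the largest non-cycle obstruction has a fixed small number of vertices, and the cycle obstructions are $C_5,\dots,C_8$), all the ways in which $G$ could fail to be "locally thread-like" along an arbitrarily long cycle $C$. Concretely, I expect to need a careful case analysis of (i) chords of $C$, (ii) vertices outside $C$ adjacent to three or more cycle vertices, and (iii) pairs of outside vertices whose attachment intervals on $C$ are adjacent or overlapping in forbidden ways; in each case the point is that the relevant configuration, restricted to a bounded neighborhood along $C$, already contains an induced subgraph from $\obn$ — and crucially this bounded-size extraction works because $\ell\ge 9$ is large enough that short chords/attachments cannot "wrap around" and accidentally close up into a long cycle that would escape $\obn$. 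Making this extraction uniform and exhaustive (matching each bad local pattern to a specific obstruction in $\{\text{house},\text{gem},\text{domino}\}\cup\{C_5,\dots,C_8\}\cup\{\text{14 graphs}\}$) is the heart of the proof and the source of most of the bookkeeping; everything else is then an assembly step via Theorem~\ref{thm:structurethread} and Lemma~\ref{lem:structure}.
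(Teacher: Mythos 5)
Your plan is a genuinely different route from the paper: you attempt a direct one-shot decomposition around a longest induced cycle, whereas the paper proceeds by induction on $|V(G)|$, removing a vertex off a \emph{shortest} induced cycle of length at least $9$, applying the inductive hypothesis to $G\setminus v$, and then using Lemma~\ref{lem:extendthread} to reattach $v$ into the necklace decomposition.

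However, the choice of a \emph{longest} induced cycle is where your argument breaks down. The crux of your step (a) is the claim that any misbehaving attachment to $C$ must exhibit a bounded-size obstruction from $\obn$. But if $C$ is merely longest, a vertex $v\notin V(C)$ adjacent to two cycle vertices $c_i,c_j$ that are far apart on $C$ closes up an induced cycle $v c_i c_{i+1}\cdots c_j v$ whose length is strictly less than $|C|$ yet can still be $\ge 9$ whenever $|C|$ is large (e.g.\ $|C|\ge 16$ and $j-i\approx |C|/2$). Such a cycle is \emph{not} in $\obn$, so no small forbidden configuration is produced, and $v$'s neighborhood on $C$ is genuinely spread across far-apart cycle vertices, killing your window argument. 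The paper avoids exactly this by picking $C$ shortest among induced cycles of length $\ge 9$: then any cycle formed through a detour vertex is strictly shorter than $C$, hence of length at most $8$, hence in $\obn$; and Lemma~\ref{lem:excludecycle8} supplies the needed diameter bound $\le |C|-3$ on the necklace $G\setminus v$ to make the detour short. Replacing ``longest'' by ``shortest $\ge 9$'' is essential, not cosmetic.

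Two smaller gaps also remain. First, step (a) as you phrased it does not rule out vertices at distance $\ge 2$ from $C$: a vertex with no neighbor on $C$ vacuously satisfies ``neighborhood on $C$ lies inside $\{c_i,c_{i+1}\}$'' for every $i$, so it is not assigned to a unique $G_i$; you need a separate argument that every vertex off $C$ has a neighbor on $C$ (in the paper this falls out of the inductive necklace structure of $G\setminus v$ via Lemma~\ref{lem:structure}(1)). Second, in step (b) it is not enough that each $G_i$ is a thread graph; you must show it is a \emph{single} thread block with $c_i$ and $c_{i+1}$ as first and last vertices, which requires controlling the cut-vertex structure of $G_i$ (this is precisely what the paper's Lemma~\ref{lem:extendthread} is engineered to handle when one vertex is reinserted).
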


Assuming that $G$ is not a  thread graph, to prove Theorem~\ref{thm:mainlrw}, we recursively find an underlying cycle $C$ of length $h\geqslant 9$ and a set of thread blocks $\mathcal{B}$ such that $G=C\odot \mathcal{B}$. At each recursion step, we assume that $G\setminus x=C\odot \mathcal{B}_{G\setminus x}$ and prove that 
\begin{enumerate}[(1)]
\item if $N_G(x)$ contains two vertices of distance at least $3$ in $G$, 
then $G$ contains a graph in $\obn$ or an induced cycle of length $\ell$ with $9\le \ell \le h-1$, 
\item if every pair of vertices of $N_G(x)$ are at distance at most $2$, 
then $N_G(x)$ is contained in the union of two consecutive thread blocks of $\mathcal{B}_{G\setminus x}$, 
and then either $G$ contains a graph in $\obn$ or it is a necklace graph with $C$ as the underlying cycle.
\end{enumerate} 

We need some preliminary lemmas.

\begin{lemma}\label{lem:excludecycle8}
Let $G$ be a necklace graph whose underlying cycle $C$ has length $h\ge 9$. The distance in $G$ between any pair of vertices is at most $h-3$.
\end{lemma}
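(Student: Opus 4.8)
The plan is to work directly with the thread decomposition $G = C \odot \mathcal{B}_C$ where $C = v_1 v_2 \cdots v_h v_1$ is the underlying directed cycle and $\mathcal{B}_C = \{B_i = B_{v_i v_{i+1}} : 1 \le i \le h\}$ (indices mod $h$). First I would record the basic ``locality'' fact: every vertex $u \in V(G)$ lies in some thread block $B_i$, and by the same argument as in Lemma~\ref{lem:structure}(1), $u$ is adjacent to at least one of $v_i, v_{i+1}$; in particular, the neighborhood of any vertex of $B_i$ in $\{v_1, \dots, v_h\}$ is contained in $\{v_i, v_{i+1}\}$, because $v_i$ and $v_{i+1}$ together with the ``long way round'' path $v_{i+1} v_{i+2} \cdots v_h v_1 \cdots v_i$ being the only connections means any edge from $u$ to $v_j$ with $j \notin \{i, i+1\}$ would force $B_i$ to share more than two vertices with the rest, contradicting mergeability. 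So, up to a shift, the distance from an arbitrary vertex $u$ to the ``backbone'' is controlled: $u$ is within distance $1$ of $\{v_i, v_{i+1}\}$ for a suitable $i$.

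The heart of the argument is then a statement purely about distances along the cycle $C$ inside $G$, namely: $\operatorname{dist}_G(v_i, v_j) = \min(|i-j|, h - |i-j|)$, i.e.\ the cyclic distance. The inequality $\le$ is immediate since $C$ itself gives a path of that length; for the reverse inequality I would argue that any path in $G$ from $v_i$ to $v_j$ can be ``projected'' onto the backbone without increasing its length: each time the path uses a vertex $u$ inside a block $B_\ell$, replace $u$ by $v_\ell$ or $v_{\ell+1}$, using that consecutive vertices on the path lie in the same or adjacent blocks. This shows the shortest $v_i$--$v_j$ path is essentially one of the two arcs of $C$, hence has length the cyclic distance, which is at most $\lfloor h/2 \rfloor$.

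To finish, take arbitrary $x, y \in V(G)$. Pick blocks so that $x$ is within distance $1$ of $\{v_a, v_{a+1}\}$ and $y$ is within distance $1$ of $\{v_b, v_{b+1}\}$. Then
\[
\operatorname{dist}_G(x,y) \;\le\; 1 + \operatorname{dist}_G\big(\{v_a, v_{a+1}\}, \{v_b, v_{b+1}\}\big) + 1 \;\le\; 2 + \Big(\Big\lfloor \frac{h}{2} \Big\rfloor - 1\Big) \;=\; \Big\lfloor \frac{h}{2} \Big\rfloor + 1,
\]
since the closest pair among $\{v_a, v_{a+1}\} \times \{v_b, v_{b+1}\}$ has cyclic distance at most $\lfloor h/2 \rfloor - 1$ (the four indices span two consecutive pairs, so one can always shave off one step). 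For $h \ge 9$ we have $\lfloor h/2 \rfloor + 1 \le h - 3$, which is the claimed bound. I would double-check the small/odd cases $h = 9, 10, 11$ by hand, as the floor and the $-1$ savings are where an off-by-one could slip in.

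The main obstacle I expect is making the ``projection onto the backbone'' step fully rigorous: one has to verify that when a shortest path leaves the backbone into a thread block and comes back, the detour costs at least as much as staying on the backbone would — this requires knowing that a vertex $u \in B_\ell$ not on $C$ has all its neighbors inside $B_\ell$ (hence within distance $1$ of both $v_\ell$ and $v_{\ell+1}$), which again follows from mergeability and the cut-vertex structure of $C \odot \mathcal{B}_C$ analogous to Lemma~\ref{lem:structure}(2)--(3). Once that locality is nailed down, the distance computation is routine arithmetic.
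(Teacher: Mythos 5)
Your overall approach matches the paper's: observe that the distance in $G$ between any two vertices of $C$ equals their cyclic distance on $C$, and that every vertex of $G$ is within distance $1$ of $C$; the bound then follows by the triangle inequality. The paper phrases the last step simply as: the diameter of $C_h$ is at most $h-5$ for $h\ge 9$, so adding $1+1$ for the two endpoints gives $h-3$.

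However, the final step of your argument has an off-by-one error. The inequality $\operatorname{dist}_G(x,y)\le 1+\operatorname{dist}_G(\{v_a,v_{a+1}\},\{v_b,v_{b+1}\})+1$ is not valid: the middle term is a minimum over all four pairs, while $x$ is adjacent to one \emph{specific} vertex of $\{v_a,v_{a+1}\}$ (and likewise for $y$), and those specific vertices need not realize that minimum. For a concrete counterexample to the claimed bound, take $h=9$, $x$ a pendant neighbor of $v_1$, and $y$ a pendant neighbor of $v_5$; then $\operatorname{dist}_G(x,y)=6$, which exceeds your $\lfloor h/2\rfloor+1=5$ (and in fact equals $h-3$, showing the lemma is tight at $h=9$). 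Dropping the ``shave off one step'' optimization gives the correct bound $\operatorname{dist}_G(x,y)\le \lfloor h/2\rfloor+2$, and since $\lfloor h/2\rfloor\le h-5$ for $h\ge 9$, this still yields $h-3$ as required.
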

\begin{proof}
Observe that when merging the (directed) cycle $C$ with a set of thread blocks $\mathcal{B}$ to obtain $G=C\odot \mathcal{B}$, the distance between any two vertices of $C$ in $C$ is the same as the distance in $G$ (viewed as an undirected cycle). As the distance between any pair of vertices in an induced cycle of length $h\ge 9$ is at most $h-5$, the statement follows from the fact that  every vertex in $G$ either belongs to $C$ or has a neighbor in $C$. 
\end{proof}

\begin{lemma}\label{lem:pathtoobs}
Let $h\ge 4$ be an integer.
Let $G$ be a graph and let $v\in V(G)$ such that $G\setminus v$ is an induced path $p_1p_2 \cdots p_h$, and $v$ is adjacent to both $p_1$ and $p_h$ in $G$.
Then $G$ contains an induced subgraph isomorphic to \house, \gem, \domino, or an induced cycle of length at least $5$.
\end{lemma}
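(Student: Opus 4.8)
# Proof Proposal for Lemma~\ref{lem:pathtoobs}

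\textbf{The plan} is to take the cycle $C = v p_1 p_2 \cdots p_h v$, which is a cycle of length $h+1 \ge 5$ in $G$, and argue by induction on $h$ that the presence of the chords of $C$ (edges from $v$ to interior path vertices $p_2, \ldots, p_{h-1}$) forces one of the listed obstructions. If $C$ has no chords at all, then $C$ itself is an induced cycle of length $h+1 \ge 5$ and we are done. So assume $v$ has at least one neighbor among $p_2, \ldots, p_{h-1}$; let $p_j$ be such a neighbor with $j$ chosen to make the analysis cleanest (for instance, the smallest such index, or an index minimizing $\min(j, h+1-j)$).

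\textbf{The inductive step} splits $C$ at the chord $vp_j$ into two shorter cycles sharing the edge $vp_j$: the ``left'' cycle $C_L = v p_1 \cdots p_j v$ of length $j+1$ and the ``right'' cycle $C_R = v p_j p_{j+1} \cdots p_h v$ of length $h-j+2$. Each of these is again of the form covered by the lemma — a vertex ($v$) adjacent to both endpoints of an induced path ($p_1 \cdots p_j$ respectively $p_j \cdots p_h$) — provided the relevant cycle has length at least $5$, i.e. the path has at least $4$ vertices, which is exactly the hypothesis $h \ge 4$ applied in the base case. When $j$ is an interior index strictly between $1$ and $h$, at least one of the two sub-paths has at least $\lceil (h+1)/2 \rceil \ge 3$ interior vertices; I would pick notation so that the longer side, say $C_L$ with $j \ge 4$, lets us recurse. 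The remaining small cases — where \emph{both} sides are short, i.e. $j \in \{2,3\}$ and $h - j \in \{1,2,3\}$, so $h \le 6$ — must be handled by direct inspection: here $G$ contains $v$ together with a path on at most $6$ vertices plus some chords from $v$, a bounded configuration, and one checks by hand that a house ($F_1$), gem ($F_2$), domino ($F_3$), or a $C_5$ appears. For instance, $v$ adjacent to $p_1, p_3, p_4$ with $h=4$ gives a house; $v$ adjacent to $p_1,\ldots,p_h$ with $h=4$ gives a gem on $\{v,p_1,p_2,p_3,p_4\}$; $v$ adjacent to $p_1, p_3$ only, $h=4$, gives two $4$-cycles but also one must look for a $C_5$ among $v,p_1,p_2,p_3,p_4$ — and indeed $v p_3 p_4$ together with... here the careful bookkeeping of which chords are present and absent is needed.

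\textbf{The main obstacle} is precisely this finite base-case analysis: after splitting off a chord, the two sub-cycles are no longer induced paths plus an apex in general position unless one is careful that $p_1 \cdots p_j$ remains induced (it does, as an induced subpath of the induced path $G \setminus v$) and that no \emph{other} chords from $v$ interfere with the recursion. The clean way to organize this is: if $v$ has a neighbor $p_j$ with $4 \le j$, recurse on $C_L$; symmetrically if $v$ has a neighbor $p_j$ with $h - j \ge 3$, recurse on $C_R$; and the only case where neither applies is when all neighbors of $v$ among interior vertices lie in $\{p_2, p_3\} \cap \{p_{h-2}, p_{h-1}\}$, forcing $h$ small, which leaves finitely many graphs to examine. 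I would present the recursion carefully to ensure each recursive instance genuinely satisfies the hypotheses (vertex adjacent to both endpoints of an \emph{induced} path on $\ge 4$ vertices), and then dispatch the handful of minimal cases by exhibiting the obstruction explicitly, possibly with a small figure or a short case table. An alternative to induction that I would keep in mind as a fallback: directly, let $p_a$ be the smallest-indexed and $p_b$ the largest-indexed interior neighbor of $v$; if $a = 1$... (it cannot be, $p_1$ is an endpoint) — rather, consider consecutive interior neighbors $p_a, p_b$ of $v$ with no neighbor of $v$ strictly between them; then $v p_a p_{a+1} \cdots p_b v$ is an induced cycle of length $b - a + 2$, which is $\ge 5$ unless $b - a \le 2$, and if every gap is $\le 2$ one gets a short cycle $C$ overall and again only finitely many cases. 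Either route reduces the lemma to a bounded check; the induction is cleaner to write up.
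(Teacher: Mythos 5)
Your primary plan — induction on $h$, splitting the cycle $C = vp_1\cdots p_hv$ at a chord $vp_j$ into $C_L$ and $C_R$ and recursing on whichever side still has a $\ge 4$-vertex path — is sound and genuinely different from the paper's argument. Your side conditions for recursion ($j\ge 4$ for $C_L$, $j\le h-3$ for $C_R$) are correct; when neither applies, any interior neighbor $p_j$ would have to satisfy both $j\le 3$ and $j\ge h-2$, which forces $h\le 5$ (not $h\le 6$ as you say in passing). The residual configurations are therefore only: $h=4$ with $N(v)\cap\{p_2,p_3\}$ arbitrary, and $h=5$ with $N(v)\cap\{p_2,p_3,p_4\}\subseteq\{p_3\}$ — a small, complete table. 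One of your illustrative cases is not actually a case: ``$v$ adjacent to $p_1,p_3$ only, $h=4$'' contradicts the hypothesis that $v$ is adjacent to $p_h=p_4$, so you would never have to analyze it; the valid $h=4$ cases give $C_5$ (no interior chord), house (exactly one of $p_2,p_3$), or gem (both), and the $h=5$ case with chord $vp_3$ gives a domino.

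The paper avoids induction entirely. It lists all neighbors $p_{i_1}=p_1 < p_{i_2} < \cdots < p_{i_t}=p_h$ of $v$ on the path and examines consecutive gaps $i_{j+1}-i_j$: a gap $\ge 3$ yields an induced cycle of length $\ge 5$; otherwise all gaps are $\le 2$, and one reads off the obstruction from a \emph{pair} of consecutive gaps — $(2,2)$ gives a domino, $(1,2)$ or $(2,1)$ a house, and if all gaps equal $1$ (so $v$ sees every $p_i$) the hypothesis $h\ge 4$ produces a gem. Your fallback sketch gets as far as ``every gap $\le 2$'' but then asserts this gives a ``short cycle $C$ overall,'' which is not right — $C$ stays long, it just acquires many chords — and it lacks the local pairs-of-gaps insight that lets the paper finish in one stroke with no finite-case appendix. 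So both routes work; the paper's local gap analysis is shorter and uniform in $h$, while your induction trades that for a self-similar recursion plus a bounded base-case check that still has to be written out carefully (and with the erroneous case removed).
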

\begin{proof}
Let $i_1=1<i_2<\cdots < i_t=h$ be the sequence of integers such that $p_{i_1}, \ldots, p_{i_t}$ are all neighbors of $v$ on $P$.
If $i_{j+1}-i_j\ge 3$ for some $1\le j\le t-1$, then $vp_{i_j}p_{i_j+1} \cdots p_{i_{j+1}}v$ is an induced cycle of length at least $5$.
We may assume that $i_{j+1}-i_j\le 2$ for all $1\le j\le t-1$. If $i_{j+2}-i_{j+1}=i_{j+1}-i_j=2$ for some $1\le j\le  t-2$, then 
$G[\{v, p_{i_j}, p_{i_j+1}, \ldots, p_{i_{j+2}}\}]$ is isomorphic to \domino. If one of $i_{j+2}-i_{j+1}$ and $i_{j+1}-i_j$ is $1$ and the other value is $2$ for some $1\le j\le  t-2$, then  $G[\{v, p_{i_j}, p_{i_j+1}, \ldots, p_{i_{j+2}}\}]$ is isomorphic to \house. So we may assume that $i_{j+1}-i_j=1$ for all $1\le j\le t-1$.
Since $h\ge 4$, $G$ has an induced subgraph isomorphic to \gem.
\end{proof}

\begin{lemma}\label{lem:extendthread}
Let $G$ be a connected thread graph and $v\in V(G)$ such that
$G\setminus v$ admits a canonical thread decomposition $G\setminus v=P\odot \cB_{P}$ where 
$P=v_1v_2v_3v_4v_5$ and
$\cB_{P}:=\{B_i=B_{v_iv_{i+1}}:1\le i\le 4\}$.
If $N_G(v)\subseteq V(B_2)\cup V(B_3)$, then there exists a set $\cB$ of thread blocks mergeable with $P$
such that
$P\odot \cB$ is a canonical thread decomposition of $G$.
\end{lemma}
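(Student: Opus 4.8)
The plan is to analyze where $v$ attaches within $B_2 \cup B_3$ and, in each case, either build an explicit canonical thread decomposition of $G$ or exhibit a forbidden induced subgraph (which cannot occur since $G$ is a thread graph). First I would set up notation: let $\sigma_2, \ell_2$ and $\sigma_3, \ell_3$ be the orderings and labelings of $B_2$ and $B_3$, with $v_2, v_3$ the first/last vertices of $B_2$ and $v_3, v_4$ those of $B_3$. The key point is that $N_G(v) \subseteq V(B_2) \cup V(B_3)$ and that the rest of $G$ (the blocks $B_1$ and $B_4$, plus the pendant structure at $v_2, v_3, v_4$) is "far" from $v$, so any long induced path or cycle through $v$ that also runs through $B_1$ or $B_4$ would already give an obstruction in $\obt$; this lets me reduce to understanding $G[V(B_2) \cup V(B_3) \cup \{v\}]$.

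The main structural step: I would argue that $v$'s neighborhood, restricted to $B_2 \cup B_3$, must itself look like the neighborhood of a vertex that can be inserted into a thread block, or else split into contributions that can be absorbed into $B_2$ and $B_3$ separately. Concretely, I expect three regimes. (a) $N_G(v) \subseteq V(B_2)$ (symmetrically $\subseteq V(B_3)$): then $v$ lives inside a single thread block, and since $B_2 \cup \{v\}$ is itself a (connected) induced subgraph of the thread graph $G$ with $v_2, v_3$ still acting as cut vertices of $G$, the block of $G$ on $V(B_2)\cup\{v\}$ together with the relevant pendant vertices is a thread block by Theorem~\ref{thm:structurethread} and Lemma~\ref{lem:structure}; I replace $B_2$ by it and keep $B_1, B_3, B_4$, then canonicalize via the pendant-rearrangement argument already invoked after the definition of thread graph. (b) $v$ is adjacent to vertices of both $B_2 \setminus \{v_3\}$ and $B_3 \setminus \{v_3\}$: here I claim $v_3 \in N_G(v)$ or else one quickly finds an induced path of length $\ge 3$ in $G \setminus v$ between a neighbor of $v$ in $B_2$ and one in $B_3$ whose endpoints are nonadjacent, and then Lemma~\ref{lem:pathtoobs} (applied to $v$ together with that path) produces a house, gem, domino, or long induced cycle — contradiction. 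Once $v_3 \in N_G(v)$, the merged block $B_2 \cup B_3 \cup \{v\}$ (minus the pendants at $v_4$, which stay) can be shown to be a single thread block with first vertex $v_2$ and last vertex $v_4$: I would exhibit its ordering by concatenating $\sigma_2$ and $\sigma_3$ and inserting $v$ just before/after $v_3$, and its labeling by combining $\ell_2, \ell_3$ and reading off $\ell(v)$ from which sides of $v_3$ the neighbors of $v$ sit on. Then $\cB = \{B_1, B_2 \cup B_3 \cup \{v\}, B_4\}$ works (re-indexing the two-block middle as one block between $v_1 v_2$ and $v_2 v_4$... — more precisely the underlying path contracts to $v_1 v_2 v_4 v_5$), after canonicalizing.

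For the explicit block-building in cases (a) and (b), the routine verification is: the candidate ordering $\sigma$ and labeling $\ell$ satisfy conditions (1)–(3) of a canonical thread block, i.e. adjacency in $G$ matches the $R \in \ell(\cdot)$, $L \in \ell(\cdot)$ rule. This reduces to checking: (i) $v$'s adjacencies to old vertices are exactly predicted by $\ell(v)$ and the existing labels — using that in a thread block a vertex $w$ with $R \in \ell(w)$ is adjacent to everything after it having $L$ in its label — and (ii) the old adjacencies are unchanged, which is immediate since $G[V(B_i)] = B_i$. The adjacencies of $v$ to $v_2, v_3, v_4$ in particular pin down $\ell(v)$, and the hypothesis $N_G(v) \subseteq V(B_2)\cup V(B_3)$ guarantees $v$ has no neighbor outside, so no $L$ or $R$ label on $v$ forces a spurious edge to $B_1$ or $B_4$.

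The hardest part, I expect, is case (b) — showing that $v_3$ must be a neighbor of $v$ and that the two middle blocks genuinely merge into one canonical thread block, rather than producing an obstruction. The subtlety is handling the possibility that $v$ sees $v_2$ but not $v_3$, or sees interior vertices of $B_3$ labeled $\{R\}$ (pendant-like) that are far from $v_2$; I would handle this by a careful case split on $\ell(v) \ni L$ vs. $\ell(v) \ni R$ relative to the two blocks and repeatedly invoking Lemma~\ref{lem:pathtoobs} (and the $C_k$, house, gem, domino obstructions) to kill the bad configurations, exactly mirroring the distance-based argument the paper uses elsewhere. Everything else is bookkeeping with the thread-block definition.
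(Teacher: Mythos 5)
Your proposal correctly identifies the central tension (where can the neighbors of $v$ lie inside $B_2\cup B_3$), and your case (a) plan — swap one middle block for a new one built from the block of $G$ containing $v_i, v_{i+1}$ plus the appropriate pendants, invoking Lemma~\ref{lem:structure} — is essentially what the paper does. But case (b) contains a genuine gap: you propose to merge $B_2$, $B_3$, and $v$ into a single thread block with first vertex $v_2$ and last vertex $v_4$, so that the underlying path ``contracts to $v_1v_2v_4v_5$.'' That output does not satisfy the lemma's conclusion. The lemma demands a set $\cB$ mergeable with the original path $P=v_1v_2v_3v_4v_5$, which forces $v_3$ to be an endpoint of two blocks of $\cB$ and hence (by Lemma~\ref{lem:structure}(2)) a cut vertex of $G$. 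A three-block decomposition over $v_1v_2v_4v_5$ is simply not what is being asked for, and no amount of ``canonicalizing'' turns it into one.

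The missing idea — which is exactly the crux of the paper's proof — is to show that $v_3$ remains a cut vertex of $G$ (not only of $G\setminus v$). The paper argues: if $v_3$ were not a cut vertex, take a shortest $v_2$--$v_4$ path $Q$ in $G\setminus v_3$; if $Q$ has length~$2$, then $V(Q)\cup\{v_1,\dots,v_5\}$ induces $\alpha_1$ or $\alpha_4$; if $Q$ has length~$\ge 3$, Lemma~\ref{lem:pathtoobs} applied to $Q$ together with $v_3$ yields an obstruction in $\Omega_T$ — contradicting that $G$ is a thread graph. Once $v_2,v_3,v_4$ are known to be cut vertices of $G$, Lemma~\ref{lem:structure}(2) places them on the underlying path $P'$ of the canonical decomposition of $G$, and Lemma~\ref{lem:structure}(3) then pins down the middle blocks $B_2'$ and $B_3'$ (with a small reflection of pendant vertices if $P'$ orients $v_4v_3v_2$ rather than $v_2v_3v_4$). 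In particular your case~(b) can never occur in a form requiring a merge across $v_3$: any such attachment pattern would create a $v_2$--$v_4$ path avoiding $v_3$ and hence an obstruction. Your proposal needs this cut-vertex argument to close the gap; without it, the case-(b) construction is not a thread decomposition over $P$ and the proof fails.
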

\begin{proof}
Since $G$ is a connected thread graph,
$G$ admits a canonical thread decomposition $P'\odot \cB_{P'}$ with a directed path $P'$ and a mergeable set of thread blocks $\cB_{P'}$. We first show that $v_2, v_3, v_4$ should be contained in $P'$.
By Lemma~\ref{lem:structure}, it is enough to show that they are cut vertices of $G$.

Note that $v_2$ and $v_4$ are still cut vertices of $G$ as $N_G(v)\subseteq V(B_2)\cup V(B_3)$.
We claim that $v_3$ is also a cut vertex of $G$.
Suppose that $v_3$ is not a cut vertex of $G$. This implies that there is a path from $v_2$ to $v_4$ in $G\setminus v_3$.
We take a shortest path $Q$ among such paths.
If it has length $2$, then $V(Q)\cup \{v_1,v_2, v_3, v_4, v_5\}$ induces a subgraph isomorphic to $\alpha_1$ or $\alpha_4$ depending on the adjacency between $v_3$ and the middle vertex of $Q$.
If it has length at least $3$, then by Lemma~\ref{lem:pathtoobs}, $G[V(Q)\cup \{v_3\}]$ contains an induced subgraph in $\obt$, which contradicts to our assumption.
Thus, $v_3$ is also a cut vertex of $G$.

Since $v_2, v_3, v_4$ are cut vertices of $G$, by (2) of Lemma~\ref{lem:structure}, we have $v_2, v_3, v_4\in V(P')$. 
If $v_2v_3v_4$ is a directed path in $P'$, then 
for each $i\in \{2, 3\}$, we take the subgraph $B_i'$ of $G$ induced on the union of the set of all pendant vertices adjacent to $v_{i+1}$ and the vertex set of the block of $G$ containing $v_{i}, v_{i+1}$.
By (3) of Lemma~\ref{lem:structure}, $B_2'$ and $B_3'$ are the canonical thread blocks listed in $\cB_{P'}$. 
Now, we assume that $v_4v_3v_2$ is a directed path in $P'$.
In this case, for each $i\in \{2,3\}$, 
the subgraph $B_i''$ of $G$ induced on the union of the set of all pendant vertices adjacent to $v_{i}$  and the vertex set of the block of $G$ containing $v_{i}, v_{i+1}$
is a canonical thread block listed in $\cB_{P'}$.
Then we obtain $B_i'$ from $B_i''$ by removing all pendant vertices adjacent to $v_i$ and adding all pendant vertices adjacent to $v_{i+1}$.
It is not hard to observe that $B_i'$ is a canonical thread block whose first and last vertices are $v_i, v_{i+1}$, respectively; we can take a reverse ordering for vertices in the block containing $v_i, v_{i+1}$, and replace each label $\{L\}$ with $\{R\}$ and each label $\{R\}$ with $\{L\}$.

Now, we set $\cB:= \{B_1, B_2', B_3', B_4\}$. We claim that $G=P\odot \cB$. 
As each thread block $B_i$ or $B_i'$ in $\cB$ satisfies that $\partial_{G}(B_i)=\{v_i, v_{i+1}\}$ or $\partial_{G}(B_i')=\{v_i, v_{i+1}\}$, 
we have that 
\begin{itemize}
\item $V(B_1)\cap V(B_2')=\{v_2\}, V(B_2')\cap V(B_3')=\{v_3\}, V(B_3')\cap V(B_4)=\{v_4\}$, and
\item $V(B_1)\cap V(B_3')=V(B_1)\cap V(B_4)=V(B_2')\cap V(B_4)=\emptyset$.
\end{itemize}
As $P\odot \cB_{P}$ is a canonical thread decomposition of $G\setminus v$ and $v\in V(B_2')\cup V(B_3')$, by (3) of Lemma~\ref{lem:structure}, 
every vertex of $G$ is either a pendant vertex adjacent to one of $v_2, \ldots, v_5$, 
or
contained in a block containing two consecutive vertices $v_i, v_{i+1}$.
It implies that $V(G)\subseteq \bigcup \{V(B_1), V(B_2'), V(B_3'), V(B_4)\}$, and we have
\begin{itemize}
\item $V(G)=\bigcup \{V(B_1), V(B_2'), V(B_3'), V(B_4)\}$ and $E(G)=\bigcup \{E(B_1), E(B_2'), E(B_3'), E(B_4)\}$.
\end{itemize}
Therefore, we conclude that $G=P\odot \cB$.
\end{proof}

\begin{proof}[Proof of Theorem~\ref{thm:mainlrw}]
Let $G$ be a connected $\obn$-free graph and suppose that $G$ is not a thread graph. By Theorem~\ref{thm:structurethread}, $G$ contains an induced cycle of length at least $9$. Let $C$ be a shortest cycle among induced cycles of length at least $9$ in $G$. For convenience, let $v_{h+1}:=v_1$. We regard $C$ as a directed cycle on vertex set $\{v_1,\dots, v_{h}\}$ (with $h\ge 9$), where for each $1\le j\le h$, $v_jv_{j+1}$ is an arc. 

We prove by induction on $\abs{V(G)}$ that $G$ is a necklace graph whose underlying cycle is $C$. We may assume that $\abs{V(G)}>\abs{V(C)}$. 
Clearly, $G\setminus v$ is again $\obn$-free graph and
$C$ is a shortest cycle among induced cycles of length at least $9$ in $G\setminus v$.  By the induction hypothesis, 
$G\setminus v$ is a necklace graph whose underlying cycle is $C$ and thereby has a canonical thread decomposition $G\setminus v=C\odot\mathcal{B}_{G\setminus v}$ where $\mathcal{B}_{G\setminus v}=\{B_i=B_{v_iv_{i+1}}:1\le i\le h\}$.

We claim that all vertices of $N_G(v)$ have pairwise distance at most $2$ in $G\setminus v$.
Suppose that there are two vertices $x,y\in N_G(v)$ that have distance at least $3$ in $G\setminus v$. 
By Lemma~\ref{lem:excludecycle8}, the distance in $G\setminus v$ between $x$ and $y$ is at most $h-3$. 
We take a shortest path $Q$ from $x$ to $y$ in $G\setminus v$.
Then by Lemma~\ref{lem:pathtoobs}, 
$G[V(Q)\cup \{v\}]$ contains an induced subgraph isomorphic to either \house, \gem, \domino, or an induced cycle $C_v$ of length $h'$ with $5\le h' \le h-1$.
As $C$ was selected as a smallest induced cycle of length $h\ge 9$, if there is an induced cycle $C_v$, then it has length at most $8$. This contradicts the assumption that $G$ is $\obn$-free.

Thus, all vertices of $N_G(v)$ have pairwise distance at most $2$ in $G\setminus v$. It implies that $N_G(v)\subseteq B_i\cup B_{i+1}$ for some $1\le i\le h$.  Let us consider $H=G[B_{i-1}\cup B_i\cup B_{i+1}\cup B_{i+2}\cup \{v\}]$. Observe that $H\setminus v$ is a thread graph. The facts that $G$ is $\Omega_N$-free and that $v$ does not belong to a cycle of $G$ of length $5$ or more (see paragraph above), imply that $H$ is $\Omega_T$-free and thereby a thread graph (Theorem~\ref{thm:structurethread}). By construction $H\setminus v$ is a thread graph with a canonical thread decomposition $P\odot\cB$ where $\cB=\{B_{i-1}, B_i, B_{i+1}, B_{i+2}\}$ and $P$ is the directed path $v_{i-1}v_i v_{i+1}v_{i+2}v_{i+3}$. So we apply Lemma~\ref{lem:extendthread} on $H$ and $H\setminus v$. Let $P\odot\mathcal{B}'$ be the resulting canonical thread decomposition of $H$. Then $C\odot (B_{G\setminus v}\setminus \cB\cup \cB')$ is a canonical thread decomposition of $G$.
\end{proof}

\section{FPT algorithms for \LRWD}\label{sec:fptthreaddel}

In this section, we give two single-exponential FPT algorithms for the \LRWD problem.
Both FPT algorithms are based on branching algorithms that reduce an input graph to a $\obn$-free graph.
We first show that one can easily find a minimum \DEL\ in a $\obn$-free graph.
It implies that there is an FPT algorithm with running time $8^k\cdot \mathcal{O}(n^8)$ with a simple branching algorithm.
Next, we provide a way to reduce the polynomial factor $n^8$ into $n^4$ using an algorithm to find a fixed subgraph on graphs of bounded cliquewidth.

We prove the following.
\begin{proposition}\label{prop:nltothread2}
Given a $\obn$-free graph $G$, we can compute a minimum LRW1-deletion set of $G$ in time $\mathcal{O}(n(n+m))$.
\end{proposition}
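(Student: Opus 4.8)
The plan is to use Theorem~\ref{thm:mainlrw}: since $G$ is $\obn$-free, every connected component of $G$ is either a connected thread graph or a necklace graph whose underlying cycle has length at least $9$. Deletions in one component do not affect the others, and a disjoint union of thread graphs is a thread graph (this is part of the definition of thread graph), so a minimum \DEL\ of $G$ is obtained by taking, in each component, a minimum \DEL. A thread-graph component needs no deletions, so the whole task reduces to computing a minimum \DEL\ of each necklace-graph component.

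The key structural fact I would establish first is: \emph{if $H=C\odot\cB_C$ is a necklace graph whose underlying cycle $C=v_1\cdots v_hv_1$ has length $h\ge 9$, then $H\setminus v_i$ is a thread graph for every $i$.} The point is that deleting $v_i$ turns the two arcs incident with $v_i$ off, so $C\setminus v_i=v_{i+1}v_{i+2}\cdots v_{i-1}$ becomes a directed path $P$, and the blocks $B_j$ with $j\notin\{i-1,i\}$ are already a set of thread blocks mergeable with $P$; it remains to deal with what survives of $B_{i-1}$ and $B_i$. Since in a canonical thread block the first vertex has no pendant neighbour, deleting the first vertex $v_i$ from $B_i$ leaves (after harmlessly relabelling its new first vertex, whose label is forced to be $\{R\}$ or $\{L,R\}$, to $\{R\}$) a thread block with last vertex $v_{i+1}$; deleting the last vertex $v_i$ from $B_{i-1}$ leaves a thread block with first vertex $v_{i-1}$ together with a (possibly empty) set of isolated vertices, namely the former pendant neighbours of $v_i$. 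Prepending the first of these blocks to $P$ at $v_{i+1}$ and appending the second at $v_{i-1}$ yields a thread decomposition of $H\setminus v_i$, the isolated vertices forming trivial components; hence $H\setminus v_i$ is a thread graph. (Alternatively: $H\setminus v_i$ is $\obn$-free as an induced subgraph of $H$, and a short case analysis on how an induced cycle can meet the backbone path $v_{i+1}\cdots v_{i-1}$ shows it contains no induced cycle of length at least $5$ outside a single thread block — and thread blocks contain none either — so $H\setminus v_i$ is $\obt$-free and we invoke Theorem~\ref{thm:structurethread}.) I expect this claim to be the main obstacle; the remaining ingredients are routine.

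Two consequences follow. First, a necklace-graph component $H$ is not a thread graph: its underlying cycle is an induced $C_h$ with $h\ge9$, and $C_h\in\obt$, so any \DEL\ of $G$ must contain at least one vertex of $H$ (indeed of its underlying cycle), and since the components are vertex-disjoint, every \DEL\ of $G$ has size at least the number $r$ of necklace-graph components. Second, by the structural claim, deleting any vertex of the underlying cycle turns $H$ into a thread graph, so a minimum \DEL\ of $H$ has size exactly $1$, and the bound $r$ is attained by taking one vertex from each necklace-graph component.

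This yields the algorithm. Compute the connected components in $\mathcal{O}(n+m)$ time. For each component $H$, test in $\mathcal{O}(\abs{V(H)}+\abs{E(H)})$ time whether $H$ is a thread graph, using linear-time recognition of linear rankwidth at most $1$ (Theorem~\ref{thm:charlrw1split}); if so, it contributes nothing. Otherwise $H$ is a necklace graph, and we iterate over $v\in V(H)$, testing for each (again in $\mathcal{O}(\abs{V(H)}+\abs{E(H)})$ time) whether $H\setminus v$ is a thread graph; by the structural claim some vertex passes, and we add the first such $v$ to the solution. The total running time is $\sum_H \mathcal{O}(\abs{V(H)})\cdot \mathcal{O}(\abs{V(H)}+\abs{E(H)})\le n\cdot\sum_H\mathcal{O}(\abs{V(H)}+\abs{E(H)})=\mathcal{O}(n(n+m))$, and by the two consequences above the output is a minimum \DEL\ of $G$.
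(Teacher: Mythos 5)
Your proposal follows essentially the same approach as the paper: Theorem~\ref{thm:mainlrw} to classify the components, the key structural claim (which the paper proves separately as Lemma~\ref{lem:nltothread}) that deleting any vertex of the underlying cycle of a necklace graph yields a thread graph, the observation that each necklace component forces at least one deletion and that one suffices, and the same linear-time-per-candidate-vertex algorithm giving $\mathcal{O}(n(n+m))$ overall. Your inline proof of the structural claim is correct and is close in spirit to the paper's (the paper likewise handles the first/last thread blocks separately, handles degenerate two-vertex blocks, relabels the new endpoint, and treats former pendant neighbours of the deleted vertex as isolated vertices).
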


We observe that every necklace graph can be turned into a thread graph by removing a vertex on the underlying cycle.

\begin{lemma} \label{lem:nltothread}
Let $G$ be a connected necklace graph with the underlying directed cycle $C$. For each $v\in V(C)$, $G\setminus v$ is a thread graph.
\end{lemma}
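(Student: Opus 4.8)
The plan is to use the definition of a necklace graph directly: write $G = C \odot \mathcal{B}_C$ where $C = v_1 v_2 \cdots v_h v_1$ is the underlying directed cycle and $\mathcal{B}_C = \{B_i = B_{v_i v_{i+1}} : 1 \le i \le h\}$ is a set of thread blocks mergeable with $C$ (with indices taken modulo $h$, so $v_{h+1} = v_1$). Fix a vertex $v = v_j$ on $C$. The idea is that deleting $v_j$ ``breaks'' the cycle at that point, and what remains should be a thread graph whose underlying path is obtained by cutting $C$ at $v_j$.

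\medskip

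First I would describe $G \setminus v_j$ explicitly. Deleting $v_j$ affects only the two thread blocks $B_{j-1} = B_{v_{j-1} v_j}$ and $B_j = B_{v_j v_{j+1}}$, since by the mergeability condition every other block $B_i$ with $i \notin \{j-1, j\}$ has $V(B_i) \cap V(B_{j-1}) \subseteq \{v_{j-1}\}$ and $V(B_i) \cap V(B_j) \subseteq \{v_{j+1}\}$, hence $v_j \notin V(B_i)$. So $G \setminus v_j$ is obtained from the blocks $B_{j+1}, B_{j+2}, \ldots, B_{j-2}$ (a path-indexed family of $h-2$ blocks strung along the directed path $v_{j+1} v_{j+2} \cdots v_{j-1}$) together with whatever survives of $B_{j-1}$ and $B_j$ after removing $v_j$. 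The key observation is that $B_{j-1} \setminus v_j$ and $B_j \setminus v_j$ are each still thread blocks — or, more precisely, that the remaining vertices of $B_{j-1}$ and $B_j$ can be absorbed into the picture so that $G \setminus v_j$ is a thread graph.

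\medskip

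More carefully: in $B_j$ with its ordering $\sigma$ and labelling $\ell$, the vertex $v_j$ is the first vertex, so $\ell(v_j) = \{R\}$. A vertex $w \in V(B_j) \setminus \{v_j, v_{j+1}\}$ is adjacent to $v_j$ iff $L \in \ell(w)$. Such a vertex $w$ with $\ell(w) = \{L\}$ may become disconnected from the rest after removing $v_j$, but note that by construction the $B_j$ in a necklace graph need not be canonical; regardless, a vertex $w$ with $\ell(w) = \{L\}$ in $B_j$ that has no neighbour other than $v_j$ simply becomes an isolated vertex — which still lies in a (trivial/one-vertex) thread graph component, and the definition of thread graph allows disconnected graphs whose components are each thread graphs. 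Meanwhile the vertices of $B_j$ with $\ell(w) = \{L,R\}$ or $\ell(w) = \{R\}$ remain attached to $v_{j+1}$ (or to later vertices), so $B_j \setminus v_j$ restricted to those, together with $v_{j+1}$, forms a thread block with first vertex $v_{j+1}$ — essentially $B_j \setminus v_j$ is again realizable by the restricted ordering and labelling, once the now-pendant or now-isolated $\{L\}$-vertices are handled. Symmetrically for $B_{j-1}$, where $v_j$ is the last vertex. Assembling: the directed path $P = v_{j+1} v_{j+2} \cdots v_{j-1}$ together with the block family $\{B_j \setminus v_j, B_{j+1}, \ldots, B_{j-2}, B_{j-1} \setminus v_j\}$ (reading $B_j \setminus v_j$ as the block on the arc into $v_{j+1}$ from a new source, or more cleanly: merge the leftover of $B_j$ onto the first vertex $v_{j+1}$ and the leftover of $B_{j-1}$ onto the last vertex $v_{j-1}$) exhibits the main component of $G \setminus v_j$ as $P \odot \mathcal{B}_P$, and the stray isolated $\{L\}$- or $\{R\}$-vertices form singleton thread-graph components. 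Hence $G \setminus v_j$ is a thread graph.

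\medskip

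The main obstacle I anticipate is the bookkeeping around the ``leftover'' vertices of $B_{j-1}$ and $B_j$: a vertex labelled $\{L\}$ in $B_j$ whose only neighbour was $v_j$, or a vertex labelled $\{L,R\}$ that is adjacent to $v_j$ and to $v_{j+1}$ — these need to be re-placed in the thread decomposition of $G\setminus v_j$ and one must check they do not create a chord or an induced obstruction. The cleanest route is probably not to manipulate orderings and labellings by hand at all, but to argue via Theorem~\ref{thm:structurethread}: show directly that $G \setminus v_j$ is $\obt$-free. Any induced obstruction in $\obt$ (a house, gem, domino, long induced cycle, or one of the $14$ distance-hereditary obstructions) living in $G \setminus v_j \subseteq G$ would be an induced obstruction in $G$ of the same kind; but $G$ is a necklace graph, so its only induced cycles of length $\ge 5$ run ``around'' $C$ and must use all of $V(C)$ — in particular they must use $v_j$ — so they cannot survive in $G \setminus v_j$, and $G$ being a necklace graph is already $(\obt \setminus \{C_h : h \ge 5\})$-free (locally it is a thread graph, which contains no house/gem/domino/small DH-obstruction, and any such small obstruction is confined to a few consecutive blocks, which form a genuine thread graph). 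Removing $v_j$ destroys every long induced cycle of $G$ without creating any new obstruction, so $G \setminus v_j$ is $\obt$-free and therefore a thread graph. I would lean on this obstruction-based argument as the primary proof and keep the explicit decomposition only as intuition.
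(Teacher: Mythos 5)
Your first sketch (trimming $B_{j-1}$ and $B_j$, reattaching the leftovers along the path $v_{j+1}\cdots v_{j-1}$) is the approach the paper actually takes; the paper makes it rigorous by first discarding the pendant vertices at $v_j$ (they become isolated, which is harmless), then explicitly defining new orderings/labelings for $B_j\setminus v_j$ and $B_{j-1}\setminus v_j$ with $\sigma_j^{-1}(2)$ (resp.\ $\sigma_{j-1}^{-1}(|V(B_{j-1})|-1)$) as the new endpoint, and splitting into four cases according to whether either trimmed block degenerates to a single vertex. So the bookkeeping you worry about is exactly what has to be done, and it is quite manageable; you should just carry it out rather than retreating from it.

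The obstruction-based argument you propose to rely on instead has a genuine gap. You invoke two facts as if they were immediate: that a necklace graph contains no small obstruction from $\obt$ (house, gem, domino, $C_5,\dots,C_8$, the $14$ DH-obstructions), and that every induced cycle of length $\ge 5$ in a necklace graph uses every vertex of $C$. Neither is established. For the first, "confined to a few consecutive blocks, which form a genuine thread graph" needs an argument: you must show that any connected subgraph on $\le 8$ vertices has its vertex set inside a \emph{proper} arc of consecutive blocks, and also that a proper arc of blocks of a necklace graph induces a thread graph; moreover the lemma as stated places no lower bound on $|V(C)|$, and for small $h$ the "proper arc" claim can fail outright. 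For the second, "must use all of $V(C)$" is essentially equivalent to the conclusion you are trying to prove (if $G\setminus v_j$ is a thread graph, then certainly no long cycle avoids $v_j$; conversely that statement is the entire content of the cycle case). Asserting it without an independent proof makes the argument circular. The independent proof would trace which blocks a putative cycle visits and argue, from mergeability, that consecutive visited blocks differ by at most one index and that the cycle winds once around $C$ — at which point you have done at least as much work as the direct decomposition. The constructive route is both shorter and the one that actually closes the proof.
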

\begin{proof}
Let $C$ be the directed cycle  on vertex set $\{v_1,\dots, v_{h}\}$, where for each $1\le j\le h$, $v_jv_{j+1}$ is an arc. 
and let $\cB_C=\{B_i : v_iv_{i+1}\in E(C)\}$ be the set of thread blocks mergeable with $C$ such that
$G=C\odot \cB_C$.
Because of the symmetry, it is sufficient to show that $G\setminus v_1$ is a thread graph. 
As the pendent vertices of $G$ that are adjacent to $v_1$ become isolated vertices after removing $v_1$, 
we may assume that $G$ has no pendent vertices adjacent to $v_1$. 
For each $1\le i\le h$, let $\sigma_i$ and $\ell_i$ be the ordering and labeling of $G_i$, respectively.

Suppose that $V(B_h)\setminus \{v_1\}\neq \{v_{h}\}$. 
Since $B_h$ has no pendant vertices adjacent to $v_1$, 
the vertex $\sigma^{-1}(\abs{V(B_h)}-1)$ is labelled either $\{L,R\}$ or $\{L\}$.
Let $\sigma_h'$ be the restriction of $\sigma_h$ on $V(B_h)\setminus \{v_1\}$, and
for each $x\in V(B_h)\setminus \{v_1\}$, let 
\begin{displaymath}
\ell_{h}'(x) = \left\{ \begin{array}{ll}
\{L\} & \textrm{if $x=\sigma^{-1}_h(\abs{V(B_h)}-1)$}\\
\ell_{h}(x)  & \textrm{otherwise.}
\end{array} \right.
\end{displaymath}
It is easy to check that $B_h\setminus v_1$ 
is a thread block with the ordering $\sigma_h'$ and labeling $\ell_{h}'$.
If $V(B_h)\setminus \{v_1\}= \{v_{h}\}$, 
then we can regard $B_{h-1}$ as the last thread block of $G\setminus v_1$.

Similarly, 
if $V(B_1)\setminus \{v_1\}=\{v_{2}\}$, 
then we can regard $B_2$ as the first thread block of $G\setminus v_1$.
Otherwise, we regard $B_1\setminus v_1$ as the first thread block.

Let $y:=\sigma^{-1}_1(2)$ and $z:= \sigma^{-1}_h(\abs{V(B_h)}-1)$.
We conclude that
$G\setminus v_1$ is a thread graph with the underlying directed path $P$ 
where
\begin{displaymath}
P = \left\{ \begin{array}{ll}
v_2v_3 \cdots v_{h}  \, \,
\qquad \textrm{if $V(B_h)\setminus \{v_1\}= \{v_{h}\}$ and $V(B_1)\setminus \{v_1\}= \{v_{2}\}$,}\\
yv_2v_3 \cdots v_{h}  
\qquad \textrm{if $V(B_h)\setminus \{v_1\}= \{v_{h}\}$ and $V(B_1)\setminus \{v_1\}\neq \{v_{2}\}$,}\\
v_2v_3 \cdots v_{h}z  
\qquad \textrm{if $V(B_h)\setminus \{v_1\}\neq \{v_{h}\}$ and $V(B_1)\setminus \{v_1\}= \{v_{2}\}$,}\\
yv_2v_3 \cdots v_{h}z  \qquad \textrm{otherwise.} 
\end{array}\right. \qedhere
\end{displaymath}
\end{proof}

\begin{proof}[Proof of Proposition~\ref{prop:nltothread2}]
Let $G$ be a $\obn$-free graph and let $k$ be the minimum size of a LRW1-deletion set of $G$.
By Theorem~\ref{thm:mainlrw},  each connected component of $G$ is either a thread graph or a necklace graph.
For each component $H$ of $G$, we can test whether $H$ is a thread graph or not in time $\mathcal{O}(\abs{V(H)}+\abs{E(H)})$ using Theorem~\ref{thm:charlrw1split}. 
By Lemma~\ref{lem:nltothread}, it is enough to remove exactly one vertex to make  each necklace component a thread graph.
Thus, $k$ is equal to the number of its necklace components. 
Moreover, in each necklace component $H$, we can identify a vertex $v$ on the underlying cycle by testing whether $H\setminus v$ is a thread graph for every vertex $v$ in $H$.  
It takes a $\mathcal{O}(\abs{V(H)} (\abs{V(H)}+\abs{E(H)}))$ time.
Therefore, we can find a minimum LRW1-deletion set of $G$ in time $\mathcal{O}(\abs{V(G)}(\abs{V(G)}+\abs{E(G)}))$.
\end{proof}

We give an FPT algorithm using a simple branching algorithm.

\begin{theorem}\label{thm:algorithm1}
The \LRWD problem can be solved in time $8^k\cdot \mathcal{O}(n^{8})$.
\end{theorem}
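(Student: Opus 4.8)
The plan is to design a bounded-depth branching algorithm that repeatedly destroys a small obstruction and then solves the residual instance exactly. First I would record that $\obn=\obt\setminus\{C_h:h\ge 9\}$ is a \emph{finite} set of graphs, each on at most $8$ vertices: it consists of the house, the gem, the domino, the cycles $C_5,\dots,C_8$, and the $14$ graphs of Figure~\ref{fig:obslrw1}. Consequently, given any $n$-vertex graph $H$, we can decide in time $\mathcal{O}(n^8)$ whether $H$ contains an induced subgraph isomorphic to some member of $\obn$, and output one if it exists, by brute-force enumeration of all vertex subsets of size at most $8$ (there are only constantly many candidate patterns to test against each such subset).

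The algorithm on input $(G,k)$ proceeds as follows. If $G$ is $\obn$-free, then by Theorem~\ref{thm:mainlrw} every connected component of $G$ is a thread graph or a necklace graph, so by Proposition~\ref{prop:nltothread2} we compute a minimum \DEL{} $S$ of $G$ in time $\mathcal{O}(n(n+m))$ and answer \YES{} if and only if $\abs{S}\le k$. Otherwise, we have located an induced copy of some $F\in\obn$ on a vertex set $X$ with $\abs{X}\le 8$. If $k=0$ we answer \NO{}; otherwise, for each $x\in X$ we recurse on $(G\setminus x,\,k-1)$, and we answer \YES{} exactly when at least one branch does.

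For correctness, the branching step is safe: if $S$ is a \DEL{} of $G$, then $G\setminus S$ is a thread graph, hence $\obt$-free by Theorem~\ref{thm:structurethread}, hence $\obn$-free since $\obn\subseteq\obt$; in particular $S$ must contain at least one vertex of $X$. Combined with the (optimal) treatment of the $\obn$-free base case via Theorem~\ref{thm:mainlrw} and Proposition~\ref{prop:nltothread2}, this shows the algorithm is correct. For the running time, the recursion tree has depth at most $k$ and branching factor at most $8$, so it has $\mathcal{O}(8^k)$ nodes; at each node we spend $\mathcal{O}(n^8)$ time, dominated by the obstruction search (the base-case computation costs only $\mathcal{O}(n(n+m))$). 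This yields the claimed bound $8^k\cdot\mathcal{O}(n^8)$.

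The point worth emphasizing is that at this stage there is essentially no remaining obstacle, precisely because the genuinely hard issue---long induced cycles $C_h$ with $h\ge 9$ cannot be destroyed by a branching rule of bounded depth and bounded degree---has already been resolved by the necklace-graph structure theorem (Theorem~\ref{thm:mainlrw}) together with Proposition~\ref{prop:nltothread2}. Branching is therefore needed only to eliminate the finitely many \emph{small} obstructions in $\obn$, after which the residual instance is handled in polynomial time. The only mild care required is to confirm that each graph in $\obn$ indeed has at most $8$ vertices, which is what simultaneously bounds the obstruction search by $\mathcal{O}(n^8)$ and the branching factor by $8$.
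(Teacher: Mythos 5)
Your proof is correct and follows essentially the same approach as the paper: branch on the at most $8$ vertices of a small obstruction from $\obn$ found by brute force in $\mathcal{O}(n^8)$ time, and solve the $\obn$-free base case in polynomial time via Theorem~\ref{thm:mainlrw} and Proposition~\ref{prop:nltothread2}. The correctness argument (any LRW1-deletion set must hit the obstruction) and the $8^k\cdot\mathcal{O}(n^8)$ running time analysis match the paper's proof.
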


\begin{proof}
Let $(G, k)$ be an instance of the \LRWD problem where $G$ is a graph on $n$ vertices. 
First recursively find an induced subgraph of $G$ isomorphic to a graph in $\obn$ and branch by removing one of the vertices in the subgraph.
Because the maximum size of graphs in $\obn$ is $8$,
we can find such a vertex subset in time $\mathcal{O}(n^8)$ if exists.
In the end,
we transform the instance $(G, k)$ into at most $8^k$ sub-instances $(G', k')$ such that each sub-instance consists of a $\obn$-free graph $G'$ and a remaining budget $k'$. 
Thus, this branching step takes a time $8^k\cdot \mathcal{O}(n^8)$. 
Clearly, $(G,k)$ is a \textsc{Yes}-instance if and only if one of sub-instances $(G', k')$ is a \textsc{Yes}-instance.

Let $(G', k')$ be a sub-instance obtained from the branching algorithm.
Since $G'$ is $\obn$-free, 
using the algorithm in Proposition~\ref{prop:nltothread2}, 
we can compute a minimum LRW1-deletion set of $G'$ and decide whether $(G', k')$ is a \textsc{Yes}-instance in time $\mathcal{O}(\abs{V(G')}(\abs{V(G')}+\abs{E(G')}))$.
By checking all sub-instances, we can decide whether $(G, k)$ is a \textsc{Yes}-instance in time $8^k\cdot \mathcal{O}(n^8)$. 
\end{proof}
Theorem~\ref{thm:algorithm1} already gives a single-exponential FPT algorithm for the \LRWD problem, but the polynomial factor $n^8$ makes it impractical.
In the next subsection, we give an algorithm with better polynomial factor, using the branching algorithm based on a cliquewidth expression.

\subsection{Improving the polynomial factor}

The \emph{cliquewidth} of a graph $G$ is the minimum number of labels needed to construct $G$ using the following four operations:
\begin{enumerate}[(1)]
\item Creation of a new vertex $v$ with label $i$ (denoted by $i(v)$).
\item Disjoint union of two labeled graphs $G$ and $H$ (denoted by $G \oplus H$).
\item Joining by an edge each vertex with label $i$ to each vertex with label $j$ ($i\neq j$, denoted by $\eta_{i,j}$). 
\item Renaming label $i$ to $j$ (denoted by $\rho_{i\rightarrow j}$).
\end{enumerate}
Every graph can be defined by an algebraic expression using these four operations. 
Such an expression is called a \emph{$k$-expression} if it uses at most $k$ different labels. Thus, the cliquewidth of $G$ is the minimum $k$ for which there exists a $k$-expression defining $G$.

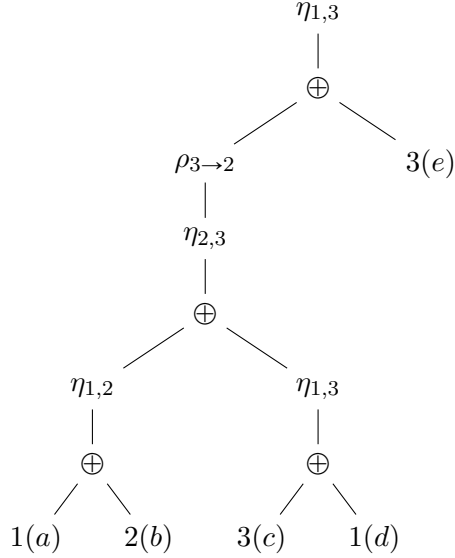
\begin{figure}\center
\begin{tikzpicture}[level distance=10mm]
\tikzstyle{level 1}=[sibling distance=30mm] 
\tikzstyle{level 6}=[sibling distance=15mm] 
\node {$\eta_{1,3}$}
   child {node {$\oplus$} 
    child {node {$\rho_{3\rightarrow 2}$}
      child {node {$\eta_{2,3}$}
      child {node {$\oplus$}
	child {node {$ \eta_{1,2}$}
		child {node {$\oplus$}
		 	child {node {$1(a)$}}
			child {node {$2(b)$}}}}
        child {node {$ \eta_{1,3}$}
			child {node {$\oplus$}
		 	child {node {$3(c)$}}
			child {node {$1(d)$}}}	
			}}
		}}
    child {node {$3(e)$}}};
\end{tikzpicture}
\caption{$3$-expression of $C_5$.} 	\label{fig:c5expression}
\end{figure}

For instance, the cycle $abcdea$ of length $5$ admits the following $3$-expression:
\[  \eta_{1,3}( \rho_{3\rightarrow 2} \eta_{2,3}( \eta_{1,2} (1(a) \oplus 2(b)) \oplus
\eta_{1,3} (3(c) \oplus 1(d)) )  \oplus 3(e))    .\]
We can represent this expression as a tree-structure, depicted in Figure~\ref{fig:c5expression}.
We call this tree the \emph{labelled tree induced by the $k$-expression of $G$}.

As we need the relation between rankwidth and linear rankwidth, we define rankwidth.
A tree is \emph{subcubic} if it has at least two vertices and every inner vertex has degree~$3$. A pair $(T, L)$ is called a \emph{rank-decomposition} of a graph $G$ if $T$ is a subcubic tree and $L$ is a bijection from the vertices of $G$ to the leaves of $T$. For each edge $e$ in $T$, $T\setminus e$ induces a partition $(S_{e} ,T_{e} )$ of the leaves of $T$. The \emph{width} of an edge $e$ is defined as $\rank (A_G[S_{e}, T_{e}])$. The \emph{width} of a rank-decomposition $(T,L)$ is the maximum width over all edges of $T$. The \emph{rankwidth} of $G$ is the minimum width over all rank-decompositions of $G$. If $\abs{V(G)}\leq 1$, then $G$ admits no rank-decomposition and it has rankwidth $0$.

The following approximation algorithm was given by Oum~\cite{Oum2006, OS2004}. 

\begin{theorem}[Oum~\cite{Oum2006, OS2004}]\label{thm:approxcw}
Given a graph $G$ and positive integer $k$, 
one can output a rank-decomposition of width at most $3k+1$ of $G$ or confirm that the rankwidth of $G$ is larger than $k$ in time $2^{\mathcal{O}(k)}\cdot n^4$.
Moreover, given a rank-decomposition of a graph $G$ of width $k$, 
one can output a $(2^{k+1}-1)$-expression of $G$ in time $2^{\mathcal{O}(k)}\cdot n^2$.
\end{theorem}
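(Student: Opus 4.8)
The statement bundles together two results of Oum (and Oum--Seymour), so I only sketch how one proves each half.

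\emph{First half: approximating rankwidth.} The plan is to reduce to the general approximation algorithm for branch-width of symmetric submodular functions. The starting point is that the \emph{cut-rank function} $\rho_G(X):=\rank(A_G[X,V(G)\setminus X])$, with rank over the binary field, is symmetric, has $\rho_G(\emptyset)=0$, and is submodular, i.e. $\rho_G(X)+\rho_G(Y)\ge\rho_G(X\cap Y)+\rho_G(X\cup Y)$ for all $X,Y\subseteq V(G)$ (this follows from the dimension formula applied to the row spaces of the relevant submatrices). Moreover a rank-decomposition of $G$ is exactly a branch-decomposition of $\rho_G$, and its width is the maximum of $\rho_G$ over the cuts it induces. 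It therefore suffices to show: for any symmetric submodular $f$ on a ground set $V$ with $f(\emptyset)=0$ given by an oracle, and any integer $k$, in time $g(k)\cdot|V|^{O(1)}$ one can either return a branch-decomposition of width at most $3k+1$ or certify that the branch-width of $f$ exceeds $k$. Such a decomposition is built top-down: maintain a partition $\mathcal{P}$ of $V$ (initially $\{V\}$) in which every part has boundary value at most $3k+1$, and repeatedly try to split a part $A$ with $\abs{A}\ge 2$ into two nonempty pieces, each again of boundary value at most $3k+1$. The splitting subroutine is the heart of the argument: using submodularity one shows that if no such split of $A$ exists, then $A$ hosts a ``loose tangle'' / well-linked structure of order $k+1$, a polynomially-checkable certificate that the branch-width of $f$ is larger than $k$. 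The subroutine enumerates the $2^{O(k)}$ possible ``traces'' of a desired bipartition on a bounded-size separator and, for each, solves a submodular-minimization (min-cut--type) problem; this accounts for the $2^{O(k)}$ factor, and combining it with the $O(\abs{V})$ refinement steps and the cost of evaluating $\rho_G$ by Gaussian elimination over the binary field gives the $2^{O(k)}\cdot n^4$ bound.

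\emph{Second half: from a rank-decomposition to a clique-width expression.} The plan is an explicit bottom-up construction. Given a rank-decomposition $(T,L)$ of width $k$, subdivide one edge of $T$ and root $T$ at the new vertex; every node $v$ of $T$ then has an associated vertex set $X_v\subseteq V(G)$ (the $L$-preimages of the leaves below $v$), with $X_v=X_{v_1}\uplus X_{v_2}$ at an internal node $v$ with children $v_1,v_2$. Since $\rho_G(X_v)\le k$, the rows of $A_G[X_v,V(G)\setminus X_v]$ span a space of dimension at most $k$, so they take at most $2^k$ distinct values; these row-vectors serve as labels, so each $u\in X_v$ carries as its label the ``type'' of its adjacency towards $V(G)\setminus X_v$. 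The expression is assembled by induction on $T$: a leaf is the creation of its vertex with the appropriate type; at an internal node one takes the disjoint union of the two subexpressions, adds with $\eta$-operations exactly the edges between $X_{v_1}$ and $X_{v_2}$ (these form a complete bipartite-by-type pattern, because whether $u_1\in X_{v_1}$ and $u_2\in X_{v_2}$ are adjacent depends only on the type of $u_1$ across the cut at $v_1$ and the type of $u_2$ across the cut at $v_2$), and finally relabels every vertex to its type across the cut at $v$. A careful count of the labels that must be kept simultaneously live shows that $2^{k+1}-1$ suffice, and since each of the $O(n)$ nodes costs $O(2^{2k})$ $\eta$-operations plus $O(n)$ relabellings, with the types maintained by incremental Gaussian elimination over the binary field, the construction runs in time $2^{O(k)}\cdot n^2$.

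The step I expect to be the main obstacle is the splitting subroutine inside the first half: proving the tangle/well-linkedness duality that converts ``this part cannot be refined'' into a short certificate of large branch-width, while keeping the dependence on $k$ single-exponential. By comparison, the submodularity of $\rho_G$ and the label bookkeeping in the second half are routine.
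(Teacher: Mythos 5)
This statement is not proved in the paper; it is a citation of results from Oum~\cite{Oum2006} and Oum--Seymour~\cite{OS2004}, so there is no ``paper's own proof'' to compare against. What you have written is a reasonable reconstruction of how those cited results are actually established, and the two halves you identify do correspond to the two ingredients in Oum's work: the first half is the generic branch-width approximation for symmetric submodular functions (applied to the cut-rank function $\rho_G$), with the splitting subroutine and the tangle-style ``cannot refine implies large branch-width'' duality, accounting for the $3k+1$ approximation ratio and the $2^{\mathcal{O}(k)}\cdot n^4$ bound; the second half is the standard bottom-up conversion of a width-$k$ rank-decomposition into a clique-width expression by taking as labels the at most $2^k$ distinct rows of $A_G[X_v, V(G)\setminus X_v]$, with bookkeeping across the two children at each internal node explaining the $2^{k+1}-1$ label bound. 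Your appraisal that the tangle/well-linkedness duality in the splitting subroutine is the technically delicate part is also accurate. So your sketch is consistent with the cited literature, but since the theorem appears here only as a black box that the paper invokes (to reduce the polynomial factor in Theorem~\ref{thm:algorithm2}), a full proof of it is out of scope for this paper and your job in reading it is simply to accept the statement and the running times as given.
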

Remark that the rankwidth of a graph can be decreased by at most $1$ when removing a vertex, 
and the rankwidth of a graph is at most its linear rankwidth.
Thus, if a given graph has rankwidth larger than $k+1$, then we cannot make it a graph of linear rankwidth at most $1$ by removing at most $k$ vertices.
Therefore, using the algorithm in Theorem~\ref{thm:approxcw}, we can assume that the instance is given with a $(2^{3k+5}-1)$-expression, and we can use it to find an induced subgraph of constant size. 

\begin{proposition}\label{prop:inducedoncw}
Let $H$ be a graph on $h$ vertices.
Given a graph $G$ with its $k$-expression, 
one can test whether $G$ contains an induced subgraph isomorphic to $H$ in time $2^{\mathcal{O}(h^2+h\log k)}\cdot n$, 
and output a vertex set inducing $H$ if exists, in the same time complexity.
\end{proposition}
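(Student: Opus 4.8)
To prove the proposition, the plan is to run a bottom-up dynamic programming along the labelled tree $T$ induced by the given $k$-expression of $G$. For a node $t$ of $T$, write $G_t$ for the labelled graph produced by the subexpression rooted at $t$, so that $G_{\mathrm{root}}=G$. The starting point is that an induced copy of $H$ is an injection $\psi\colon V(H)\to V(G)$ with $\psi(u)\psi(v)\in E(G)\iff uv\in E(H)$, and that the only information about a partial copy $\psi\colon S\to V(G_t)$ (with $S\subseteq V(H)$) relevant to how it interacts with the remaining operations is (i) the current label $\lambda(u)$ of each $\psi(u)$ in $G_t$ and (ii) which pairs inside $\psi(S)$ are already adjacent in $G_t$: later $\oplus$- and $\rho_{i\to j}$-operations create no edges inside $\psi(S)$ beyond what the labels force, each later $\eta_{i,j}$ adds all edges between the two relevant label classes, and no operation ever deletes an edge. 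Accordingly I would let a \emph{state} be a triple $(S,\lambda,R)$, where $S\subseteq V(H)$, $\lambda\colon S\to\{1,\dots,k\}$, and $R$ is a spanning subgraph of $H[S]$, and store at each $t$ the family $\mathcal D_t$ of states that are \emph{realizable at $t$}, i.e.\ for which some injection $\psi\colon S\to V(G_t)$ gives $\psi(u)$ label $\lambda(u)$ for every $u\in S$ and satisfies $\psi(u)\psi(v)\in E(G_t)\iff uv\in E(R)$ for all $u,v\in S$. States whose $R$ contains a non-edge of $H$ are discarded, as such a partial copy can never be completed.

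Next I would spell out the transitions, each a direct transcription of the corresponding operation restricted to the (at most $h$) chosen vertices: at a create-node $i(v)$, $\mathcal D_t$ consists of the empty state together with $(\{w\},\{w\mapsto i\},(\{w\},\emptyset))$ for every $w\in V(H)$; at a node $\rho_{i\to j}$ with child $t'$, each $(S,\lambda,R)\in\mathcal D_{t'}$ yields $(S,\lambda',R)$ with $\lambda'$ obtained from $\lambda$ by replacing every value $i$ by $j$; at a node $\eta_{i,j}$ with child $t'$, each $(S,\lambda,R)$ yields $(S,\lambda,R')$ where $R'$ adds to $R$ all pairs $u,v\in S$ with $\{\lambda(u),\lambda(v)\}=\{i,j\}$, the state being dropped if any such pair is a non-edge of $H$; at a node $\oplus$ with children $t_1,t_2$, every pair $(S_1,\lambda_1,R_1)\in\mathcal D_{t_1}$, $(S_2,\lambda_2,R_2)\in\mathcal D_{t_2}$ with $S_1\cap S_2=\emptyset$ yields $(S_1\cup S_2,\lambda_1\cup\lambda_2,R_1\cup R_2)$, the disjoint union of the two graphs (no extra pruning is needed, since no cross edges are created). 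Finally, $G$ contains an induced copy of $H$ if and only if $\mathcal D_{\mathrm{root}}$ contains a state with $S=V(H)$ and $R=H$, and correctness follows from a straightforward induction over $T$ showing that $\mathcal D_t$ is exactly the set of realizable states at $t$.

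For the running time, a state is specified by a subset $S\subseteq V(H)$ (at most $2^h$ choices), a labelling of $S$ (at most $k^h$ choices) and a subgraph of $H[S]$ (at most $2^{\binom h2}$ choices), so $|\mathcal D_t|\le 2^{\mathcal O(h^2+h\log k)}$; hence every create-, $\rho$- and $\eta$-node is handled in time $2^{\mathcal O(h^2+h\log k)}$ and every $\oplus$-node in time $|\mathcal D_{t_1}|\cdot|\mathcal D_{t_2}|\cdot\mathrm{poly}(h)=2^{\mathcal O(h^2+h\log k)}$. After a routine linear-time normalization that removes redundant consecutive renamings and repeated edge-introductions, we may assume $T$ has $\mathcal O(n\cdot k^{\mathcal O(1)})$ nodes, at most $n-1$ of them $\oplus$-nodes, so the total running time is $\mathcal O(n\cdot k^{\mathcal O(1)})\cdot 2^{\mathcal O(h^2+h\log k)}=2^{\mathcal O(h^2+h\log k)}\cdot n$, the polynomial factor in $k$ being absorbed into $2^{\mathcal O(h\log k)}$ (and the case $h\le 1$ is trivial). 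To output an actual vertex set inducing $H$, I would keep with each surviving state a back-pointer to the child state(s) it arose from and, at create-nodes, record which vertex of $H$ (if any) is played by the new vertex; tracing these pointers down from a witnessing root state rebuilds $\psi$ within the same time bound.

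The one delicate point, and the only real obstacle, is the design of the state space. Tracking just the pair $(S,\lambda)$ is not enough: adjacency between two already-chosen vertices in $G_t$ is not a function of their current labels, since an $\eta_{i,j}$ followed by a rename can leave an adjacent pair carrying any pair of labels. One must therefore also record the current adjacency pattern $R$ on the at most $h$ chosen vertices, and it is precisely this that contributes the $2^{\mathcal O(h^2)}$ factor; everything else is bookkeeping.
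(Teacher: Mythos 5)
Your proposal is correct and follows the same dynamic-programming strategy as the paper: a bottom-up sweep over the labelled tree of the $k$-expression, with a state recording which vertices of $H$ have been placed, what cliquewidth labels they currently carry, and which pairs among them are already adjacent in the graph built so far; the running-time bookkeeping is likewise the same, giving $2^{\mathcal O(h^2+h\log k)}$ states and a linear number of tree nodes. The one substantive difference is in what the adjacency component of the state remembers. You keep the full adjacency pattern $R$ on the placed set $S$, require $\psi(u)\psi(v)\in E(G_t)\iff uv\in E(R)$ with $R$ a subgraph of $H[S]$, and at an $\eta_{i,j}$-node you explicitly discard any state in which a pair that is a \emph{non-edge} of $H$ would acquire an edge. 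The paper instead stores one bit $b_y$ per edge $e_y\in E(H)$ only, and at its join step it merely checks that no edge $e_y$ of $H$ whose two endpoints carry labels $i$ and $j$ has $b_y=0$; it never examines non-edges of $H$. As written, that recurrence therefore decides (not necessarily induced) subgraph containment: for instance with $H=P_3$ and $G=K_3$, the pair $v_1v_3\notin E(H)$ is eventually forced adjacent at the final $\eta_{1,2}$ but triggers no rejection, so the paper's DP would report success even though $K_3$ has no induced $P_3$. Your state space (a subgraph of $H[S]$ rather than a bit per $E(H)$) is no larger asymptotically, since $|E(H)|\le\binom h2$ contributes the same $2^{\mathcal O(h^2)}$ factor, but it is the formulation that actually solves the \emph{induced} problem stated in the proposition. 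In short: same approach, but your version closes a gap that the paper's transition at the join node leaves open.
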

\begin{proof}
Let $v_1, \ldots, v_h$ be the vertices of $H$ and $e_1, \ldots, e_f$ be the edges of $H$.
Let $\phi$ be the given $k$-expression defining $G$, and  
let $T$ be the labelled rooted tree induced by $\phi$.
For every node $t$ of $T$, let $G_t$ be the graph defined at node $t$, and for each $i\in \{1, \ldots, k\}$, let $G_t[i]$ be the subgraph of $G_t$ induced on the vertices with label $i$.

For every node $t$ and every vector $(a_1, \ldots, a_{h}, b_{1}, \ldots, b_{f}) \in \{0, 1, \ldots, k\}^{h} \times \{0, 1\}^{f}$, 
we define the following value:

\begin{itemize}
\item $c[t, (a_1, \ldots, a_{h}, b_{1}, \ldots, b_{f})] = 1$
   if there exists an injective mapping $\eta$ from $V=\{v_i\in V(H):a_i\neq 0\}$ to $V(G_t)$ such that      
   \begin{itemize}
      \item  for each $v_i\in V$, $\eta(v)\in V(G_t[a_i])$,
	\item for each $e_i=v_{\ell} v_m\in E(H)$ with $b_i=1$, we have $a_{\ell}\neq 0$, $a_m\neq 0$, and $\eta(v_{\ell})\eta(v_m)\in E(G_t)$,
	\item for each $e_i=v_{\ell} v_m\in E(H)$ with $b_i=0$, $a_{\ell}\neq 0$ and $a_m\neq 0$, we have $\eta(v_{\ell})\eta(v_m)\notin E(G_t)$,
	\end{itemize}
\item $c[t, (a_1, \ldots, a_{h}, b_{1}, \ldots, b_{f})] = 0$, otherwise.
\end{itemize}
The values $c[t, \cdot]$ will capture all possible inequivalent subgraphs of $H$ at $G_t$, where two subgraphs are equivalent if their corresponding vertices are in the same labels.
A vector $(a_1, \ldots, a_h, b_1, \ldots,b_f)$ is \emph{complete} if for each $1\le i\le h$ and $1\le j\le f$, 
$a_i$ and $b_j$ are non-zero. 
One can observe that $G$ contains an induced subgraph isomorphic to $H$ if and only if 
there is a complete vector $v$ such that $c[r, v]=1$.
A vector $(a_1', \ldots, a_{h}', b_{1}', \ldots, b_{f}')$ is called a \emph{sub-vector} of  $(a_1, \ldots, a_{h}, b_{1}, \ldots, b_{f})$
if for each $1\le x\le h$ and $1\le y\le f$, $a_x'\in \{0, a_x\}$ and $b_y'\in \{0, b_y\}$.

Now, we present how the values of $c[\cdot, \cdot]$ are computed.
At each node $t$, we compute the value $c[t, (a_1, \ldots, a_{h}, b_{1}, \ldots, b_{f})]$ as follows.

\begin{enumerate}
\item (Creation of a new vertex $w$ with label $i$)
Each vertex of $H$ can be mapped to $w$. 
Thus, for each vector $v=(a_1, \ldots, a_{h}, b_{1}, \ldots, b_{f})$ where exactly one of $a_1, \ldots, a_h$ is $i$ and $b_y=0$ for all $y\in \{1, \ldots, f\}$, 
we assign $c[t, v]:=1$ and for all other vectors, assign $c[t, v]:=0$.
\item (Disjoint union node with two children $t_1, t_2$) 
	We take all possible two sub-vectors $v_1=(a^1_1, \ldots, a^1_{h}, b^1_{1}, \ldots, b^1_{f})$ and $v_2=(a^2_1, \ldots, a^2_{h}, b^2_{1}, \ldots, b^2_{f})$ of  $(a_1, \ldots, a_{h}, b_{1}, \ldots, b_{f})$
	such that 
	\begin{itemize}
	\item for each $1\le x\le h$, 
	if $a_x$ is non-zero, then exactly one of $a^1_x$ and $a^2_x$ is equal to $a_x$, and 
	the same thing holds for  $b^1_y$, $b^2_y$, and $b_y$. 
	\end{itemize}	
We assign $c[t, v]:=1$ if there exist such $v_1, v_2$ where $c[t_1, v_1]=c[t_2, v_2]=1$, 
and $c[t, v]:=0$ otherwise.
	This formula holds because 
	$G_t$ contains a subgraph $H'$ if and only if there is a vertex partition $S_1$ and $S_2$ of $H'$ with no edges between $S_1$ and $S_2$ such that
	$H'[S_1]$ is a subgraph of $G_{t_1}$ and $H'[S_2]$ is a subgraph of $G_{t_2}$ ($S_1$ or $S_2$ may be an empty set). 
	This also implies that for a vector $v$ with $b_y=1$ and $e_y=v_{\ell}v_m$, it suffices to consider a pair $(v_1,v_2)$ which satisfies either of the following: 
	$b^1_y=1,a^1_{\ell}\neq 0, a^1_m\neq 0, b^2_y=a^2_{\ell}=a^2_m=0$, or $ b^1_y=a^1_{\ell}=a^1_m=0, b^2_y=1,a^2_{\ell}\neq 0, a^2_m\neq 0$.

	Note that there are at most $2^h\cdot 2^f$ possible pair of vectors.
	Thus, we need $2^h\cdot 2^f$ iterations to compute $c[t,v]$ for fixed $v$.
	In total, we need $(2^h\cdot 2^f)\cdot (k^h\cdot 2^f)=2^{h+2f+h\log k}$ iterations to compute $c[t,v]$ for all $v$. 
\item (Join node with the child $t'$ such that each vertex with label $i$ is joined to each vertex with label $j$)
      If there are $\ell, m\in \{1, \ldots, h\}$ and $y\in \{1, \ldots, f\}$ where $e_y=v_{\ell}v_m$ with $a_{\ell}=i$, $a_m=j$ and 
      $b_y=0$, then we assign $c[t,v]:=0$. This correctly assigns the value as two vertices with labels $i$ and $j$ should be adjacent at this join node.
      We can check it in time $\mathcal{O}(h^2)$.
      Now, we assume that the given vector $v$ satisfies that for all $\ell, m\in \{1, \ldots, h\}$ and $y\in \{1, \ldots, f\}$ with $e_y=v_{\ell}v_m$ with $a_{\ell}=i$ and $a_m=j$,
      we have  $b_y=1$.
      
      We take all sub-vectors $v'=(a_1, \ldots, a_{h}, b_{1}', \ldots, b_{f}')$ of  $(a_1, \ldots, a_{h}, b_{1}, \ldots, b_{f})$
	such that for $y\in \{1, \ldots, f\}$,   
	\begin{itemize}
	\item $b_y'\in \{0, 1\}$ if $e_y=v_{\ell}v_m$ with $a_{\ell}=i$,  $a_m=j$, and $b_y=1$, and 
	\item $b_y'=b_y$ otherwise.
	\end{itemize} 
Then we assign $c[t, v]:=1$ if there exists such $v'$ where $c[t', v']=1$, 
and $c[t, v]:=0$ otherwise.
This formula holds because 
	$G_t$ contains a subgraph $H'$ if and only if $G_{t'}$ contains a subgraph $H''$
	where
	$H'$ is obtained from $H''$ by adding edges between $V(G_{t'}[i])$ and $V(G_{t'}[j])$.
		
   Note that there are at most $2^{f}$ possible sub-vectors.
Thus, we need $2^f\cdot (k^h\cdot 2^f)=2^{2f+h\log k}$ iterations to compute $c[t,v]$ for all $v$. 
      
\item  (Renaming label $i$ to $j$)
	If there is $x\in \{1, \ldots, h\}$ with $a_x=i$, then we assign $c[t,v]:=0$. This correctly assigns the value as there is no vertex with label $i$ at this node.
	 We can check it in time $\mathcal{O}(h)$.
    We assume that the given vector $v$ satisfies that 
    for all $x\in \{1, \ldots, h\}$, $a_x\neq i$.

For each vector $(a_1, \ldots, a_{h}, b_{1}, \ldots, b_{f})$, 
     we take all sub-vectors $v'=(a_1', \ldots, a_{h}', b_{1}, \ldots, b_{f})$ of  $(a_1, \ldots, a_{h}, b_{1}, \ldots, b_{f})$
	such that for $x\in \{1, \ldots, h\}$, 
	\begin{itemize}
	\item $a_x'\in \{i, j\}$ if $a_x=j$, and 
	\item $a_x'=a_x$ otherwise.
	\end{itemize}
We assign $c[t, v]:=1$ if there exists such $v'$ where $c[t', v']=1$, 
and $c[t, v]:=0$ otherwise.
This formula trivially holds as we just change the labels on vertices.

   Note that there are $2^{h}$ possible vectors.
Thus, we need $2^h\cdot (k^h\cdot 2^f)=2^{h+f+h\log k}$ iterations to compute $c[t,v]$ for all $v$. 
\end{enumerate}

Finally, we can test whether $G$ contains an induced subgraph isomorphic to $H$ by checking 
whether there is a complete vector $v$ such that $c[r, v]=1$.
Thus, we can find an induced subgraph of $G$ isomorphic to $H$ in time $2^{\mathcal{O}(h^2+h\log k)}\cdot n$, if exists.
Also, if we find such a vector $v$, then we can track which vertices are contributed to make the vector $v$, 
and thus, we can compute a vertex set inducing $H$ in the same time complexity.
\end{proof}

 Now, we give a second FPT algorithm for the \LRWD problem.
\begin{theorem}\label{thm:algorithm2}
The \LRWD problem can be solved in $2^{\mathcal{O}(k)}\cdot n^4$ time.
\end{theorem}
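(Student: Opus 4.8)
The plan is to combine the branching strategy from Theorem~\ref{thm:algorithm1} with the tools developed in the preceding subsection, replacing the naive $\mathcal{O}(n^8)$ search for obstructions with the cliquewidth-based dynamic programming of Proposition~\ref{prop:inducedoncw}. First I would observe that, since removing a vertex decreases rankwidth by at most $1$ and rankwidth is at most linear rankwidth, if $\rank(G)>k+1$ then $(G,k)$ is a trivial \NO-instance. Using Theorem~\ref{thm:approxcw}, in time $2^{\mathcal{O}(k)}\cdot n^4$ we either confirm $\rank(G)>k+1$ (and output \NO), or obtain a rank-decomposition of width at most $3(k+1)+1=3k+4$, and from it a $(2^{3k+5}-1)$-expression of $G$. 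Set $k^*:=2^{3k+5}-1$, so $\log k^*=\mathcal{O}(k)$.

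Next I would run the branching algorithm. As long as $G$ contains an induced subgraph isomorphic to some member of $\obn$, we branch: each obstruction has at most $8$ vertices, so we create at most $8$ children, in each of which one of those vertices is deleted and the budget drops by one. The key point is that detecting such an obstruction and producing the inducing vertex set now costs only $2^{\mathcal{O}(h^2+h\log k^*)}\cdot n=2^{\mathcal{O}(k)}\cdot n$ per call, by applying Proposition~\ref{prop:inducedoncw} to each of the (constantly many, each with $h\le 8$) graphs in $\obn$ --- here one must note that deleting a vertex from a graph given by a $k^*$-expression yields a graph with a $k^*$-expression (simply drop the corresponding leaf and simplify), so the cliquewidth bound is maintained along every branch without recomputation. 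After at most $k$ rounds of branching the recursion tree has at most $8^k$ leaves, each carrying a $\obn$-free graph $G'$ and a residual budget $k'$. At each such leaf, apply Proposition~\ref{prop:nltothread2} to compute a minimum \DEL\ of $G'$ in time $\mathcal{O}(n(n+m))$ and check whether its size is at most $k'$; $(G,k)$ is a \YES-instance iff some leaf answers \YES. Correctness follows from Theorem~\ref{thm:structurethread}: any \DEL\ must hit every induced copy of every obstruction, so branching on the vertices of one obstruction is exhaustive, and once the graph is $\obn$-free Proposition~\ref{prop:nltothread2} is exact.

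For the running time, the initial approximation step is $2^{\mathcal{O}(k)}\cdot n^4$; the branching tree has $8^k$ nodes and each node performs $\mathcal{O}(1)$ invocations of Proposition~\ref{prop:inducedoncw} at cost $2^{\mathcal{O}(k)}\cdot n$ plus the $k^*$-expression update; and the $8^k$ leaves each cost $\mathcal{O}(n(n+m))=\mathcal{O}(n^3)$. Since $8^k\cdot 2^{\mathcal{O}(k)}=2^{\mathcal{O}(k)}$, the total is $2^{\mathcal{O}(k)}\cdot n^4$, which matches the claimed bound.

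The main obstacle I anticipate is the bookkeeping around the cliquewidth expression along the branches: one has to be careful that the factor in the exponent really stays $2^{\mathcal{O}(k)}$, i.e.\ that passing to the $(2^{3k+5}-1)$-expression gives $h\log k^*=\mathcal{O}(h k)=\mathcal{O}(k)$ for constant $h$, and that deleting branched-out vertices does not force us to re-run the $2^{\mathcal{O}(k)}\cdot n^4$ approximation (it does not, since an induced subgraph of a graph with a $k^*$-expression inherits a $k^*$-expression by restricting the expression tree). A secondary subtlety is that Proposition~\ref{prop:inducedoncw} is stated for a fixed target graph $H$; here we must apply it to each of the finitely many obstructions in $\obn$ in turn, which only multiplies the cost by a constant. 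Everything else is a routine assembly of results already proved in the excerpt.
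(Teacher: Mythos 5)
Your proposal matches the paper's proof essentially step for step: the initial rankwidth check via Theorem~\ref{thm:approxcw}, branching on obstructions of $\obn$ found by running Proposition~\ref{prop:inducedoncw} on the $(2^{3k+5}-1)$-expression (with $\log k^*=\mathcal{O}(k)$), updating the expression on vertex deletion by dropping the corresponding leaf, and finishing at each $\obn$-free leaf with Proposition~\ref{prop:nltothread2}. The argument is correct and contains no gaps.
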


\begin{proof}
All procedures in the algorithm will be same as in Theorem~\ref{thm:algorithm1} except the refined branching algorithm. 
Let $(G, k)$ be an instance where $G$ is a graph on $n$ vertices.
Using the algorithm in Theorem~\ref{thm:approxcw}, we can decide whether the rankwidth of $G$ is at most $k+1$ in time $2^{\mathcal{O}(k)}\cdot n^4$.
If the rankwidth of the input graph is more than $k+1$, then we cannot make it a graph of linear rankwidth at most $1$ by removing at most $k$ vertices.
Thus, we say that it is a \NO-instance in this case.
Otherwise, the algorithm outputs a $(2^{3k+5}-1)$-expression $\phi$ of $G$.
Let $T$ be the labelled rooted tree induced by $\phi$.
For convenience let $k':=2^{3k+5}-1$.

Let us fix a graph $H$ in $\obn$.
Note that $\abs{V(H)}\le  8$.
Using the algorithm in Proposition~\ref{prop:inducedoncw}, 
we can test whether $G$ contains an induced subgraph isomorphic to $H$ and outputs a vertex set inducing $H$ if exists, in time $2^{\mathcal{O}(\log k')}\cdot n=2^{\mathcal{O}(k)}\cdot n$.
If there is such a vertex set $S$, then 
branch by removing one of the vertices $v$ in $S$ and decrease $k$ by $1$.
Note that we can obtain the cliquewidth expression for $(G\setminus v, k-1)$ from $\phi$ just by removing the node introducing $v$.
We recurse this branching algorithm until there are no such vertex sets $S$.

Combining the remaining steps described in the proof of Theorem~\ref{thm:algorithm1}, 
it is easy to verify that the \LRWD problem can be solved in time $2^{\mathcal{O}(k)}\cdot n^4$.
\end{proof}

\section{A lower bound for \LRWD}\label{sec:lowerbound}
In this section, we show that the algorithm in Theorem~\ref{thm:main1} is best possible in some sense. 
Our lower bound is based on a well-known complexity hypothesis formulated by Impagliazzo, Paturi, and Zane~\cite{ImpagliazzoRF2001}.

\medskip
\medskip
	\textbf{Exponential Time Hypothesis (ETH).} There is a constant $s>0$ such that \textsc{3-CNF-SAT} with $n$ variables and $m$ clauses cannot be solved in time $2^{sn}(n + m)^{\mathcal{O}(1)}$.

\medskip
\medskip

We use the known lower bound for the \textsc{Vertex Cover} problem.

\smallskip
\noindent
\fbox{\parbox{0.97\textwidth}{
{\sc Vertex Cover} \\
\textbf{Input :} A graph $G$, a positive integer $k$ \\
\textbf{Parameter :} $k$ \\
\textbf{Question :} Does $G$ have a vertex subset $S$ of size at most $k$ such that $G\setminus S$ has no edges?} }

\begin{theorem}[Cai and Juedes~\cite{Cai2003}]\label{thm:vclow}
There is no $2^{o(k)}\cdot n^{\mathcal{O}(1)}$-time algorithm for \textsc{Vertex Cover}, unless ETH fails.
\end{theorem}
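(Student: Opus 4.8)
This is a classical consequence of ETH, so the plan is to reprove it by composing the sparsification lemma of Impagliazzo, Paturi, and Zane~\cite{ImpagliazzoRF2001} with the textbook polynomial-time reduction from \textsc{3-CNF-SAT} to \textsc{Vertex Cover}; the crucial feature of that reduction is that it increases the parameter only linearly. First I would upgrade ETH to its familiar sparse form: for a $3$-CNF formula $\varphi$ on $n$ variables and any fixed $\varepsilon>0$, one can compute in time $2^{\varepsilon n}\cdot n^{\mathcal{O}(1)}$ a family of at most $2^{\varepsilon n}$ formulas, each on $n$ variables and $\mathcal{O}(n)$ clauses, whose disjunction is equisatisfiable with $\varphi$. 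Consequently, under ETH there is a constant $s'>0$ such that \textsc{3-CNF-SAT} restricted to instances with $m=\mathcal{O}(n)$ cannot be solved in time $2^{s'(n+m)}\cdot(n+m)^{\mathcal{O}(1)}$.

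Next I would recall the reduction. From a $3$-CNF formula $\psi$ with $n$ variables and $m$ clauses, build in polynomial time the graph $G$ consisting of a variable edge $x_i\bar{x}_i$ for each variable, a triangle on three fresh vertices for each clause, and an edge joining each clause-triangle vertex (labelled by a literal $\ell$) to the variable-edge endpoint carrying $\bar{\ell}$. Then $\psi$ is satisfiable if and only if $G$ has an independent set of size $n+m$, equivalently a vertex cover of size $k:=|V(G)|-(n+m)=n+2m$. Here $|V(G)|=2n+3m$ and $|E(G)|=\mathcal{O}(n+m)$, so both the size of $G$ and the budget $k$ are linear in $n+m$.

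Finally I would argue by contradiction. Suppose \textsc{Vertex Cover} admits an algorithm running in time $2^{g(k)}\cdot N^{\mathcal{O}(1)}$ on $N$-vertex graphs with $g(k)=o(k)$. Given $\varphi$ and a fixed $\varepsilon>0$, sparsify to obtain at most $2^{\varepsilon n}$ formulas with $m=\mathcal{O}(n)$ clauses, apply the reduction above to each (producing graphs on $\mathcal{O}(n)$ vertices with budget $k=\mathcal{O}(n)$), and run the assumed algorithm on every one of them. Since $g(k)=o(k)$ and $k=\mathcal{O}(n)$, the total running time is $2^{\varepsilon n}\cdot 2^{o(n)}\cdot n^{\mathcal{O}(1)}=2^{(\varepsilon+o(1))n}\cdot n^{\mathcal{O}(1)}$, and $\varphi$ is satisfiable if and only if one of the produced \textsc{Vertex Cover} instances is a \textsc{Yes}-instance. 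As $\varepsilon>0$ was arbitrary, this decides \textsc{3-CNF-SAT} in time $2^{o(n)}\cdot n^{\mathcal{O}(1)}$, contradicting ETH.

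There is no deep obstacle here; the one point that genuinely matters is that the three quantities controlling the running time of the produced instance --- $|V(G)|$, $|E(G)|$, and the budget $k$ --- are all \emph{linearly} bounded in $n+m$, and that the sparsification step is performed first so that $m$ itself may be assumed linear in $n$. Without sparsification a generic $3$-CNF formula may have $\Theta(n^3)$ clauses, so the reduction would only yield a $2^{o(n^3)}$-time algorithm for \textsc{3-CNF-SAT}, which is no contradiction. One should also keep in mind that the hypothesis ``running in time $2^{o(k)}\cdot n^{\mathcal{O}(1)}$'' refers to a single algorithm whose exponent $g(k)$ satisfies $g(k)/k\to 0$, so the contradiction must be extracted by letting $\varepsilon$ tend to $0$ rather than by choosing a per-instance $\varepsilon$, which is precisely how the argument above proceeds.
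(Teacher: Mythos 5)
The paper does not prove Theorem~\ref{thm:vclow}; it is quoted as a black box from Cai and Juedes~\cite{Cai2003}. Your reproof is correct and is the standard modern derivation: the sparsification lemma of Impagliazzo--Paturi--Zane reduces to $3$-CNF instances with $m=\mathcal{O}(n)$ clauses, after which the textbook reduction to \textsc{Vertex Cover} produces graphs whose order, size, and budget $k$ are all linear in $n+m$, so a $2^{o(k)}\cdot n^{\mathcal{O}(1)}$-time algorithm would give a $2^{o(n)}$-time algorithm for $3$-CNF-SAT. The only loose phrase is ``as $\varepsilon>0$ was arbitrary''; one should (and you do, in your closing remark) fix $\varepsilon$ small relative to the ETH constant $s$ and the constants hidden in the sparsification and the reduction, rather than taking a literal limit.
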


We show the following.

\begin{theorem}
There is no $2^{o(k)}\cdot n^{\mathcal{O}(1)}$-time algorithm for \textsc{LRW1 Vertex Deletion}, unless ETH fails.
\end{theorem}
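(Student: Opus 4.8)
The plan is to establish the lower bound by a polynomial-time reduction from \textsc{Vertex Cover} that leaves the parameter unchanged, and then invoke Theorem~\ref{thm:vclow}. For this I would first fix a convenient induced-subgraph obstruction for linear rankwidth at most~$1$. Let $O$ be the graph obtained from a cycle $u p v q u$ of length $4$ by attaching a pendant vertex $d$ adjacent to $u$ and a pendant vertex $g$ adjacent to $v$, where $u$ and $v$ are the two non-adjacent vertices of the $4$-cycle. Using Lemma~\ref{lem:structure} one checks that $O$ is \emph{not} a thread graph: its only block with more than two vertices is the $4$-cycle, and $u,v$ are cut vertices, so they must lie on the underlying directed path $P$ of any canonical thread decomposition; but then the $4$-cycle cannot be realized as a thread block $B_i$, since one of $u,v$ would have to be a vertex of $B_i$ other than $v_i,v_{i+1}$ while also being a path vertex, contradicting the mergeability condition. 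Conversely, deleting any single vertex from $O$ yields a thread graph (one gets a $P_5$, a $4$-cycle with one pendant, or a star together with an isolated vertex), so $O$ is a minimal obstruction, and by Theorem~\ref{thm:structurethread} it is an induced obstruction. The crucial extra feature is that $O-u$ is the disjoint union of the star centred at $v$ with leaves $\{p,q,g\}$ and the isolated vertex $d$, and symmetrically for $O-v$; that is, after deleting $u$ (resp. $v$) the rest of $O$ is attached to $v$ (resp. to $u$) only through pendant vertices.

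Given a \textsc{Vertex Cover} instance $(G,k)$, I would build $G'$ on vertex set $V(G)\cup\{p_e,q_e,d_e,g_e : e\in E(G)\}$: for every edge $e=\{x_e,y_e\}$ of $G$ the vertices $p_e,q_e,d_e,g_e$ are new and private to $e$, and the only edges of $G'$ are $x_ep_e, x_eq_e, x_ed_e, y_ep_e, y_eq_e, y_eg_e$ over all $e$ (the edges of $G$ themselves are not kept). Then $G'[\{x_e,y_e,p_e,q_e,d_e,g_e\}]$ is an induced copy of $O$ for each $e$, and these copies are internally vertex-disjoint. Set $k'=k$. The two directions would be: (i) if $S$ is a vertex cover of $G$ with $|S|\le k$, then $G'-S$ is a thread graph, because every surviving vertex $w\in V(G)\setminus S$ has all its $G$-neighbours in $S$, so by the pendant feature of $O$ each gadget incident with $w$ leaves only pendant vertices at $w$, whence every connected component of $G'-S$ is a star or a single vertex, hence a thread graph; (ii) conversely, if $S'$ is an \DEL{} of $G'$ with $|S'|\le k$, then since each copy of $O$ is an induced obstruction, $S'$ meets $\{x_e,y_e,p_e,q_e,d_e,g_e\}$ for every $e$. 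Writing $X=S'\cap V(G)$ and calling $e$ \emph{bad} if $x_e,y_e\notin S'$, disjointness of the private vertices forces the number of bad edges to be at most $|S'|-|X|$, so $X$ together with one endpoint of each bad edge is a vertex cover of $G$ of size at most $|S'|\le k$.

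Since $|V(G')|=|V(G)|+4|E(G)|$ is polynomial and $k'=k$, a $2^{o(k)}\cdot n^{\mathcal{O}(1)}$-time algorithm for \LRWD would yield such an algorithm for \textsc{Vertex Cover}, contradicting Theorem~\ref{thm:vclow} under ETH. I expect the delicate point of the proof to be the design of the gadget, namely making direction (i) go through. The obvious attempt — joining $x_e$ and $y_e$ by a short path so that they close an induced $C_5$ — fails, because after deleting a vertex cover a surviving high-degree vertex becomes the centre of several pendant paths of length two, which induce a subdivided claw and hence an obstruction (a ``spider''); thus the residue of each gadget at the surviving endpoint must consist of genuine pendant vertices, and this requirement is exactly what forces the use of the six-vertex obstruction $O$ above rather than a smaller one. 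The remaining routine-but-careful piece is verifying that $O$ is indeed a (minimal) obstruction for linear rankwidth at most~$1$, which one can do with Lemma~\ref{lem:structure} as sketched, or directly via Theorem~\ref{thm:structurethread} or Theorem~\ref{thm:charlrw1split}.
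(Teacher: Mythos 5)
Your proposal is correct and takes essentially the same approach as the paper: a parameter-preserving reduction from \textsc{Vertex Cover} in which the per-edge gadget is exactly the obstruction $\alpha_1$ (a $4$-cycle with pendants at two opposite vertices), which is precisely your graph $O$. The only cosmetic difference is that the paper attaches a single pendant per vertex of $G$ while you attach a private pendant per edge-endpoint, so in particular you need not rederive that $O$ is an obstruction via Lemma~\ref{lem:structure} (where your block argument is in fact slightly off --- $u,v$ can be consecutive on the underlying path, and the real contradiction is that the two non-adjacent middle vertices of the $4$-cycle would both be forced to carry label $\{L,R\}$) since $O=\alpha_1$ is already among the $14$ listed obstructions.
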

\begin{proof}
For contradiction, suppose there exists an algorithm for solving the \LRWD problem in time $2^{o(k)}\cdot n^{\mathcal{O}(1)}$.
Let $(G, k)$ be an instance of the \textsc{Vertex Cover} problem.
We construct a graph $G'$ from $G$ as follows:
\begin{enumerate}[(1)]
\item for every vertex $v\in V(G)$, add a pendant vertex $v'$ adjacent to $v$, and
\item  for every edge $vw$ in $G$, we replace it with two vertex disjoint paths of length 2 from $v$ to $w$.
\end{enumerate}
Let $G'$ be the resulting graph. 
Note that for each edge $vw$ in $G$, 
$v, v', w, w'$ and two disjoint paths of length $2$ from $v$ to $w$ in $G'$ form an induced subgraph isomorphic to $\alpha_1$.
We have $\abs{V(G')}\le 2\abs{V(G)}+2\abs{E(G)}$.

We claim that $G$ has a vertex set $S$ of size at most $k$ such that $G\setminus S$ has no edges  if and only if $G'$ has a \DEL\ of size at most $k$.
Suppose that $G$ has a vertex set $S$ of size at most $k$ such that $G\setminus S$ has no edges.
It is easy to confirm that $G'\setminus S$ is a disjoint union of stars, which has linear rankwidth at most $1$.

For the converse direction, suppose that $G'$ has a \DEL\ $S$ of size at most $k$. 
If $S$ contains a vertex of degree $1$, then we can replace it with its neighbor.
We may assume that $S$ has no pendant vertices of $G'$.
Let $vw$ be an edge of $G$ and $x^{vw}_1, x^{vw}_2$ be the vertices of degree $2$ that are adjacent to $v$ and $w$ in $G'$.
As $S$ does not contain pendant vertices of $G'$, we have $\abs{S\cap \{v, w, x^{vw}_1, x^{vw}_2\}}\ge 1$, otherwise, $G'$ contains an induced subgraph isomorphic to $\alpha_1$.
For each edge $vw$, if $S$ contains one of $x^{vw}_1, x^{vw}_2$, then we replace it with one of $v$ and $w$. Let $S'$ be the resulting set.
Then $\abs{S'\cap V(G)}\le \abs{S}\le k$ and $S'\cap V(G)$ contains at least one of $v$ and $w$ for each edge $vw$ of $G$.
We conclude that $G$ has a vertex set $S$ of size at most $k$ such that $G\setminus S$ has no edges.

Therefore, using the algorithm for the \LRWD problem, we can decide whether $(G, k)$ is a \YES-instance of the \textsc{Vertex Cover} problem, in time $2^{o(k)}\cdot n^{\mathcal{O}(1)}$, 
which is not possible unless ETH fails by Theorem~\ref{thm:vclow}.
\end{proof}

\section{A polynomial kernel for \textsc{LRW1-Vertex Deletion}}\label{sec:polykerthreaddel}

In this section, we show the following.
\begin{theorem}\label{thm:main22}
The \LRWD problem has a kernel with $\mathcal{O}(k^{33})$ vertices.
\end{theorem}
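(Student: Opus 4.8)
The plan is to combine two hitting-set-type bounds. First I would apply the branching/obstruction machinery from Section~\ref{sec:necklacegraph}: since every graph in $\obn$ has at most $8$ vertices, I would use the Sunflower Lemma to either (a) detect an ``irreducible'' sunflower of $\obn$-obstructions whose common core must be hit, yielding a reduction rule that deletes a core vertex and decreases $k$, or (b) conclude that the number of vertex-disjoint $\obn$-obstructions is $\mathcal{O}(k)$ and hence there is a set $X$ of size $\mathcal{O}(k^{c})$ (for some small constant $c$ depending on the obstruction size $8$) hitting all of them, or else report a \NO-instance when no such set of size $\le k$ can exist. After this phase we have a set $X$, $|X|=k^{\mathcal{O}(1)}$, such that $G\setminus X$ is $\obn$-free, hence by Theorem~\ref{thm:mainlrw} each of its components is a connected thread graph or a necklace graph. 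By Proposition~\ref{prop:nltothread2} I compute a minimum \DEL\ $Y$ of $G\setminus X$ in polynomial time; if $|Y|>k$ we answer \NO, otherwise $Z:=X\cup Y$ has size $k^{\mathcal{O}(1)}$ and $G\setminus Z$ has linear rankwidth at most $1$.

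Next, using Lemma~\ref{lem:splittreetothreadblock} I would fix, for each component of $G\setminus Z$, a canonical thread decomposition $P\odot\cB_P$, so that $G\setminus Z$ is an explicit union of thread blocks strung along directed paths. The main task is then to bound the total number of vertices: there are at most $k^{\mathcal{O}(1)}$ ``important'' positions along these paths (those near a vertex of $Z$, since a deletion set of size $\le k$ interacts with only $\mathcal{O}(k)$ blocks), and between two consecutive important positions there is a long ``undisturbed'' stretch of thread blocks that no optimal solution needs to touch; such a stretch can be replaced by a short gadget stretch (a bounded-length path of trivial thread blocks) preserving the answer. The analogous move for the interior of a single large thread block: a thread block admits an ordering/labeling into $L,R,LR$ classes, and vertices of the same label inside a block that are far from $\partial_G(B)$ and far from $Z$ are interchangeable, so we may keep only $k^{\mathcal{O}(1)}$ of them per block. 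Formalizing both reductions as Reduction Rules and proving safeness (a solution for the reduced instance lifts to the original and vice versa, using the structural Lemma~\ref{lem:structure} and Lemma~\ref{lem:extendthread}) is the technical heart.

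The hard part will be shrinking a single large thread block while keeping track of how the solution set $Z$ — in particular the portion of $X$ coming from the Sunflower Lemma — ``pins'' vertices inside the block: a vertex of $Z$ adjacent into the block, or an obstruction that passes through the block, can force specific vertices to be retained, so the set of non-redundant vertices in the block is governed by $|N(Z)\cap V(B)|$ and by how obstructions can be routed through $B$. This is exactly the point where the idea of Fomin, Saurabh, and Villanger~\cite{FSV2012} for \textsc{Proper Interval Vertex Deletion} is invoked: use the sunflower core set $X$ to certify which block-vertices are ``free'' and collapse the rest. Bookkeeping the polynomial exponents — the $\mathcal{O}(k^{8})$-type bound from hitting size-$8$ obstructions via sunflowers, multiplied through the $\mathcal{O}(k)$ blocks, the per-block label-class bound, and the path-stretch bound — is what produces the final $\mathcal{O}(k^{33})$ vertex count, and I expect the analysis (rather than any single clever step) to be the bulk of the work.
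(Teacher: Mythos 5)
Your overall plan mirrors the paper's skeleton (Sunflower $\to$ $\obn$-free $\to$ necklace/thread structure $\to$ shrink large blocks and long paths, plus the FSV analogy), but there is a genuine gap in the way you invoke the Sunflower Lemma, and it propagates through the shrinking step.

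You use the Sunflower Lemma merely to produce a \emph{hitting set} $X$ for $\obn$-obstructions (or to detect a \NO-instance). The paper instead applies Fomin--Saurabh--Villanger's version of the lemma to extract a subfamily $\cF'$ of the obstruction family with the much stronger guarantee that \emph{every} set of size at most $k$ is a minimal hitting set of all small obstructions in $G$ if and only if it is a minimal hitting set of the obstructions inside $G[T]$, where $T:=\bigcup_{S\in\cF'}S$ (plus a minimum LRW1-deletion set of $G\setminus T$). This is not the same as ``$T$ hits every small obstruction.'' The stronger property is what licenses the irrelevant-vertex arguments: when you pick a vertex $w$ deep inside a large thread block of $G\setminus T$ and a hypothetical solution $X$ for $G\setminus w$, you must be able to argue that the only obstruction of $G\setminus X$ through $w$ is a \emph{long induced cycle}. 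The paper gets this by extracting from $X\cup\{w\}$ a minimal hitting set $X'$ for $\obn$ in $G$ and concluding $X'\subseteq T$, so $w$ cannot be needed to break any small obstruction. With only ``$X$ is a hitting set,'' you cannot localize $X'$ inside $T$, and the whole case analysis on the two neighbors of $w$ on the offending cycle (Cases 1--3, via the diamond lemma) does not start.

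Two further ingredients are missing. First, the paper needs a preprocessing rule (Reduction Rule~\ref{rule:onevertex}) that eliminates every small obstruction intersecting $T$ in exactly one vertex; this is what bounds by a constant the number of thread blocks of a component of $G\setminus T$ that any single $v\in T$ can touch, which in turn makes the ``mark and collapse a stretch'' step work. Your sketch does not include this, and without it a single vertex of $T$ could touch arbitrarily many blocks, so no long unmarked stretch need exist. Second, your per-block reduction is phrased as ``vertices of the same label that are far from $\partial_G(B)$ and far from $Z$ are interchangeable, keep only $k^{\mathcal{O}(1)}$ of them.'' The paper's Proposition~\ref{prop:irrelevant} is more delicate than interchangeability: it marks, for each $v\in T$ and each label side, the first/last $k+2$ neighbors in the block, marks $k+2$ common neighbors for each pair in $T$, and marks boundary vertices; any unmarked vertex $w$ is then shown irrelevant via Lemma~\ref{lem:diamond}. ``Same label and far from $Z$'' is not sufficient justification for safeness, because a hypothetical cycle through $w$ in $G\setminus X$ could have both neighbors inside the block (your Case 3), and proving a replacement pair survives the deletion of $X$ requires precisely the $k+2$ redundancy built into the marking.

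In short: the approach is the right shape, but you need (i) the Fomin--Saurabh--Villanger hitting-set-equivalence form of the Sunflower Lemma rather than a plain hitting set, (ii) the one-vertex-intersection reduction rule, and (iii) an explicit marking scheme with $k+2$ redundancy per interaction type, before the ``shrink a long component / shrink a big block'' steps become provable.
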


We use the Sunflower lemma to find a hitting set for obstructions of small size with a special property. It consists in finding a subset $T$ with size bounded by $f(k)$ for some function $f$ whose removal turns $G$ into a graphs of linear rankwidth at most $1$ with the property that  for every set $S\subseteq V(G)$ of size at most $k$, the following are equivalent (Lemma~\ref{lem:shrink}):
\begin{itemize}
\item $S$ is a minimal vertex set such that $G\setminus S$ has no obstructions in $\obn$. 
\item $S$ is a minimal vertex set such that $G[T]\setminus S$ has no obstructions in $\obn$.
\end{itemize} 
This property implies that if there is a minimal LRW1-deletion set $S$, then each vertex of $S\setminus T$ should be used to remove at least one long induced cycle.
It will be used to find an irrelevant vertex in a large thread block in $G\setminus T$.
Moreover, we can preprocess the instance so that 
there is no small obstruction containing exactly one vertex of $T$. 
This will be used to bound the length of the sequence of thread blocks in each connected component.

Let $(G,k)$ be an instance of {\sc LRW$1$-Vertex Deletion}. 
We start with an easy reduction rule.

\begin{RULE}\label{rule:threadcomponent}
If $G$ has a connected component that has linear rankwidth at most $1$, 
then we remove it from $G$.
\end{RULE}
\subsection{Hitting small obstructions}

Let $\cF$ be a family of subsets over a set $U$.  A subset $U'\subseteq U$ is called \emph{a hitting set} of $\cF$ if for every set $F\in \cF$, $F\cap U'\neq\emptyset$. 
For a graph $G$ and a family of
graphs $\cF$, a set $S\subseteq V(G)$ is also called \emph{a hitting set} for $\cF$ if for every induced subgraph $H$ of $G$ that is isomorphic to a graph in $\cF$, $V(H)\cap S\neq\emptyset$. 
The following lemma can be obtained from the Sunflower lemma.
	 
\begin{lemma}[Fomin, Saurabh, and Villanger~\cite{FSV2012}]\label{lem:fomin}
Let $\cF$ be a family of sets of size at most $d$ over a set $U$, and let $k$ be a positive integer. Then in time $\mathcal{O}(\abs{\cF}(k + \abs{\cF}))$, we can find a nonempty set $\cF'\subseteq \cF$ such that
\begin{enumerate}
\item  for every $U'\subseteq U$ of size at most $k$, $U'$ is a minimal hitting set of $\cF$ if and only if $U'$ is a minimal hitting set of $\cF'$, and
\item $\abs{\cF'} \le d!(k+1)^d$.
\end{enumerate}
\end{lemma}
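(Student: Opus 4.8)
The plan is to derive the statement from the algorithmic form of the Erd\H{o}s--Rado Sunflower lemma: if a family of sets each of size at most $d$ has more than $d!(p-1)^d$ members, then it contains a \emph{sunflower} with $p$ petals, that is, $p$ sets $S_1,\dots,S_p$ together with a \emph{core} $Y$ satisfying $S_i\cap S_j=Y$ for all $i\ne j$, and such a sunflower can be exhibited by the greedy argument underlying the proof. Assuming $\cF\ne\emptyset$ (otherwise the statement is vacuous), I would run the loop: while $\abs{\cF}>d!(k+1)^d$, apply the Sunflower lemma with $p=k+2$ to find a sunflower $S_1,\dots,S_{k+2}$ with core $Y$ inside the current $\cF$, and delete one petal, say $S_{k+2}$, from $\cF$. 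On termination output the current family as $\cF'$. Since a set is removed only while $\abs{\cF}>d!(k+1)^d\ge 1$, the family never becomes empty, and at the end $\abs{\cF'}\le d!(k+1)^d$, which is statement (2). There are at most $\abs{\cF}$ iterations since each deletes one set, and together with the cost of each greedy sunflower search this will give the stated running time.

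The core of the argument is the invariant that deleting one petal of a $(k+2)$-petal sunflower changes neither the hitting sets of size at most $k$ nor which of them are minimal. Fix such a sunflower $S_1,\dots,S_{k+2}$ with core $Y$ and set $\mathcal{G}=\cF\setminus\{S_{k+2}\}$. I would first show that any $U'\subseteq U$ with $\abs{U'}\le k$ that hits every set of $\mathcal{G}$ also hits $S_{k+2}$: the petal remainders $S_1\setminus Y,\dots,S_{k+1}\setminus Y$ are pairwise disjoint, so if $U'$ avoided $Y$ it would have to contain a distinct element of each of these $k+1$ disjoint sets, forcing $\abs{U'}\ge k+1$; hence $U'\cap Y\ne\emptyset$, and $Y\subseteq S_{k+2}$ gives $U'\cap S_{k+2}\ne\emptyset$. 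Therefore a set of size at most $k$ hits $\cF$ iff it hits $\mathcal{G}$. Minimality transfers both ways: if $U'$ of size $\le k$ is a minimal hitting set of $\mathcal{G}$ then it hits $\cF$, and no proper subset hits $\mathcal{G}\subseteq\cF$, so it is minimal for $\cF$; conversely, if $U'$ of size $\le k$ is a minimal hitting set of $\cF$ and some proper subset $U''$ hit $\mathcal{G}$, then by the claim $U''$ would hit $\cF$ too, contradicting minimality, so $U'$ is minimal for $\mathcal{G}$. Composing this equivalence over all iterations of the loop yields statement (1).

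The part I expect to be fiddly is the running-time accounting rather than the combinatorics. One must invoke the constructive Sunflower lemma explicitly (greedily take a maximal subfamily of pairwise disjoint sets; if it has at least $p$ members their union supplies a sunflower with empty core, otherwise the union has at most $d(p-1)$ elements, every set of $\cF$ meets it by maximality, so some element lies in a fraction $1/(d(p-1))$ of the family, and one recurses on $(d-1)$-sets after deleting that element, adding it back to the core) and then check that, with $d=\mathcal{O}(1)$ and $p=k+2$, each search runs in time $\mathcal{O}(\abs{\cF}\cdot k)$ and the whole procedure stays within $\mathcal{O}(\abs{\cF}(k+\abs{\cF}))$; I would not grind through this amortization here beyond asserting it follows. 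The only other point worth noting explicitly is the degenerate case $\cF=\emptyset$ (or all sets empty), which is handled by the convention that the claim is then vacuous, as is the case in the application.
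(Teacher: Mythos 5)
Lemma~\ref{lem:fomin} is cited in the paper from Fomin, Saurabh, and Villanger~\cite{FSV2012} without a proof, so there is no in-paper argument to compare against. Your reconstruction is the standard one from the kernelization literature and matches the cited source: repeatedly apply the constructive Sunflower lemma with $p=k+2$ while $\abs{\cF}>d!(k+1)^d$, delete one petal, and argue that any hitting set of size at most $k$ must meet the core and hence still meets the deleted petal, so that the families of minimal hitting sets of size at most $k$ coincide. Your handling of the minimality transfer in both directions is correct, and your remark that the empty-core case is absorbed (because then no $k$-element set can hit the $k+1$ remaining pairwise-disjoint petals, so the implication is vacuous) is the right reading even though the sentence ``hence $U'\cap Y\neq\emptyset$'' would literally fail if $Y=\emptyset$. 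One small caveat: when $\cF=\emptyset$ the statement is not vacuously true but actually unsatisfiable (no nonempty $\cF'\subseteq\cF$ exists), so the hypothesis $\cF\neq\emptyset$ must be treated as an implicit assumption of the lemma, which is fine in the paper's use. Overall the combinatorial core is right and the running-time sketch, while not carried out in detail, points at the correct amortization; I see no genuine gap.
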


Using Proposition~\ref{prop:nltothread2} and Lemma~\ref{lem:fomin}, we identify a vertex set $T$ of $G$ with size polynomial in $k$ that allows us to forget about small obstructions. 

\begin{lemma}\label{lem:shrink}
Let $(G, k)$ be an instance of {\sc LRW$1$-Vertex Deletion}. There is a polynomial time algorithm that either concludes that $(G, k)$ is a \NO-instance or finds a non-empty set $T\subseteq V(G)$ such that
\begin{enumerate}
\item $G\setminus T$ has linear rankwidth at most $1$,
\item for every set $S\subseteq V(G)$ of size at most $k$, $S$ is a minimal hitting set for $\obn$ in $G$ if and only if it is a minimal hitting set for $\obn$ contained in $G[T]$, and
\item $\abs{T}\le 8\cdot 8!(k + 1)^8 + k$.
\end{enumerate}
\end{lemma}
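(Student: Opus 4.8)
The plan is to compress the small obstructions with the Sunflower lemma and then pad the resulting support by one vertex per surviving necklace component. First I would collect the family $\cF$ of all vertex sets of induced subgraphs of $G$ isomorphic to a graph in $\obn$. Since every graph in $\obn$ has at most $8$ vertices, $\cF$ is a family of sets of size at most $d=8$ and can be listed in time $\mathcal{O}(n^{8})$. Applying Lemma~\ref{lem:fomin} with $d=8$ gives $\cF'\subseteq\cF$ with $\abs{\cF'}\le 8!(k+1)^{8}$ such that a set of size at most $k$ is a minimal hitting set of $\cF$ if and only if it is a minimal hitting set of $\cF'$. I set $Z:=\bigcup_{F\in\cF'}F$, so $\abs{Z}\le 8\cdot 8!(k+1)^{8}$, and then test whether $G\setminus Z$ is $\obn$-free. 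If it is not, I pick $F\in\cF$ with $F\cap Z=\emptyset$ and conclude $(G,k)$ is a \NO-instance: any \DEL\ of $G$ of size at most $k$ hits every member of $\obn$ in $G$, hence contains a minimal hitting set $U'$ of $\cF$ with $\abs{U'}\le k$; by Lemma~\ref{lem:fomin} $U'$ is then a minimal hitting set of $\cF'$, so $U'\subseteq Z$, contradicting $U'\cap F\ne\emptyset$.

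Assuming $G\setminus Z$ is $\obn$-free, Theorem~\ref{thm:mainlrw} says each of its components is a thread graph or a necklace graph whose underlying cycle, by the definition of $C\odot\cB_C$, induces a chordless cycle of length at least $9$ in $G\setminus Z$ and hence in $G$. I would compute the necklace components with the algorithm of Proposition~\ref{prop:nltothread2}. If there are more than $k$ of them, $(G,k)$ is a \NO-instance, since their underlying cycles are pairwise disjoint induced cycles of length at least $5$ and every \DEL\ must hit each one. Otherwise let $R$ be a minimum \DEL\ of the $\obn$-free graph $G\setminus Z$ produced by Proposition~\ref{prop:nltothread2}; it consists of one vertex on the underlying cycle of each necklace component (Lemma~\ref{lem:nltothread}), so $\abs{R}\le k$. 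I put $T:=Z\cup R$. Then $G\setminus T=(G\setminus Z)\setminus R$ is a thread graph (property~(1)), $\abs{T}\le 8\cdot 8!(k+1)^{8}+k$ (property~(3)), and $T$ is nonempty because, after Reduction Rule~\ref{rule:threadcomponent}, a nonempty $G$ is not a thread graph, so either $\cF'\ne\emptyset$ or $G\setminus Z$ has a necklace component. All steps run in polynomial time.

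For property~(2), write $\cF_T$ for the family of vertex sets of induced $\obn$-subgraphs of $G[T]$; since $Z\subseteq T$, we have $\cF'\subseteq\cF_T\subseteq\cF$, and a minimal hitting set for $\obn$ inside $G[T]$ is exactly a minimal hitting set of $\cF_T$. Fix $S$ with $\abs{S}\le k$. If $S$ is a minimal hitting set of $\cF$, then by Lemma~\ref{lem:fomin} it is a minimal hitting set of $\cF'$, and because $\cF'\subseteq\cF_T\subseteq\cF$ one checks directly that $S$ hits $\cF_T$ and that each $v\in S$ has some $F\in\cF'\subseteq\cF_T$ with $F\cap S=\{v\}$, so $S$ is a minimal hitting set of $\cF_T$. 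Conversely, if $S$ is a minimal hitting set of $\cF_T$, it hits $\cF'$; picking a minimal $S^{*}\subseteq S$ that hits $\cF'$, Lemma~\ref{lem:fomin} makes $S^{*}$ a minimal hitting set of $\cF$, hence $S^{*}$ hits $\cF_T$, so minimality of $S$ forces $S^{*}=S$ and $S$ is a minimal hitting set of $\cF$.

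The hard part is precisely this last style of argument. In three places—the dichotomy ``$G\setminus Z$ is $\obn$-free or $(G,k)$ is \NO'', keeping $\abs{T}$ polynomial by counting necklace components rather than deleting all of $\bigcup\cF$, and the backward direction of property~(2)—one cannot reason vertex-by-vertex and must instead extract a \emph{minimal} sub-hitting-set of $\cF'$ and push it through the equivalence in Lemma~\ref{lem:fomin}. I expect the verification that the underlying cycle of a necklace component is genuinely induced in $G$, and the bookkeeping that $R$ adds exactly one vertex per necklace component, to be routine given Theorem~\ref{thm:mainlrw}, Lemma~\ref{lem:nltothread}, and Proposition~\ref{prop:nltothread2}.
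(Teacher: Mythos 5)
Your proof is correct and follows essentially the same route as the paper: apply Lemma~\ref{lem:fomin} to extract $\cF'$, set $Z:=\bigcup_{F\in\cF'}F$, and then add a minimum LRW1-deletion set of the remaining $\obn$-free graph obtained from Proposition~\ref{prop:nltothread2}. You are in fact somewhat more careful than the published proof: you explicitly verify that $G\setminus Z$ is $\obn$-free before invoking Proposition~\ref{prop:nltothread2} and correctly derive the \NO\ conclusion when it is not (the case where $\cF$ admits no hitting set of size at most $k$, which the paper's assertion ``from condition~1, $G\setminus T'$ is $\obn$-free'' silently excludes), and you spell out both directions of property~(2), which the paper leaves to the reader.
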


\begin{proof}
Let $\cF$ be the set of vertex sets $S$ of $G$ such that $G[S]$ is isomorphic to a graph in $\obn$. 
Since the maximum size of a set in $\mathcal{F}$ is $8$, 
using Lemma~\ref{lem:fomin}, we can find a subset $\mathcal{F}'$ of $\mathcal{F}$ such that
\begin{enumerate}
\item   for every vertex subset $X\subseteq V(G)$ of size at most $k$, 
	$X$ is a minimal hitting set of $\cF$ if and only if $X$ is a minimal hitting set of $\cF'$, and
\item  $\abs{\cF'} \le 8!(k+1)^8$.
\end{enumerate}

Let $T':=\bigcup_{S\in \mathcal{F}'} S$. 
From the condition 1,
$G\setminus T'$ has no induced subgraph isomorphic to a graph in $\obn$ and by Theorem~\ref{thm:mainlrw}, $G\setminus T'$ is a necklace graph.
Using the algorithm in Proposition~\ref{prop:nltothread2}, we can find a minimum LRW1-deletion set $Y$ of $G\setminus T'$ in polynomial time.
If $\abs{Y}\ge k+1$, then we conclude that $(G, k)$ is a \NO-instance.
Otherwise, we add $Y$ to $T'$, increasing its size by at most $k$.
We conclude that $T:=T'\cup Y$ is a required set.
\end{proof}

Let us fix a subset $T$ of $V(G)$ obtained by Lemma~\ref{lem:shrink}.  
We preprocess using the following reduction rule.

\begin{RULE}\label{rule:onevertex}
Let $U\subseteq T$ such that for every $u\in U$, there exists an induced subgraph $H$ of $G$ isomorphic to a graph in $\obn$ with $V(H)\cap T=\{u\}$.
If $\abs{U} > k$, then $(G,k)$ is a \NO-instance; otherwise, remove $U$ from $G$ and reduce $k$ by $\abs{U}$, and use $T\setminus U$ instead of $T$.
\end{RULE}
It can be done in polynomial time because we only need to look at obstructions of $\obn$ in $G$. 

\begin{lemma}\label{lem:oneintersect}
 Reduction Rule~\ref{rule:onevertex} is safe.
\end{lemma}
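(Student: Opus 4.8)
The plan is to show that the set $U$ identified in Reduction Rule~\ref{rule:onevertex} is ``forced'' into every solution of size at most $k$, so deleting $U$ and decreasing the budget accordingly yields an equivalent instance. Concretely, fix the set $T$ from Lemma~\ref{lem:shrink} and a vertex $u\in U$, witnessed by an induced subgraph $H_u$ isomorphic to a graph in $\obn$ with $V(H_u)\cap T=\{u\}$. The key observation is that all vertices of $V(H_u)\setminus\{u\}$ lie in $G\setminus T$, which has linear rankwidth at most $1$ (part~1 of Lemma~\ref{lem:shrink}) and hence is $\obn$-free by Theorem~\ref{thm:structurethread}. So no solution $S$ with $|S|\le k$ can hit $H_u$ using only vertices outside $T$: if $S\cap V(H_u)\subseteq V(H_u)\setminus\{u\}\subseteq V(G\setminus T)$ had to be nonempty, fine, but the point is different — I want to argue $u$ itself must be chosen.

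The cleaner route is via the minimality characterization in part~2 of Lemma~\ref{lem:shrink}. First I would argue: if $(G,k)$ is a \YES-instance, then it has a solution $S$ of size at most $k$ that is a minimal hitting set for $\obn$ in $G$ (shrink any solution to a minimal subhitting set; removing obstructions and being an \DEL\ coincide by Theorem~\ref{thm:structurethread} combined with the fact that $G\setminus S$ $\obn$-free means each component is a thread graph or necklace graph, but actually we need a bit of care here — see below). By part~2 of Lemma~\ref{lem:shrink}, such an $S$ is exactly a minimal hitting set for the obstructions contained in $G[T]$. Now here is where Reduction Rule~\ref{rule:onevertex}'s hypothesis bites: I claim $U\subseteq S$. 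Suppose $u\in U\setminus S$. Since $S$ is a minimal hitting set for obstructions inside $G[T]$ and $u\notin S$, consider the obstruction $H_u$: its only vertex in $T$ is $u$, so $V(H_u)\cap T=\{u\}$ is not an obstruction ``contained in $G[T]$'' in a useful sense — rather, $H_u$ itself uses vertices outside $T$. The subtlety is that part~2 talks about obstructions contained in $G[T]$, whereas $H_u$ is not one of them. So instead I would use part~2 in the direction: $S$ being a minimal hitting set for $\obn$ in $G$ forces, for every $s\in S$, the existence of an obstruction $H_s$ in $G$ with $S\cap V(H_s)=\{s\}$; and separately, the hypothesis of the rule gives that $u$ participates in an obstruction meeting $T$ only in $u$. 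The honest argument is: any solution $S$ must hit $H_u$; the vertices of $H_u$ other than $u$ lie in $G\setminus T$; so $S$ must contain $u$ OR at least one vertex of $V(H_u)\setminus\{u\}$. To upgrade this to ``$S$ must contain $u$'', one replaces a solution hitting $H_u$ only outside $T$ by one hitting it at $u$ — but that may break hitting of other obstructions. This is precisely why Lemma~\ref{lem:shrink} part~2 is needed: it lets us restrict attention to minimal hitting sets of the \emph{bounded} family of obstructions inside $G[T]$, and vertices of $S$ outside $T$ are ``irrelevant'' to that bounded family, hence freely exchangeable.

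So the proof structure I would write is: (i) $(G,k)$ is \YES\ iff there is a minimal $S\subseteq V(G)$, $|S|\le k$, that is a minimal hitting set for $\obn$ in $G$ (any \DEL\ contains such an $S$, and conversely by Theorem~\ref{thm:structurethread} plus Proposition~\ref{prop:nltothread2}/Theorem~\ref{thm:mainlrw} a hitting set for $\obn$ need not itself be an \DEL\ — so actually a \YES-instance gives an \DEL\ of size $\le k$, which is in particular a hitting set for $\obn$, hence contains a minimal hitting set $S_0$ for $\obn$; but $S_0$ need not be an \DEL$\ldots$ I would sidestep this by working with the reduced instance and showing equivalence of \emph{instances}, not of solutions, using that after deleting $U$ the remaining graph still admits the same $T\setminus U$ and the budget bookkeeping matches). (ii) Show $U$ is disjoint from $G\setminus T$ and that, for $u\in U$, every hitting set for $\obn$ of size $\le k$ that is minimal among hitting sets of the $G[T]$-obstruction family must contain $u$ — because the obstruction witnessing $u$ forces a hit, and all other candidate vertices for that hit lie outside $T$ where, by part~2, they cannot be part of a minimal such hitting set unless needed for $G[T]$-obstructions, which $H_u$ is not. (iii) Conclude $(G,k)$ is \YES\ iff $(G\setminus U, k-|U|)$ is \YES, and that $T\setminus U$ still satisfies the three properties of Lemma~\ref{lem:shrink} for the new instance (item~1 because deleting vertices preserves linear rankwidth $\le 1$; item~2 because the obstruction families only shrink consistently; item~3 trivially).

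The main obstacle will be step~(ii): pinning down exactly why a vertex of $S$ lying outside $T$ cannot be the one hitting $H_u$, once we insist $S$ is \emph{minimal} with respect to the $G[T]$-restricted obstruction family. The delicate point is reconciling ``hitting set for $\obn$ in $G$'' (which involves obstructions like $H_u$ straddling $T$ and $V(G)\setminus T$) with ``hitting set for $\obn$ contained in $G[T]$'' (the bounded family from part~2 of Lemma~\ref{lem:shrink}). I expect the actual argument leans on the following: if $S$ is a solution avoiding $u$, then since $S$ hits $H_u$ it contains some $w\in V(H_u)\setminus\{u\}\subseteq V(G)\setminus T$; replace $S$ by $(S\setminus\{w\})\cup\{u\}$; this still has size $\le k$, still hits all obstructions contained in $G[T]$ (adding $u\in T$ can only help, removing $w\notin T$ removes nothing from $G[T]$), hence by part~2 is still a hitting set for $\obn$ in $G$, i.e.\ still an \DEL\ provided hitting $\obn$ implies \DEL\ — which is \emph{not} automatic, so one must additionally invoke that $G\setminus(\text{hitting set})$ is a necklace-or-thread graph and handle the necklace case, but here the budget suffices because $Y\subseteq T$ was already added in Lemma~\ref{lem:shrink}. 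Working out this exchange argument cleanly, and correctly tracking the difference between ``hits $\obn$'' and ``has linear rankwidth $\le 1$'', is the crux.
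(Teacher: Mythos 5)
Your proposal circles around the right ingredients (part~2 of Lemma~\ref{lem:shrink}, restriction to $T$, minimality) but never assembles them into a complete argument, and the exchange argument you tentatively settle on is precisely where it breaks. Replacing $w\in V(H_u)\setminus\{u\}$ by $u$ keeps the set a hitting set for $\obn$ (via part~2), but, as you yourself observe, a hitting set for $\obn$ need not be an LRW1-deletion set --- the leftover graph may still have necklace components with long induced cycles --- so there is no reason the swapped set remains a solution, and the budget/necklace bookkeeping you gesture at does not repair this. You correctly diagnose the obstruction, but conclude it is the crux to be worked out rather than a sign that the exchange route is the wrong one.

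The paper's argument avoids any exchange by never trying to turn a hitting set for $\obn$ back into a solution. Take any minimal LRW1-deletion set $S$ with $\abs{S}\le k$. Since $S$ hits every graph of $\obn$ in $G$, it hits in particular those contained in $G[T]$, whose vertex sets lie entirely inside $T$; hence $S\cap T$ is a hitting set for the $G[T]$-family. Choose a minimal $S'\subseteq S\cap T$ for that family. Then $\abs{S'}\le k$, and part~2 of Lemma~\ref{lem:shrink} upgrades $S'$ to a minimal hitting set for $\obn$ in all of $G$; in particular $S'$ meets $H_u$ for every $u\in U$. But $S'\subseteq T$ and $V(H_u)\cap T=\{u\}$, so $u\in S'\subseteq S$, and therefore $U\subseteq S$. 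The key point you were missing is that $S'$ is \emph{never} claimed to be an LRW1-deletion set; it is merely a sub-structure of $S$ that certifies $U\subseteq S$, so the ``hitting $\obn$ versus being a solution'' tension simply does not arise. Safeness of Reduction Rule~\ref{rule:onevertex} is then immediate: if $\abs{U}>k$ there is no solution of size at most $k$, and otherwise $S\mapsto S\setminus U$ and $S''\mapsto S''\cup U$ convert solutions of $(G,k)$ and $(G\setminus U,k-\abs{U})$ into one another.
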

\begin{proof}
We claim that every minimal LRW1-deletion set in $G$ contains $U$.
Let $S$ be a minimal LRW1-deletion set in $G$. Then there exists a vertex subset $S'\subseteq S\cap T$ such that $S'$ is a minimal hitting set for graphs of $\obn$ in $G[T]$. 
From the property of $T$, $S'$ is also a minimal hitting set for graphs of $\obn$ in $G$, and we must have $U\subseteq S'$ as $S'$ hits the sets $S_u$ for each $u\in U$.
It implies that if $\abs{U}>k$, then $(G,k)$ is a \NO-instance.
Otherwise, since $U$ is always contained in any minimal LRW1-deletion set of $G$, we have that  $(G, k)$ is a \textsc{Yes}-instance if and only if $(G\setminus U, k-\abs{U})$ is a \textsc{Yes}-instance.
\end{proof}

From now on, we assume that $G$ is reduced under Reduction Rules~\ref{rule:threadcomponent} and \ref{rule:onevertex}.
A  vertex $v$ of $G$ is called \emph{irrelevant} if $(G,k)$ is a \YES-instance if and only if $(G\setminus v, k)$ is a \YES-instance. 
For convenience, let $\mu(k):=8\cdot 8!(k+1)^8 +k$.

\subsection{Bound on the size of connected components of $G\setminus T$}\label{subsec:boundcomp}

We first show that if a thread block in
$G\setminus T$ is large, then we can always find an irrelevant vertex in there.

\begin{proposition}\label{prop:irrelevant} If $G\setminus T$ contains a thread block of size at least $(k+2)(\mu(k)+2)^2+1$, then we can find an irrelevant vertex
  in polynomial time.
\end{proposition}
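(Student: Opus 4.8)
The plan is to exploit the property of $T$ guaranteed by Lemma~\ref{lem:shrink}: outside of $G[T]$, the only obstructions in $\obn$ that a deletion set must worry about are the ``long'' induced cycles, i.e.\ those of length between $5$ and $8$ cannot occur outside $T$ in a way not already witnessed in $G[T]$. Concretely, for any minimal LRW1-deletion set $S$, the vertices of $S$ lying outside $T$ are each forced to kill at least one short induced cycle (from $C_5,\dots,C_8$) that genuinely uses the interior of a thread block. So if a single canonical thread block $B$ in $G\setminus T$ is enormous, then most of its vertices participate in no obstruction at all that is not controllable elsewhere, and one of them can be deleted without changing the answer.

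First I would set up the structural picture. Since $G$ is reduced under Reduction Rules~\ref{rule:threadcomponent} and \ref{rule:onevertex}, and $G\setminus T$ has linear rankwidth at most $1$ (Lemma~\ref{lem:shrink}(1)), each connected component $H$ of $G\setminus T$ is a thread graph, hence admits a canonical thread decomposition $H=P\odot\cB_P$ (Lemma~\ref{lem:splittreetothreadblock}). Let $B=B_{v_iv_{i+1}}$ be a thread block of $\cB_P$ with $\abs{V(B)}\ge (k+2)(\mu(k)+2)^2+1$, with ordering $\sigma$ and labeling $\ell$. By Lemma~\ref{lem:structure}(3), $V(B)$ consists of pendant vertices adjacent to $v_{i+1}$ together with the block of $G\setminus T$ through $v_i,v_{i+1}$. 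The key quantity is how $T$ attaches to $B$: since $\abs{T}\le\mu(k)$, the set $N_G(V(B)\setminus\{v_i,v_{i+1}\})\cap T$ has at most $\mu(k)$ elements. I would partition the interior vertices of $B$ into ``intervals'' according to $\sigma$ and their neighborhood in $T$; because $B$ is a thread block, two interior vertices with the same label and the same $T$-neighborhood that are ``consecutive enough'' are twins, and deleting one of them is reversible. By pigeonhole, since $\abs{V(B)}$ exceeds $(k+2)(\mu(k)+2)^2$, some such block of mutually ``interchangeable'' vertices has more than $k+1$ members (the factor $(\mu(k)+2)^2$ accounts for: a $T$-neighborhood subset, roughly, and a second coordinate to separate the at most $\mu(k)+2$ ``boundary'' positions dictated by where $T$-neighbors sit in the order $\sigma$; the $3$ labels $\{L\},\{R\},\{L,R\}$ and the two endpoints are absorbed into the constants).

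Next I would pick an irrelevant vertex $v$ from such a large twin-like class. The claim is that $(G,k)$ is a \YES-instance iff $(G\setminus v,k)$ is. The ($\Rightarrow$) direction is trivial since deleting a vertex only destroys obstructions. For ($\Leftarrow$), take a minimal LRW1-deletion set $S'$ of $G\setminus v$ with $\abs{S'}\le k$; I must produce a deletion set of $G$ of size at most $k$. If $v\notin N_G(S')$ in a trouble-making way, $S'$ already works for $G$; otherwise, because the twin class of $v$ had more than $k+1$ members, at least one sibling $w$ of $v$ avoids $S'$ and plays exactly the same role as $v$ in every induced subgraph of $G$ — so $G\setminus S'\cong (G\setminus v)\setminus S'$ up to relabeling $v\leftrightarrow w$, hence $G\setminus S'$ has linear rankwidth at most $1$. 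This uses Theorem~\ref{thm:structurethread} to phrase ``linear rankwidth $\le 1$'' purely in terms of forbidden induced subgraphs, so that the twin-swap argument is clean. Finally, I would check that finding $B$, computing the thread decomposition, computing $N_G(V(B))\cap T$, and doing the pigeonhole search are all polynomial (the thread decomposition via Lemma~\ref{lem:splittreetothreadblock} in linear time, the rest by inspection).

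The main obstacle I expect is making the notion of ``interchangeable vertices in a thread block'' precise enough that the swap in the ($\Leftarrow$) direction is genuinely an automorphism-type argument \emph{including} the edges to $T$ — a vertex $v$ in $B$ is a twin of $w$ in $B$ with respect to $H=G\setminus T$, but to be a twin in all of $G$ we also need $N_G(v)\cap T=N_G(w)\cap T$, which is exactly why the $T$-neighborhood is one of the pigeonhole coordinates. One must also be careful that the swap preserves \emph{minimality}/size but that is automatic since we are only asserting the existence of a deletion set of size $\le k$, not that $S'$ itself is minimal in $G$. Subtleties about whether $v$ could be $v_i$ or $v_{i+1}$ (a cut vertex, hence never truly irrelevant) are handled by only ever choosing $v$ from the \emph{interior} of the large class, of which there are plenty. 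I expect the bookkeeping of the exact constant $(k+2)(\mu(k)+2)^2+1$ to fall out once the coordinates of the pigeonhole are fixed; it is not the conceptual difficulty.
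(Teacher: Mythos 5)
Your strategy is genuinely different from the paper's. The paper marks, for each vertex and pair of vertices of $T$ and for the extremes of the ordering, about $k+2$ witnesses each (so $\abs{Z}\le(k+2)(\mu(k)+2)^2-2$), picks any unmarked interior vertex $w$, and argues that if some $X$ with $\abs{X}\le k$ leaves a long induced cycle through $w$ in $G\setminus X$, then the two cycle‑neighbours $v_2,v_3$ of $w$ have two common neighbours in $Z\setminus X$ off the cycle, which feeds Lemma~\ref{lem:diamond} and contradicts that $X\cup\{w\}$ is a deletion set. No twin classes are needed: the same marked vertex can serve many different unmarked $w$'s.

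Your pigeonhole does not survive with these constants, and that is a genuine gap. To find $k+2$ mutual twins inside $B$ you must control three data: the label in $\{\{L\},\{R\},\{L,R\}\}$, adjacency within $B$ (which forces the twins to form a maximal run of same‑label consecutive vertices in $\sigma$), and the $T$‑neighbourhood. The last one has up to $2^{\abs{T}}=2^{\mu(k)}$ possible values, not $\mathcal{O}(\mu(k))$, and nothing in Reduction Rule~\ref{rule:onevertex} or Lemma~\ref{lem:shrink} prevents a single run of $\{L\}$‑vertices from realising exponentially many distinct $T$‑neighbourhoods. So a block of size $(k+2)(\mu(k)+2)^2+1$ need not contain a twin class of size $k+2$, and the claimed irrelevant vertex may not exist under your criterion. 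There are also smaller slips worth fixing: the vertices of a minimal LRW1‑deletion set outside $T$ are forced to hit \emph{long} cycles (length $\ge 9$), not $C_5,\dots,C_8$ (those are in $\obn$ and are already covered by $T$ via Lemma~\ref{lem:shrink}(2)); and ``$G\setminus S'\cong(G\setminus v)\setminus S'$'' is false as stated (different orders) — what you need is that $G\setminus S'$ has no obstruction in $\obn$ (this follows from Lemma~\ref{lem:shrink}(2) exactly as in the paper) and no long induced cycle (a long induced cycle cannot contain two twins, so a cycle through $v$ could be rerouted through $w$, contradicting that $(G\setminus v)\setminus S'$ is a thread graph). That part of your plan can be repaired; the pigeonhole cannot, at least not with the stated bound.
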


We use the following lemma.

\begin{lemma}\label{lem:diamond}
Let $G$ be a graph and let $v_1v_2v_3v_4v_5$ be an induced path of length $4$ in $G$.
If two distinct vertices $w_1, w_2$ in $V(G)\setminus \{v_1, v_2, \ldots, v_5\}$ are adjacent to $v_2$ and $v_4$,
then $G\setminus v_3$ contains an induced subgraph isomorphic to a graph in $\obn$. 
\end{lemma}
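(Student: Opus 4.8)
The setup is an induced path $v_1v_2v_3v_4v_5$ together with two distinct vertices $w_1,w_2 \notin \{v_1,\dots,v_5\}$, each adjacent to both $v_2$ and $v_4$. I would first record the relevant adjacency freedoms: the only edges I know for sure among $\{v_1,\dots,v_5\}$ are the path edges and the only certified non-edges are $v_1v_3, v_1v_4, v_1v_5, v_2v_4, v_2v_5, v_3v_5$; for each $w_j$ I know $w_jv_2, w_jv_4 \in E(G)$, but $w_jv_1, w_jv_3, w_jv_5$ and $w_1w_2$ are all free. The plan is to do a bounded case analysis on these free adjacencies inside $G\setminus v_3$ and in every case exhibit an induced subgraph from $\obn$, namely one of \house, \gem, \domino, a short induced cycle $C_5,\dots,C_8$, or one of the $14$ sporadic obstructions $\alpha_i$.

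The cleanest starting point is to look at the four-cycle-like structure on $\{v_2, w_1, v_4, w_2\}$ in $G\setminus v_3$: since $w_1v_2, v_2w_2, w_2v_4, v_4w_1$ are all edges, this set induces either a $C_4$ (if $w_1w_2\notin E$) or a $K_4$ minus nothing / a diamond / $K_4$ depending on $w_1w_2$. First I would treat the case $w_1w_2 \notin E(G)$, so $w_1v_2w_2v_4w_1$ is an induced $C_4$. I then look at how $v_1$ and $v_5$ attach to this $C_4$. The vertex $v_1$ is adjacent to $v_2$ and not to $v_4$; depending on whether $v_1$ sees $w_1$ and/or $w_2$, the set $\{v_1,v_2,w_1,v_4,w_2\}$ induces a \house, a \gem, or yields a longer induced path from $v_1$ around the $C_4$, producing $C_5$; symmetrically for $v_5$ via its adjacency to $v_4$. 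If neither $v_1$ nor $v_5$ creates an obstruction alone, then $v_1$ is adjacent to both $w_1,w_2$ and $v_5$ is adjacent to both $w_1,w_2$ (or suitable symmetric configurations), and then $\{v_1,v_2,v_4,v_5,w_1,w_2\}$ — recalling $v_1v_4, v_2v_5 \notin E$ and $v_1v_5$ unknown — induces one of the small obstructions (e.g.\ the $3\times 2$ "domino"-type graph or one of the $\alpha_i$); the case $v_1v_5 \in E$ gives a short cycle. When $w_1w_2 \in E(G)$, the set $\{v_2,w_1,w_2,v_4\}$ induces a diamond or $K_4$, and adding $v_1$ (resp.\ $v_5$), which is adjacent to $v_2$ (resp.\ $v_4$) but not to $v_4$ (resp.\ $v_2$), immediately produces a \gem\ or \house\ or one of the $\alpha_i$ according to the adjacencies $v_1w_1, v_1w_2, v_5w_1, v_5w_2$; only finitely many subcases arise.

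The main obstacle is purely bookkeeping: there are many free adjacencies ($v_1w_1$, $v_1w_2$, $v_5w_1$, $v_5w_2$, $v_1v_5$, $w_1w_2$), so the case tree is moderately large, and one must make sure that in each leaf the chosen vertex subset really does induce an obstruction of size at most $8$ and in particular lies inside $\obn$ rather than being only a $C_k$ with $k \ge 9$. I would control this by always working with a vertex set of size at most $6$, so any induced cycle that appears has length at most $6 < 9$ and is therefore in $\obn$, and by invoking Lemma~\ref{lem:pathtoobs} as a black box whenever the configuration reduces to "a vertex adjacent to the two ends of a short induced path of length at least $4$" — this collapses several subcases at once and keeps the analysis short. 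The only genuinely delicate point is checking, in the handful of cases where the candidate obstruction is one of the $14$ graphs of Figure~\ref{fig:obslrw1} (typically $\alpha_1$ or a close relative), that the specified adjacencies and non-adjacencies match that graph exactly; this is a finite verification that I would carry out against the figure.
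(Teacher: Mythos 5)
Your plan is essentially the paper's: a finite case analysis on the free adjacencies, invoking Lemma~\ref{lem:pathtoobs} to dispatch some configurations and matching the rest against $\obn$, always on a vertex set of size at most $6$ so no long cycle can sneak in. The paper organizes the cases a little more efficiently: instead of splitting on $w_1w_2$ first, it observes that whenever $v_1$ (or $v_5$) is adjacent to exactly one of $w_1,w_2$ --- say $w_1$ but not $w_2$ --- then $v_1v_2w_2v_4$ is an induced $P_4$ with $w_1$ adjacent to both endpoints, so Lemma~\ref{lem:pathtoobs} applies at once, whether or not $w_1w_2\in E(G)$; in the remaining symmetric configurations (each of $v_1,v_5$ adjacent to both or to neither of $w_1,w_2$), the six vertices $\{v_1,v_2,v_4,v_5,w_1,w_2\}$ realize $\alpha_1,\dots,\alpha_6$. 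Two small corrections to watch when you carry out the bookkeeping: $v_1v_5$ is not a free adjacency (the $P_5$ is induced, so $v_1v_5\notin E(G)$); and in the case $w_1w_2\notin E(G)$ your claim that $v_1$ seeing both (or neither) of $w_1,w_2$ already produces a gem or a $C_5$ on the five vertices $\{v_1,v_2,w_1,v_4,w_2\}$ is incorrect --- those five vertices induce $K_{2,3}$ plus an edge, respectively $C_4$ plus a pendant, both of linear rankwidth~$1$ --- which is exactly why those cases must be pushed to the six-vertex set, as you anticipate at the end of your second paragraph.
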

\begin{proof}
See Figure~\ref{fig:lemma65} for the following cases.
If $v_1$ is adjacent to $w_1$ but not adjacent to $w_2$, then $v_1v_2w_2v_4$ is an induced path of length $3$ and $w_1$ is adjacent to its end vertices.
By Lemma~\ref{lem:pathtoobs}, $G[\{v_1, v_2, w_2, v_4, w_1\}]$ has an induced subgraph isomorphic to a graph in $\obn$.
Considering all symmetric cases, we may assume that 
for each $v\in \{v_1, v_5\}$, 
$v$ is adjacent to both $w_1, w_2$ or neither of them.
Depending on the adjacency between $\{v_1, v_5\}$ and $\{w_1, w_2\}$, and the adjacency between $w_1$ and $w_2$, we have one of the $6$ graphs in $\obn$, which are $\alpha_1, \alpha_2, \ldots, \alpha_6$.
\end{proof}

\begin{figure}
\centerline{\includegraphics[scale=0.8]{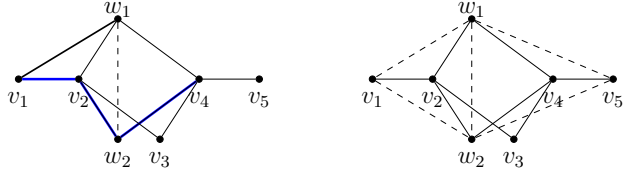}}
\caption{Two cases in Lemma~\ref{lem:diamond}.  }\label{fig:lemma65}
\end{figure}

\begin{proof}[Proof of Proposition~\ref{prop:irrelevant}]
Suppose that $G\setminus T$ contains a thread block of size at least $(k+2)(\mu(k)+2)^2+1$.
We can find such a thread block $B$ and compute its ordering $\sigma$ and labeling $\ell$ in polynomial time using the algorithm in Lemma~\ref{lem:splittreetothreadblock}.
 Let $x,y$ be the first and last vertices of $B$ and let $\sigma'$ be the ordering obtained from $\sigma$ by removing $x$ and $y$.

We mark some vertices of $B$ as follows. We set $Z:=\emptyset$. 
\begin{enumerate}[(1)]
\item For each vertex $v$ of $T$, choose the first $k+2$ vertices $z$ of $\sigma'$ that are neighbors of $v$ with $R\in \ell(z)$, and choose the last $k+2$ vertices $z$ of $\sigma'$ that are neighbors of $v$ with $L\in \ell(z)$, and add them to $Z$. 
\item For each pair of two vertices $v$, $v'$ in $T$, choose $k+2$ common neighbors of $v$ and $v'$ in $B$, and add them to $Z$. 
\item  Choose the first $k+2$ vertices $z$ of $\sigma'$ with $R\in \ell(z)$, and choose the last $k+2$ vertices $z$ of $\sigma'$ with $L\in \ell(z)$, and add them to $Z$.
\end{enumerate}
In each case, if there are at most $k+1$ such vertices, then we add all of them to $Z$.
Then
\begin{align*} 
\abs{Z}\le \abs{T}(2k+4) + \abs{T}^2(k+2) + (2k+4) \le (k+2)(\mu(k) +2)^2 -2.\end{align*}
Since $\abs{V(B)}\ge (k+2)(\mu(k) +2)^2+1$, 
there exists a vertex $w$ in $V(B)\setminus (Z\cup \{x,y\})$.

We claim that $w$ is an irrelevant vertex. 
 If $(G, k)$ is a \YES-instance, then $(G\setminus w, k)$ is clearly a \YES-instance.

Suppose that there is a vertex set $X\subseteq V(G\setminus w)$ with $\abs{X}\le k$ such that $G\setminus (X\cup \{w\})$ is a thread graph.
We may assume that $G\setminus X$ is not a thread graph.
So, $G\setminus X$ has an induced subgraph containing $w$ that is isomorphic to a graph in $\obt$. 
Let $X'\subseteq X\cup \{w\}$ be a minimal hitting set for $\obn$ in $G$.
From the property of the set $T$, $X'$ is a minimal hitting set for $\obn$ in $G[T]$, which implies that $X'\subseteq T$.
Thus $G\setminus X$ must have an induced cycle of length at least $9$ that contains $w$.
Let $C$ be an induced cycle of length at least $9$ containing $w$ in $G\setminus X$.

We will find an induced subgraph of $G\setminus (X\cup \{w\})$ that is isomorphic to a graph in $\obn$, which leads to a contradiction.
Let $v_1, v_2, w, v_3, v_4$ be the consecutive vertices on $C$.
To apply Lemma~\ref{lem:diamond}, we find two vertices that are adjacent to $v_2$ and $v_3$.
These cases are depicted in Figure~\ref{fig:irrelevant}.

\begin{figure}
\centering{\includegraphics[scale=0.8]{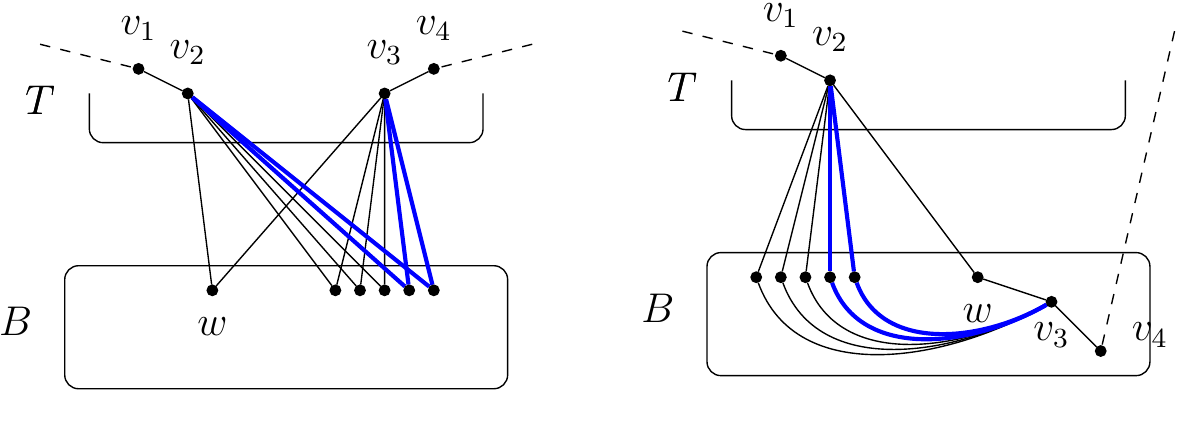}
\,\,\quad \includegraphics[scale=0.8]{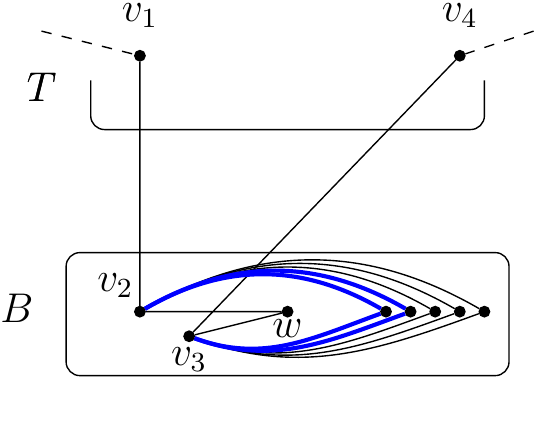}}
\caption{Cases 1-3 in Proposition~\ref{prop:irrelevant}.} \label{fig:irrelevant}
\end{figure}

\begin{enumerate}
\item (Case 1. $v_2, v_3\in T$.) Since $v_2$ and $v_3$ have a common neighbor $w$ in $V(B)\setminus Z$, 
$Z$ contains at least $k+2$ common neighbors of $v_2$ and $v_3$.
Since $\abs{X}\le k$, there exist two vertices $w_1, w_2\in Z\setminus X$ 
that are common neighbors of $v_2$ and $v_3$.
\item (Case 2. One of $v_2$ and $v_3$ is contained in $T$.) From the symmetry, we may assume that $v_2\in T$ and $v_3\notin T$.
Since $w\notin \{x,y\}$,
$v_3$ is contained in $B$. If $R\in \ell(w)$ and $w<_{\sigma} v_3$, then $Z$ contains the first $k+2$ vertices $z$
of $\sigma'$ that are neighbors of $v_2$ with $R\in \ell(z)$. We choose two vertices of them that are not in $X$.
In case when $L\in \ell(w)$ and $v_3<_\sigma w$, we use the last $k+2$ vertices $z$ of $\sigma'$ that are neighbors of $v_2$ with $L\in \ell(z)$ to identify two vertices similarly.
\item (Case 3. Neither $v_2$ nor $v_3$ is contained in $T$.)
Since $w\notin \{x,y\}$, $v_2$ and $v_3$ are contained in $B$.
If $v_2<_{\sigma} w<_{\sigma} v_3$, 
then $R\in \ell(v_2)$, $L\in \ell(v_3)$ and it implies that $v_2v_3\in E$, which is a contradiction.
Also, $v_3<_{\sigma} w<_{\sigma} v_2$ cannot happen.
Thus, both of $v_2$ and $v_3$ appear either before $w$ in $\sigma$ or after $w$ in $\sigma$.
By the symmetry, we may assume that $v_2$ and $v_3$ appear before $w$ in $\sigma$.
So, $R\in \ell(v_2)$, $R\in \ell(v_3)$, and $L\in \ell(w)$.
Since $Z$ contains the last $k+2$ vertices $z$ of $\sigma'$ with $L\in \ell(z)$, there exist two vertices $w_1, w_2$ from those $k+2$ vertices that are not in $X$ and $C$.\end{enumerate}

In all cases, $G\setminus (X\cup \{w\})$ has an induced subgraph isomorphic to a graph in $\obn$ by Lemma~\ref{lem:diamond}. It contradicts to the assumption that $X\cup \{w\}$ is a LRW1-deletion set of $G$.
Therefore, $G\setminus X$ is a thread graph, and
we conclude that $(G,k)$ is a \YES-instance.
\end{proof}

In the next lemma, we show that if a vertex $v$ in $T$ has neighbors on $7$ distinct blocks in a connected component of $G\setminus T$, 
then we can find a subgraph $H$ isomorphic to one of $\{\beta_1,\beta_2,\beta_3,\beta_4\}$ such that $V(H)\cap T=\{v\}$, which is not possible by Reduction Rule~\ref{rule:onevertex}. 
Using it,  whenever a connected component of $G\setminus T$ has a long sequence of thread blocks, we can identify a sequence of consecutive thread blocks not touched by any
obstruction in $\obn$. 
This allows us to contract one of these thread blocks to a vertex.

\begin{lemma}\label{lem:reducelength}
If $G\setminus T$ has a connected component with at least $19(6\mu(k)+1)$ thread blocks,
then we can in polynomial time transform $G$ into a graph $G'$ with $\abs{V(G')}< \abs{V(G)}$ such that $(G,k)$ is a \textsc{Yes}-instance if and only if $(G',k)$ is a \textsc{Yes}-instance.
\end{lemma}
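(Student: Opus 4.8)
The plan is to locate, inside the long component of $G\setminus T$, a long run of consecutive thread blocks that no vertex of $T$ --- hence no obstruction in $\obn$ --- can reach, and then to contract a thread block lying in the middle of that run. The constant $19(6\mu(k)+1)$ is exactly what is needed for the pigeonhole step, and the number $8$ (the maximum size of a graph in $\obn$) is what makes the contraction safe.

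First I would use Lemma~\ref{lem:splittreetothreadblock} to compute, in polynomial time, a canonical thread decomposition $D=P\odot\cB_P$ of the component $D$ in question, with $P=u_1u_2\cdots u_m$ and $\cB_P=\{B_1,\dots,B_{m-1}\}$, where $m-1\ge 19(6\mu(k)+1)$. Call $B_i$ \emph{dirty} if some vertex of $T$ has a neighbour in $V(B_i)$ and \emph{clean} otherwise. The first step is the claim that every $v\in T$ has a neighbour in at most $6$ of the blocks $B_1,\dots,B_{m-1}$: if $v$ had neighbours in $7$ of them, then by Lemma~\ref{lem:structure} its attachment points lie along $P$ and span a long induced subpath of $D$, and from $v$ together with a short piece of $P$ around these attachment points one reads off an induced subgraph of $G$ isomorphic to one of $\beta_1,\beta_2,\beta_3,\beta_4\in\obn$ meeting $T$ only in $v$, contradicting that $G$ is reduced under Reduction Rule~\ref{rule:onevertex}. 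Consequently at most $6\abs{T}\le 6\mu(k)$ blocks are dirty, so the clean blocks fall into at most $6\mu(k)+1$ maximal runs of consecutive blocks, and by pigeonhole some run $B_a,B_{a+1},\dots,B_{a+18}$ consists of at least $19$ consecutive clean blocks.

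Next I would single out the middle block $B_{a+9}$ and observe it is \emph{safe}: no induced subgraph of $G$ isomorphic to a graph in $\obn$ uses any vertex of $V(B_{a+9})$. Indeed, by Lemma~\ref{lem:structure} and cleanness, $W:=\bigcup_{i=a}^{a+18}V(B_i)$ attaches to the rest of $G$ only at $u_a$ and $u_{a+19}$, and every vertex of $V(B_{a+9})$ is at distance at least $9$ from both of them; since every graph in $\obn$ is connected on at most $8$ vertices, an $\obn$-obstruction meeting $V(B_{a+9})$ would be contained in $W$, an induced subgraph of the thread graph $D$, hence $\obn$-free --- a contradiction. The same estimate gives more: any induced cycle of $G$ of length $\ge5$ through $V(B_{a+9})$ must traverse all nine blocks on each side, so it has length at least $19$, and its part inside $V(B_{a+9})$ is an induced $u_{a+9}$--$u_{a+10}$ path. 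Now let $G'$ arise from $G$ by contracting the vertex set $V(B_{a+9})$ to a single new vertex $w$ (adjacent to every vertex outside $V(B_{a+9})$ that had a neighbour in it). Then $w$ is adjacent only to vertices of $V(B_{a+8})\cup V(B_{a+10})$, we have $\abs{V(G')}=\abs{V(G)}-\abs{V(B_{a+9})}+1<\abs{V(G)}$, and $G'\setminus T$ is again a disjoint union of thread graphs in which $w$ is a cut vertex between the relabelled blocks $B_{a+8},B_{a+10}$ and still has no neighbour in $T$; in particular $w$ again sits at distance at least $9$ from both ends of the (now $18$-block) clean run of $G'$, so $w$ is safe in $G'$.

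It remains to show $(G,k)$ is a \textsc{Yes}-instance if and only if $(G',k)$ is, which is the crux. For the forward direction, given a solution $S$ of $(G,k)$, set $S'=(S\setminus V(B_{a+9}))\cup\{w\}$ if $S$ meets $V(B_{a+9})$ and $S'=S$ otherwise, so $\abs{S'}\le\abs{S}\le k$; one then checks $G'\setminus S'$ is $\obt$-free: an $\obt$-obstruction avoiding $w$ already lives in $G\setminus S$ (as $G'\setminus w=G\setminus V(B_{a+9})$), an $\obn$-obstruction using $w$ is impossible by safeness of $w$ in $G'$, and an induced cycle $C_h$ ($h\ge 9$) through $w$ un-contracts --- reinserting an induced $u_{a+9}$--$u_{a+10}$ path inside $B_{a+9}$, which $S$ avoids --- to an induced cycle of length $\ge h+1$ in $G\setminus S$, in each case contradicting that $G\setminus S$ is a thread graph (Theorem~\ref{thm:structurethread}). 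The converse is symmetric: one first notes $(G',k)$ has a solution avoiding $w$, because a solution vertex $w$ can be swapped for the cut vertex $u_{a+8}$ (every induced cycle of length $\ge5$ through $w$ also passes through $u_{a+8}$, while no short obstruction uses $w$), and then a solution $S'\subseteq V(G)\setminus V(B_{a+9})$ of $(G',k)$ is shown to solve $(G,k)$ by transferring obstructions the other way, now using that the ``outside'' part of any induced cycle of $G$ through $V(B_{a+9})$ has length at least $18$, so contraction keeps it of length $\ge 9$. The delicate point throughout is the chord bookkeeping for the (un)contraction: the only new adjacencies created by identifying $V(B_{a+9})$ to $w$ join former $B_{a+8}$-side vertices to former $B_{a+10}$-side vertices, which lie on opposite sides of the separator $\{u_{a+9},u_{a+10}\}$ and hence are non-adjacent, so neither direction manufactures a spurious obstruction --- and this, together with the calibrated run length $19$, is what I expect to take the most care.
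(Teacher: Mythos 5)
Your proof is correct and takes essentially the same route as the paper's: bound by $6$ the number of thread blocks in which any single $T$-vertex has neighbours (using Reduction Rule~\ref{rule:onevertex} and the $\beta_i$ obstructions), pigeonhole to locate a long clean run, contract a block in its middle to a single vertex, and then argue Yes-instance equivalence by transferring long induced cycles across the contraction and swapping a deleted vertex of the contracted block for a nearby cut vertex. The only differences are cosmetic choices (contracting the centre of a $19$-block run and swapping to $u_{a+8}$, versus the paper's $B_{j+5}$ in a $\ge 9$-block run and swap to the block endpoint $x$), which do not change the argument.
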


\begin{proof}
Suppose that $G\setminus T$ has a connected component $H$ such that $H$ consists of at least $19(6\mu(k) +1)$ thread blocks.
We can find such a connected component in polynomial time  using the algorithm in Lemma~\ref{lem:splittreetothreadblock}, if exists. 
Let $B_1, B_2, \ldots, B_t$ be the sequence of thread blocks of $H$.

We claim that every vertex $v$ of $T$ has neighbors in at most $6$ thread blocks of $H$. 
For contradiction, suppose that there is a vertex $v$ in $T$ having neighbors in at least $7$ thread blocks.
Then there are three thread blocks $B_{t_1}, B_{t_2}, B_{t_3}$ having a neighbor of $v$ in $G$ such that  
$t_2-t_1\ge 3$ and $t_3-t_2\ge 3$.
For each $i\in \{1,2,3\}$, let $p_i$ be a neighbor of $v$ in $B_{t_i}$.
Since each thread block consists of at least two vertices, 
we can choose a neighbor $q_i$ of $p_i$ in $B_{t_i}$ for each $i\in \{1,2,3\}$.
Depending on the adjacency between $v$ and the vertices $q_1, q_2, q_3$, 
we have an induced subgraph  isomorphic to a graph in $\{\beta_1, \beta_2, \beta_3, \beta_4\}$ such that it has exactly one vertex of $T$.
This contradicts to the assumption that 
$(G,k)$ is an instance %
reduced by Reduction Rule~\ref{rule:onevertex}.

Now, for each vertex $v$ of $T$, we mark the thread blocks $B$ of $H$ containing a neighbor of $v$.
Since the number of thread blocks in $H$ is at least $10(6\mu(k) +1)$ and
$10(6\mu(k) +1)-6\mu(k) \ge 9(6\mu(k)+1)$,
there exist $m\ge 9$ and $1\le j\le t-m$ such that
$B_{j+1}, B_{j+2}, \ldots, B_{j+m}$ are non-marked thread blocks.

Let $x, y$ be the two end vertices of $B_{j+5}$.
We transform the graph $G$ into a graph $G'$ by removing the thread block $V(B_{j+5})$ and adding a new vertex $z$ that are adjacent to $(N_G(x)\cup N_G(y))\setminus V(B_{j+5})$. 
Let $H'$ be the connected component of $G'\setminus T$ that is modified from the connected component $H$ of $G\setminus T$.
Since we remove at least two vertices from $G$ and add one vertex, we have $\abs{V(G')}< \abs{V(G)}$.

We show that $(G,k)$ is a \textsc{Yes}-instance if and only if $(G', k)$ is a \textsc{Yes}-instance.
Suppose that $G$ has a minimal thread vertex set $X$.
We first assume that $X$ contains a vertex $q$ in $V(B_{j+5})$.  
Since $X$ is a minimal LRW1-deletion set and all small obstructions of $\obn$ are contained in $G\setminus V(B_{j+5})$, $q$ hits an induced cycle of length at least $9$ in $G$, and 
the cycle must pass through the vertices $x$ and $y$.
Thus, $(X\setminus V(B_{j+5}))\cup \{z\}$ is a LRW1-deletion set of $G'$ with size at most $\abs{X}$.

Assume that $X\cap V(B_{j+5})= \emptyset$. 
We may assume that $G'\setminus X$ is not a thread graph. 
Then $G'\setminus X$ must have an induced cycle $C$ of length at least $9$ intersecting the new vertex $z$.
The cycle obtained from $C$ by replacing $z$ with the edge $xy$ is also an induced cycle of length at least $9$ in $G\setminus X$. It contradicts to the assumption that 
$G\setminus X$ is a thread graph.

Now suppose that $G'$ has a minimal LRW1-deletion set $X$.
If $z\in X$, then $z$ hits an induced cycle of length at least $9$ in $G'$ because of the minimality of $X$ and the distance from $x$ to the end vertices of $B_{j+1}$ and $B_{j+m}$.
Because $x$ hits all induced cycles of length at least $9$ in $G$ having a vertex of $V(B_{j+5})$, $(X\setminus \{z\})\cup \{x\}$ is again a LRW1-deletion set of $G$. 

Assume that $z\notin X$. 
We may assume that $G\setminus X$ is not a thread graph.
So, $G\setminus X$ has an induced cycle $C$ of length at least $9$ passing through $x$ and $y$.
Let $C'$ be the cycle obtained from $C$ by replacing the edge $xy$ with the vertex $z$.
This cycle $C'$ clearly exists in $G'\setminus X$ and it has length at least $9$ because it should contain at least one vertex from the thread blocks $B_{j+1}, \ldots, B_{j+4}, B_{j+6}, \ldots B_{j+m}$ with $m\ge 10$.
This contradicts to the assumption that $G'\setminus X$ is a thread graph.
We conclude that $(G, k)$ is a \textsc{Yes}-instance if and only if $(G', k)$ is a \textsc{Yes}-instance.
\end{proof}

\subsection{Kernel size}\label{subsec:kernelsize}

We bound the number of connected components using the following lemma.

\begin{lemma}\label{lem:reducecomponent1}
\begin{enumerate}[(1)]
\item The graph $G\setminus T$ has at most $2\mu(k)$ connected components containing at least two vertices.
\item If $G\setminus T$ has at least $\mu(k)^2\cdot (k+2)+1$ isolated vertices,
then we can find an irrelevant vertex in polynomial time.
\end{enumerate}
\end{lemma}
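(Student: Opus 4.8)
The plan is to prove the two parts separately, in both cases exploiting that $G$ is reduced under Reduction Rules~\ref{rule:threadcomponent} and~\ref{rule:onevertex}, together with Theorem~\ref{thm:mainlrw} and the properties of $T$ from Lemma~\ref{lem:shrink}.

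For part (1) I would first record the structural fact that I expect to be the workhorse: if $N$ is a connected thread graph or a connected necklace graph and $u\in V(N)$, then $N\setminus u$ has at most two connected components with at least two vertices. This follows from Lemma~\ref{lem:structure} (for thread graphs) and Lemma~\ref{lem:nltothread} (for necklace graphs): in a thread block, every non-endpoint vertex is adjacent to the first or the last vertex of the block, so if $u$ lies on the underlying path $v_1\cdots v_n$ of a canonical decomposition at a position $2\le i\le n-1$, then $N\setminus u$ consists of the ``left part'' hanging off $v_{i-1}$, the ``right part'' hanging off $v_{i+1}$, and possibly some isolated pendant vertices of $v_i$; deleting any other vertex (a non-cut-vertex, a path endpoint, or a vertex of the underlying cycle) leaves at most one component with at least two vertices. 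Next I would show that every $u\in T$ is adjacent to at most two connected components of $G\setminus T$ having at least two vertices: since $G$ is reduced under Reduction Rule~\ref{rule:onevertex} there is no graph in $\obn$ induced in $G$ meeting $T$ in exactly $\{u\}$, and since $G\setminus T$ is itself a thread graph, the graph $(G\setminus T)+u$ is $\obn$-free, hence by Theorem~\ref{thm:mainlrw} a disjoint union of thread graphs and necklace graphs; its component containing $u$ is one such graph, and deleting $u$ from it returns exactly the components of $G\setminus T$ adjacent to $u$, so the structural fact gives the bound $2$. Finally, since $G$ is reduced under Reduction Rule~\ref{rule:threadcomponent}, every connected component of $G\setminus T$ with at least two vertices has a neighbour in $T$ (otherwise it would be a whole connected component of $G$ of linear rankwidth at most $1$); charging each such component to one of its neighbours in $T$ gives at most $2\abs{T}\le 2\mu(k)$ of them.

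For part (2), note the isolated vertices of $G\setminus T$ are exactly the vertices $v\notin T$ with $N_G(v)\subseteq T$. I would mark a bounded set $Z$ of them: for every pair $\{t_1,t_2\}\subseteq T$ mark up to $k+2$ isolated vertices adjacent to both $t_1$ and $t_2$ (one may also mark, for each $t\in T$, up to $k+2$ isolated neighbours of $t$); then $\abs{Z}\le \mu(k)^2(k+2)$, so if there are more than $\mu(k)^2(k+2)$ isolated vertices there is an unmarked one $v^\star$, which I claim is irrelevant. One direction is immediate. For the other, let $X$ be a \DEL{} of $G\setminus v^\star$ with $\abs{X}\le k$ and suppose for contradiction that $G\setminus X$ is not a thread graph; since $(G\setminus v^\star)\setminus X$ is a thread graph, $G\setminus X$ contains an obstruction from $\obt$ through $v^\star$. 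First I would argue $G\setminus X$ is in fact $\obn$-free: the sunflower-reduced family $\cF'$ used in Lemma~\ref{lem:shrink} to build $T$ consists of $\obn$-obstructions induced inside $G[T]$, hence all avoiding $v^\star\notin T$; as $(G\setminus v^\star)\setminus X$ is $\obn$-free, $X$ hits every member of $\cF'$, so a minimal subset of $X$ hitting $\cF'$ has size at most $k$ and, by Lemma~\ref{lem:fomin}(1), hits \emph{all} $\obn$-obstructions of $G$. Therefore the obstruction through $v^\star$ is an induced cycle $C_h$ with $h\ge 9$; write $v_1v_2v^\star v_3v_4$ for five consecutive vertices on it. Then $v_2,v_3\in N_G(v^\star)\subseteq T$ and are distinct, so since $v^\star$ is unmarked at least $k+2$ isolated vertices other than $v^\star$ are adjacent to both $v_2$ and $v_3$; none of them lies on $C_h$ (that would be a chord of an induced cycle of length $\ge 9$), so at least two of them, say $w_1,w_2$, avoid $X\cup V(C_h)$. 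Applying Lemma~\ref{lem:diamond} to the induced path $v_1v_2v^\star v_3v_4$ with $w_1,w_2$ produces an induced subgraph isomorphic to a graph in $\obn$ that avoids both $v^\star$ and $X$, hence lies inside the thread graph $(G\setminus v^\star)\setminus X$ — a contradiction. So $G\setminus X$ is a thread graph and $(G,k)$ is a \textsc{Yes}-instance. All steps are clearly polynomial.

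The step I expect to be the main obstacle is exactly the one ruling out small obstructions in $G\setminus X$ in part (2): the naive idea of ``swapping $v^\star$ out'' of such an obstruction fails because $v^\star$ may have a large neighbourhood inside $T$ and there are far too many possible neighbourhoods to mark within a $\mu(k)^2(k+2)$ budget. The fix above — observing that $\cF'$ lives entirely inside $G[T]$, so any \DEL{} of $G\setminus v^\star$ automatically hits $\cF'$ and hence (via Lemma~\ref{lem:fomin}) all of $\obn$ in $G$ — is the crux, and it is precisely what lets the weak marking over pairs suffice, since after this reduction only long induced cycles survive and there the vertex $v^\star$ has exactly two, necessarily distinct, cycle-neighbours to pin down.
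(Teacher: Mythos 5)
Your part (2) argument is essentially the paper's own: you build the same pair-indexed marking set $Z$ of size at most $\mu(k)^2(k+2)$, pick an unmarked isolated vertex $v^\star$, and for the non-trivial direction use the property of $T$ to conclude that any \DEL{} $X$ of $G\setminus v^\star$ already hits every graph of $\obn$ in $G$, so the only possible surviving obstruction is a long induced cycle through $v^\star$, and then Lemma~\ref{lem:diamond} yields a contradiction. The one cosmetic difference is that you unpack ``the property of $T$'' explicitly in terms of $\cF'$ and Lemma~\ref{lem:fomin}(1), while the paper cites Lemma~\ref{lem:shrink}(2) directly; these are the same thing. You are also slightly more careful than the paper in observing that the chosen common neighbours cannot lie on the cycle (this is needed and is only implicit in the paper).

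For part (1) you take a genuinely different route. The paper argues directly: if more than $2\abs{T}$ components of $G\setminus T$ have at least two vertices, then by pigeonhole (using that each such component sends a neighbour into $T$, by Reduction Rule~\ref{rule:threadcomponent}) some $u\in T$ has neighbours in three distinct non-trivial components; picking a neighbour $p_i$ of $u$ and a neighbour $q_i$ of $p_i$ from each of the three components gives a $7$-vertex set inducing one of $\beta_1,\dots,\beta_4$ with $S\cap T=\{u\}$, contradicting Reduction Rule~\ref{rule:onevertex}. You instead prove a structural statement --- deleting any one vertex from a connected thread or necklace graph leaves at most two non-trivial components --- and then note that $(G\setminus T)+u$ is $\obn$-free (by Rule~\ref{rule:onevertex} plus $G\setminus T$ being a thread graph), so Theorem~\ref{thm:mainlrw} applies to its components, and deleting $u$ from the component containing it yields at most two non-trivial pieces, which are exactly the components of $G\setminus T$ adjacent to $u$; charging then gives the same $2\mu(k)$ bound. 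Your approach is correct, but the structural lemma about vertex deletion in thread/necklace graphs is not stated in the paper and requires more than the bare citations you give (Lemmas~\ref{lem:structure} and~\ref{lem:nltothread} do not directly say it; you need to argue through the canonical decomposition, as your parenthetical sketch begins to do). The paper's pigeonhole-plus-$\beta_j$ argument avoids that structural detour entirely and is shorter. What your route buys in exchange is that it does not rely on knowing which specific seven-vertex graphs appear in $\obn$: it uses only that $G\setminus T$ is a thread graph, Theorem~\ref{thm:mainlrw}, and a decomposition fact, which would generalize more readily if the obstruction list changed.
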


\begin{proof}
(1) By Reduction Rule~\ref{rule:threadcomponent}, 
each connected component $H$ of $G\setminus T$ contains a vertex that has a neighbor in $T$.
Let $\mathcal{C}$ be the set of connected components of $G\setminus T$ which consist of at least two vertices, and suppose that $\abs{\mathcal{C}}> 2\mu(k)$.
Since every connected component of $H$ has a vertex having a neighbor in $T$, 
there exists a vertex $u\in T$ such that $u$ has neighbors in three distinct connected components of $\mathcal{C}$.
Since each connected component of $\mathcal{C}$ has at least two vertices, 
$G$ has a vertex set $S$ where $G[S]$ is isomorphic to a graph in $\{\beta_1, \beta_2, \beta_3, \beta_4\}$ and $S\cap T=\{u\}$. It contradicts to the assumption that $(G,k)$ is reduced
by Reduction Rule~\ref{rule:onevertex}.

\medskip
(2) Suppose that $G\setminus T$ has at least $\mu(k)^2\cdot (k+2)+1$ isolated vertices.
Let $S$ be the union of isolated vertices in $G\setminus T$. 
We may assume that every vertex in $S$ has a neighbor in $T$. 
 
We define a set $Z$ to identify an irrelevant vertex.
For each pair of two vertices in $T$, choose $k+2$ common neighbors in $S$, and add them to $Z$. If there are at most $k+1$ common neighbors, then we add all of them into $Z$.
Since $\abs{S}>\mu(k)^2\cdot (k+2)$, there is a vertex $w$ in $S\setminus Z$.

 We claim that $w$ is an irrelevant vertex of the problem. 
 If $(G,k)$ is a \textsc{Yes}-instance,
 then there exists a vertex subset $X$ of size at most $k$ in $G$ such that $G\setminus X$ is a thread graph.
 Since $G\setminus (X\cup \{w\})$ is also a thread graph, $(G\setminus w, k)$ is a \textsc{Yes}-instance.
 
 Suppose that $(G\setminus w, k)$ is a \textsc{Yes}-instance.
 We choose a minimal vertex set $X$ in $G\setminus w$ such that $\abs{X}\le k$ and $G\setminus (X\cup \{w\})$ is a thread graph.
 We may assume that $G\setminus X$ is not a thread graph.
 Let $X'\subseteq X\cup \{w\}$ be a hitting set for $\obn$ in $G[T]$.
 Then by the property of $T$, $X'$ also hits all induced subgraphs in $G$ that are isomorphic to a graph of $\obn$.
 Since $X$ already hits all small obstructions in $G$,
 there exists an induced cycle $C$ of length at least $9$ in $G\setminus X$ containing $w$.

 Let $w_1, w_2$ be the neighbors of $w$ on the cycle $C$.
 Since $w_1, w_2$ have $k+2$ common neighbors in $Z$, 
 we may choose two vertices $z_1, z_2\in Z\setminus X$ that are common neighbors of $w_1$ and $w_2$.
 By Lemma~\ref{lem:diamond}, we have that $G\setminus (X\cup \{w\})$ has an induced subgraph isomorphic to a graph in $\obn$,
 which implies that $G\setminus (X\cup \{w\})$ is not a thread graph. 
 It is a contradiction, and we conclude that $(G, k)$ is a \textsc{Yes}-instance.
\end{proof}

Let us now piece everything together and analyze the kernel size.

\begin{proof}[Proof of Theorem~\ref{thm:main22}]
  Let $(G,k)$ be an instance of {\sc LRW1-Vertex Deletion}. 
  By Reduction Rule~\ref{rule:threadcomponent}, we may safely assume that $G$ has no connected components that are thread graphs. 
  Then using the algorithm in Lemma~\ref{lem:shrink}, in polynomial time, either we conclude that $(G, k)$ is a \NO-instance or find a non-empty set $T\subseteq V(G)$ stated in Lemma~\ref{lem:shrink}.
  We apply Reduction Rule~\ref{rule:onevertex}.
Lemma~\ref{lem:oneintersect} guarantees that for every vertex set $S\subseteq V(G)$ such that $G[S]$ is isomorphic to a graph in $\{\beta_1, \beta_2, \beta_3, \beta_4\}$, $\abs{S\cap T}\ge 2$. 

Combining Proposition~\ref{prop:irrelevant} and  Lemma~\ref{lem:reducelength}, we can assume that every connected component of $G\setminus T$ has size at most  $(k+2)(\mu(k)+2)^2\cdot 19(6\mu(k)+1)$ (otherwise the instance can be reduced in polynomial time). 
Note that for each connected component $H$ of $G\setminus T$, there exists a vertex in $H$ that has a neighbor in $T$. Therefore, by Lemma~\ref{lem:reducecomponent1}, the number of non-trivial components of $G\setminus T$ is at most $2\mu(k)$ and
the number of isolated vertices in $G\setminus T$ is at most $\mu(k)^2(k+2)$.
It follows that
\begin{align*}
\abs{T}+\abs{V(G)\setminus T}&\le 
\mu(k) + \left( 2\mu(k) \cdot 19(6\mu(k)+1) \cdot (k+2)(\mu(k)+2)^2 + \mu(k)^2\cdot (k+2) \right) \\
&=\mathcal{O}(k\cdot \mu(k)^4 ) =\mathcal{O}(k^{33}). \qedhere
\end{align*}
\end{proof}
\section{Concluding remarks}\label{sec:remark}

We consider the problem {\sc Linear rankwidth-$w$ Vertex Deletion} when $w=1$. A next step is to investigate the problem for bigger $w$, or for any fixed $w$. A closely related problem is {\sc Rankwidth-$w$ Vertex Deletion}, which asks whether $G$ has a vertex subset of size at most $k$ such that $G\setminus S$ has rankwidth at most $w$. {\sc (Linear) Rankwidth-$w$ Vertex Deletion} is fixed-parameter tractable for the following reason. Note that any \textsc{Yes}-instance has rankwidth at most $w+k$. Having bounded (linear) rankwith can be characterized by a finite list of forbidden vertex-minors~\cite{Oum05}. From~\cite{CO2007}, having a vertex-minor can be expressed in $\sf{C}_2\sf{MSO}$, i.e., monadic second order logic without edge set quantification where we can express the parity of $|X|$ for a vertex set $X$. Fixed-parameter tractability follows as a consequence of Courcelle, Makowsky, Rotics~\cite{CourcelleMR00}. 

As for rankwidth, this result can be turned into a constructive algorithm as~\cite{Oum05} provides an explicit upper bound on the size of vertex-minor obstructions for rankwidth at most $k$ for fixed $k$. 
However, the exponential blow-up in the running time is huge with respect to both $w$ and $k$. It is a challenging question whether a reasonable dependency on $k$ can be achieved. A single-exponential time would be ideal, which was achievable for its treewidth counterpart. A first realistic goal is to consider the case when $w=1$, that is, the {\sc Distance-Hereditary Vertex Deletion}. We leave it as an open question whether this problem can be solved in time  $c^k\cdot {n}^{\mathcal{O}(1)}$ time for some constant $c$. For linear rankwidth, there is no known upper bound on the size of vertex-minor obstructions for linear rankwidth at most $k$, and thus, obtaining such an upper bound is an interesting open question. 

\section*{Acknowledgment}
The third author would like to thank Sang-il Oum for suggesting the refined branching algorithm using cliquewidth.

\end{document}